\documentclass[pdflatex,sn-basic]{sn-jnl}%

\usepackage{graphicx}%
\usepackage{multirow}%
\usepackage{amsmath,amssymb,amsfonts}%
\usepackage{amsthm}%
\usepackage{mathrsfs}%
\usepackage[title]{appendix}%
\usepackage{xcolor}%
\usepackage{textcomp}%
\usepackage{manyfoot}%
\usepackage{booktabs}%
\usepackage{algorithm}%
\usepackage{algorithmicx}%
\usepackage{algpseudocode}%
\usepackage{listings}%
\usepackage{tikz}
\usepackage{subcaption}
\usepackage{amsthm}
\usepackage{thmtools}
\usepackage{amsmath,amsfonts,amsthm,amssymb,multirow,commath}
\usepackage{algorithm}
\usepackage{algorithmicx}
\usepackage{algpseudocode}
\usepackage{thm-restate}
\usepackage{etoolbox}
\usepackage{enumitem,mathtools,bbm}
\usepackage{footmisc,dsfont}
\usepackage{hyperref}
\usepackage{cleveref}
\usepackage{setspace}
\usepackage{graphicx}
\usepackage{placeins}
\usepackage{xcolor}
\usepackage{dsfont}
\usepackage{units}

\definecolor{DustyRose}{HTML}{DC6B82}

\theoremstyle{thmstyleone}%
\newtheorem{theorem}{Theorem}%
\newtheorem{proposition}[theorem]{Proposition}%
\newtheorem{lemma}[theorem]{Lemma}
\newtheorem{corollary}[theorem]{Corollary}

\theoremstyle{thmstyletwo}%
\newtheorem{example}{Example}%
\newtheorem{assumption}{Assumption}
\newtheorem{conjecture}{Conjecture}

\theoremstyle{thmstylethree}%
\newtheorem{definition}{Definition}%

\begin{document}

\title[Nash Core in Multiwinner Election]{Nash Core in Multiwinner Election}

\author*[1]{\fnm{Ashish} \sur{Goel}}\email{ashishg@stanford.edu}

\author*[1]{\fnm{Zhihao} \sur{Jiang}}\email{zhihao@alumni.stanford.edu}

\author*[2]{\fnm{Chenghan} \sur{Zhou}}\email{chzhou@stanford.edu}

\affil*[1]{\orgdiv{Department of Management Science and Engineering}, \orgname{Stanford University}, \orgaddress{\street{475 Via Ortega}, \city{Stanford}, \postcode{94305}, \state{California}, \country{USA}}}

\affil*[2]{\orgdiv{Department of Computer Science}, \orgname{Stanford University}, \orgaddress{\street{353 Jane Stanford Way}, \city{Stanford}, \postcode{94305}, \state{California}, \country{USA}}}

\abstract{In the approval-based committee selection problem, a committee is said to be in the core if no subset of voters has an incentive to deviate by selecting a \emph{blocking} committee of proportional size, such that every voter in the deviating group strictly prefers the blocking committee.

We consider the setting where candidates can be selected fractionally. Under a mild regularity assumption, we show that there always exists a weighting of candidates such that the fractional committee maximizing the candidate-weighted Nash Social Welfare is in the core. We refer to such a solution as being in the \emph{Nash core}. Additionally, we show that a Nash core solution admits a payment assignment between voters and candidates, where each voter pays a candidate they approve in proportion to the weight.

For the discrete setting, where each candidate is either included or excluded from the committee, we prove that every approval-based committee election with at most eight equally weighted voters has a core committee by rounding the fractional Nash core solution. Although the non-emptiness of the core in this setting remains an open question and checking core membership is coNP-hard, we extend the notion of the Nash core to the discrete case, yielding a formulation that is efficiently verifiable and offers a promising path toward establishing core existence in discrete settings.

Finally, we test our approach on real voting data using a payment-guided heuristic. We empirically show that the Nash core solution can be efficiently computed through an iterative algorithm in both the fractional and discrete settings.

}

\keywords{Approval-based committee selection, Core}

\maketitle

\renewcommand{\P}[1]{\mathbf{Pr}\left(#1\right)}
\newcommand{\Px}[2]{\mathbf{Pr}_{#1}\left(#2\right)}
\newcommand{\E}[1]{\mathbb{E}\left[#1\right]}
\newcommand{\Ex}[2]{\mathbb{E}_{#1}\left[#2\right]}
\newcommand{\I}[1]{\mathbb{I}\left(#1\right)}

\newcommand{\ot}{\tilde{O}}

\newcommand{\alert}[1]{\textcolor{red}{#1}}

\newcommand{\mA}{\mathcal{A}}
\newcommand{\vecone}{\mathbbm{1}}
\newcommand{\onenorm}[1]{\|#1\|_{1}}
\newcommand{\cx}[1]{x(#1)}
\newcommand{\cxs}[1]{x^{*}(#1)}
\newcommand{\cy}[1]{y(#1)}
\newcommand{\real}{\mathbb{R}}
\newcommand{\intg}{\mathbb{N}}
\newcommand{\pZ}{(P0)}
\newcommand{\pa}{(P1)}
\newcommand{\pb}{(P2)}
\newcommand{\pc}{(P3)}
\renewcommand{\pd}{(P4)}
\newcommand{\pe}{(P5)}
\newcommand{\zerovec}{\mathbf{0}}
\newcommand{\unitvec}{\mathbf{e}}
\newcommand{\block}[1]{Bl\left(#1\right)}
\newcommand{\setw}{\mathcal{W}}

\section{Introduction}

Fairness in multi-winner approval voting has received growing attention, driven by the desire to ensure equitable representation for all voter groups. Among the most prominent fairness concepts are \emph{proportionality}, \emph{justified representation (JR)}, and the \emph{core}. Proportionality requires that cohesive groups of like-minded voters---those approving the same set of candidates---obtain representation roughly commensurate with their size in the electorate, thereby preventing minority groups from being entirely overshadowed by the majority \citep{aziz2017justified,faliszewski2017multiwinner}.

Building on this idea, JR strengthens the guarantee against disenfranchisement: it demands that any sufficiently large group must have at least one representative they collectively approve. This ensures that no significant coalition of voters is left entirely without direct representation in the elected committee \citep{aziz2017justified}. Meanwhile, the core focuses on \emph{stability}: a committee is in the core if no coalition of voters can unilaterally choose a different committee that all members of that coalition strictly prefer. Despite being conceptually well-defined and deeply studied, whether a core committee always exists in unrestricted approval voting elections remains open \citep{faliszewski2017multiwinner,lackner2023multi}. We resolve this question for elections with at most eight individual equally weighted voters while leaving the unrestricted problem open.

\subsection{Model and the Core}

In the approval-based committee selection problem, we are given a set of voters $N = [n] = \{1, 2, \dots, n\}$ and a set of candidates $C = [m]$. Each voter $v \in N$ submits a ballot $A_v \subseteq C$. We assume that $A_v\neq\emptyset$ for every voter $v\in N$. If a candidate $c \in C$ is in $A_v$, we say that $v$ \emph{approves} $c$. A committee is a subset of candidates $W\subseteq C$. Given two committees $W_1$ and $W_2$, we say a voter $v$ strictly prefers $W_1$ to $W_2$, denoted by $W_1\succ_v W_2$, if $v$ approves more candidates in $W_1$ than $W_2$.

We study both the fractional and discrete settings. In the fractional setting, candidates can be selected in fractional amounts, and we use $x\in [0,1]^m$ to denote a committee. In the discrete setting, a committee is denoted by $x\in \{0,1\}^m$. In both cases, the size of a committee is $\|x\|_1 = \sum_{c=1}^{m}x_c$. Under this notation, voter $v$ strictly prefers $x$ to $y$ if $\sum_{c\in A_v}x_c>\sum_{c\in A_v}y_c$.

Our objective is to select a committee of size $k$ that reflects the collective preferences of $n$ voters. A desired property of the committee is proportionality, which is a key principle in committee selection, tracing its origins back more than a century \citep{droop1881methods}. In recent years, various formalizations of this concept have been extensively studied \citep{aziz2017justified,aziz2018complexity,brams2007minimax,chamberlin1983representative,fain2016core,fain2018fair,monroe1995fully,sanchez2017proportional}. The central idea is that any group of voters should feel sufficiently represented, removing any motivation to break away and form a more favorable, smaller committee. When the precise demographic coalitions are not known in advance, a robust solution concept is typically employed, requiring the committee to remain stable against any potential voter subset that might deviate.

We study the core, an extensively studied notion of fairness \citep{foley1970lindahl,lindahl1958just,muench1972core,samuelson2024pure,scarf1967core}. We define the core for both the fractional and discrete settings.

\begin{definition}[fractional core] \label{def:fraccore}

A committee $x\in [0,1]^m$ with $\|x\|_1\leq k$ is in the core, if for any nonzero $y\in [0,1]^m$,
\begin{align*}
    \left| \left\{ v\in N: \sum_{c\in A_v}y_c>\sum_{c\in A_v}x_c  \right\} \right|< \frac{n}{k}\|y\|_1.
\end{align*}

\end{definition}

\begin{definition}[discrete core \citep{aziz2017justified}] \label{def:disc:def}

A committee $x\in \{0,1\}^m$ with $\|x\|_1\leq k$ is in the discrete core if, for every nonzero $y\in \{0,1\}^m$,
\begin{align*}
    \left| \left\{ v\in N: \sum_{c\in A_v}y_c>\sum_{c\in A_v}x_c  \right\} \right| < \frac{n}{k}\|y\|_1.
\end{align*}

\end{definition}

In the discrete setting, $k$ is required to be an integer.

These definitions admit a taxation interpretation \citep{foley1970lindahl,scarf1967core,muench1972core}. Each voter is assigned one dollar. When the goal is to form a committee of size $k$, each candidate is priced at $\frac{n}{k}$. A committee $x$ is deemed stable if there is no group of voters who strictly prefer another committee $y$ and can collectively purchase $y$. In other words, the number of voters in this group must be strictly less than the cost of $y$, which is $\frac{n}{k}\|y\|_1$.

The fractional core is always non-empty \citep{foley1970lindahl,fain2016core} and can be computed in polynomial time \citep{kroer2025computing}. In the discrete setting, unrestricted core nonemptiness remains open, and determining membership in the discrete core is coNP-hard \citep{brill2024approval}. Section~\ref{sec:bv-core-eight} resolves the existence question for elections with at most eight individual equally weighted voters.

\subsection{Candidate-Weighted Nash Social Welfare} \label{sec:intro:fraccore}

In this subsection, we only consider the fractional setting.

We say the utility of voter $v$ given committee $x$ is $u_v(x):=\sum_{c\in A_v}x_c$. Nash social welfare is defined as the product of voter utilities, or equivalently, the sum of log-utilities $\sum_{v\in N}\log\left(u_v(x)\right)$. When each candidate can be selected in any quantity---even exceeding one---or equivalently, when each candidate has infinite number of copies, the committee that maximizes the Nash social welfare is in the core \citep{fain2016core}. However, this approach fails when each candidate can be chosen at most one unit. This fact is noted in several works \citep{peters2020proportionality,peters2025core,lackner2023multi}. The following counterexample illustrates this issue.

\begin{example} \label{exp:nsw}

Consider the fraction case with $n=3$ voters, $m=5$ candidates, and $k=3$ committee members. The figures below shows voters' approval sets. $v_1$ approves $\{c_1,c_3\}$, $v_2$ approves $\{c_1,c_4\}$, and $v_3$ approves $c_2,c_5$. The committee $x=(1,1,\frac{1}{3},\frac{1}{3},\frac{1}{3})$ (see Figure \ref{fig:nsw:exp:a}) maximizes Nash social welfare, but it is not in the core, because $y=(1,0,\frac{1}{2},\frac{1}{2},0)$ is preferred by both $v_1$ and $v_2$, and they can afford $y$. A core solution is in Figure \ref{fig:nsw:exp:b}, which is $x=(1,1,\frac{1}{2},\frac{1}{2},0)$.

\definecolor{cmt}{rgb}{0.8,0.9,1.0}

\begin{figure}[htbp]
\centering
\tikzset{global scale/.style={
    scale=#1,
    every node/.append style={scale=#1}
  }
}

\begin{subfigure}[t]{0.32\textwidth}
\centering
\begin{tikzpicture}%
\fill[cmt] (0,0) rectangle (3,0.8);
\node at (0.5,-0.3) {$v_1$};
\node at (1.5,-0.3) {$v_2$};
\node at (2.5,-0.3) {$v_3$};
\draw (0,0) rectangle (2,0.6);
\node at (1,0.3) {$c_1$};
\draw (2,0) rectangle (3,0.6);
\node at (2.5,0.3) {$c_2$};
\draw (0,0.6) rectangle (1,1.2);
\node at (0.5,0.9) {$c_3$};
\draw (1,0.6) rectangle (2,1.2);
\node at (1.5,0.9) {$c_4$};
\draw (2,0.6) rectangle (3,1.2);
\node at (2.5,0.9) {$c_5$};
\end{tikzpicture}
\caption{max Nash social welfare}
\label{fig:nsw:exp:a}
\end{subfigure}
\begin{subfigure}[t]{0.32\textwidth}
\centering
\begin{tikzpicture}%
\fill[cmt] (0,0) rectangle (2,0.9);
\fill[cmt] (2,0) rectangle (3,0.6);
\node at (0.5,-0.3) {$v_1$};
\node at (1.5,-0.3) {$v_2$};
\node at (2.5,-0.3) {$v_3$};
\draw (0,0) rectangle (2,0.6);
\node at (1,0.3) {$c_1$};
\draw (2,0) rectangle (3,0.6);
\node at (2.5,0.3) {$c_2$};
\draw (0,0.6) rectangle (1,1.2);
\node at (0.5,0.9) {$c_3$};
\draw (1,0.6) rectangle (2,1.2);
\node at (1.5,0.9) {$c_4$};
\draw (2,0.6) rectangle (3,1.2);
\node at (2.5,0.9) {$c_5$};
\end{tikzpicture}
\caption{committee in the core}
\label{fig:nsw:exp:b}
\end{subfigure}
\begin{subfigure}[t]{0.32\textwidth}
\centering
\begin{tikzpicture}%
\fill[cmt] (0,0) rectangle (2,0.6);
\fill[cmt] (2,0) rectangle (3,0.6);
\node at (0.5,-0.3) {$v_1$};
\node at (1.5,-0.3) {$v_2$};
\node at (2.5,-0.3) {$v_3$};
\draw (0,0) rectangle (2,0.3);
\node at (1,0.15) {$c_1$};
\draw (2,0) rectangle (3,0.6);
\node at (2.5,0.3) {$c_2$};
\draw (0,0.3) rectangle (1,0.9);
\node at (0.5,0.6) {$c_3$};
\draw (1,0.3) rectangle (2,0.9);
\node at (1.5,0.6) {$c_4$};
\draw (2,0.6) rectangle (3,1.2);
\node at (2.5,0.9) {$c_5$};
\end{tikzpicture}
\caption{\centering max weighted Nash SW\\(height corresponds to weight)}
\label{fig:nsw:exp:c}
\end{subfigure}
\caption{Committee selection for Example \ref{exp:nsw}}
\end{figure}

\end{example}

We introduce a weight $w_c$ to each candidate $c$. Given a committee $x$, define the weighted utility of voter $v$ as $\phi_v(x):=\sum_{c\in A_v}w_cx_c$, and the candidate-weighted Nash social welfare is $\sum_{v\in N}\log\left( \phi_v(x) \right)$.

In the above Example \ref{exp:nsw}, define the weight vector as $w=(\frac{1}{2},1,1,1,1)$, and re-draw the core committee in Figure \ref{fig:nsw:exp:b} as Figure \ref{fig:nsw:exp:c}, illustrating the weight change of $c_1$. This committee maximizes the candidate-weighted Nash social welfare.

We interpret the role of the weight \(w\) as follows. When a candidate \(c\) is popular---approved by many voters---selecting \(c\) raises the utilities of those voters. Because the Nash social welfare is the product of all voter utilities, it can then be more beneficial to select candidates supported by different voters, rather than adding additional candidates approved by the same group already approving \(c\).

For instance, in Figure~\ref{fig:nsw:exp:a}, \(c_1\) is popular, so in addition to \(c_1\) the committee mostly includes candidates approved by voter \(v_3\). However, since voters \(v_1\) and \(v_2\) together deserve two seats, the resulting committee violates core stability.

By introducing the weight \(w\) and assigning a smaller weight to the popular candidate \(c\), the weighted Nash social welfare solution becomes more inclined to select additional candidates approved by those who already support \(c\). This removes the effective penalty for choosing popular candidates and thus yields a fairer committee overall.

Based on this candidate-weighted idea, we formally define the notion of \emph{Nash core} as following.

\begin{definition}[fractional Nash core] \label{def:nashcore}
    For $w\in (0, 1]^m$ and $x\in [0,1]^m$ where $\|x\|_1\leq k$, we say $(w,x)$ is in the fractional Nash core, if both of the following hold:
    \begin{itemize}
        \item (Weight-selection coupling) For each candidate $c$, $x_c=1$ if $w_c<1$.
        \item (Nash optimality) $z=x$ is an optimal solution of maximizing the candidate-weighted Nash social welfare $\sum_{v\in N}\log\left( \sum_{c\in A_v}w_cz_c \right)$\footnote{We use the convention $\log 0=-\infty$.} subject to $\|z\|_1\leq k$ and $z\geq 0$.
    \end{itemize}
\end{definition}

We call it Nash ``core'', although we do not require it to lie in the core. However, Theorem \ref{thm:frac:incore} proves that if $(w,x)$ is in the Nash core, then $x$ is in the core, thereby justifying the name.

We note $z\leq 1$ is not required, meaning $x$ is optimal even comparing with those $z$ with $z_c>1$ for some $c$.

\subsection{Proportional Payment} \label{sec:intro:proppay}

In this subsection, we still only consider the fractional setting, but study it in the perspective of payment rules. It is discussed in the discrete setting in \citep{peters2020proportionality,peters2021market,sanchez2017proportional}. As mentioned in \citep{peters2021market}, sometimes there are multiple committees in the core, and some of them may not be ``fair enough''. We illustrate this fact using the following example.

\begin{example} \label{exp:multicore}

Consider the fraction case with $n=2$ voters, $m=3$ candidates, and $k=2$ committee members. The figures below shows voters' approval sets. $v_1$ approves $\{c_1,c_2\}$, and $v_2$ approves $\{c_1,c_3\}$. The three committees in Figure \ref{fig:multicore:exp:a}-\ref{fig:multicore:exp:c} are all in the core. However, since $c_2$ and $c_3$ are symmetric, we believe the committee $x=(1,\frac{1}{2},\frac{1}{2})$ in Figure \ref{fig:multicore:exp:c} is more fair.

\definecolor{cmt}{rgb}{0.8,0.9,1.0}

\begin{figure}[htbp]
\centering
\tikzset{global scale/.style={
    scale=#1,
    every node/.append style={scale=#1}
  }
}

\begin{subfigure}[t]{0.20\textwidth}
\centering
\begin{tikzpicture}%
\fill[cmt] (0,0) rectangle (2,0.6);
\fill[cmt] (0,0.6) rectangle (1,1.2);
\node at (0.5,-0.3) {$v_1$};
\node at (1.5,-0.3) {$v_2$};
\draw (0,0) rectangle (2,0.6);
\node at (1,0.3) {$c_1$};
\draw (0,0.6) rectangle (1,1.2);
\node at (0.5,0.9) {$c_2$};
\draw (1,0.6) rectangle (2,1.2);
\node at (1.5,0.9) {$c_3$};
\end{tikzpicture}
\caption{core 1}
\label{fig:multicore:exp:a}
\end{subfigure}
\begin{subfigure}[t]{0.20\textwidth}
\centering
\begin{tikzpicture}%
\fill[cmt] (0,0) rectangle (2,0.6);
\fill[cmt] (1,0.6) rectangle (2,1.2);
\node at (0.5,-0.3) {$v_1$};
\node at (1.5,-0.3) {$v_2$};
\draw (0,0) rectangle (2,0.6);
\node at (1,0.3) {$c_1$};
\draw (0,0.6) rectangle (1,1.2);
\node at (0.5,0.9) {$c_2$};
\draw (1,0.6) rectangle (2,1.2);
\node at (1.5,0.9) {$c_3$};
\end{tikzpicture}
\caption{core 2}
\label{fig:multicore:exp:b}
\end{subfigure}
\begin{subfigure}[t]{0.20\textwidth}
\centering
\begin{tikzpicture}%
\fill[cmt] (0,0) rectangle (2,0.6);
\fill[cmt] (0,0.6) rectangle (2,0.9);
\node at (0.5,-0.3) {$v_1$};
\node at (1.5,-0.3) {$v_2$};
\draw (0,0) rectangle (2,0.6);
\node at (1,0.3) {$c_1$};
\draw (0,0.6) rectangle (1,1.2);
\node at (0.5,0.9) {$c_2$};
\draw (1,0.6) rectangle (2,1.2);
\node at (1.5,0.9) {$c_3$};
\end{tikzpicture}
\caption{core 3}
\label{fig:multicore:exp:c}
\end{subfigure}
\begin{subfigure}[t]{0.30\textwidth}
\centering
\begin{tikzpicture}%
\fill[cmt] (0,0) rectangle (2,0.6);
\node at (0.5,-0.3) {$v_1$};
\node at (1.5,-0.3) {$v_2$};
\draw (0,0) rectangle (2,0.3);
\node at (1,0.15) {$c_1$};
\draw (0,0.3) rectangle (1,0.9);
\node at (0.5,0.6) {$c_2$};
\draw (1,0.3) rectangle (2,0.9);
\node at (1.5,0.6) {$c_3$};
\end{tikzpicture}
\caption{\centering redraw core 3\\(height corresponds to payment)}
\label{fig:multicore:exp:d}
\end{subfigure}
\caption{Committee selection for Example \ref{exp:multicore}}
\end{figure}

\end{example}

Recall the taxation interpretation we introduced after defining the fractional core in Definition~\ref{def:fraccore}: each voter has one dollar, and the price of each candidate is \(\frac{n}{k}\). In Example~\ref{exp:multicore}, the price of each candidate is also set to one dollar. We interpret this by examining how the voters pay for the committee members. For the first core committee \(x=(1,1,0)\) shown in Figure~\ref{fig:multicore:exp:a}, voter \(v_1\) pays for \(c_2\), which is approved exclusively by \(v_1\), while voter \(v_2\) pays for \(c_1\), which is approved by both \(v_1\) and \(v_2\). Observe that \(v_1\) does not pay for \(c_1\) yet still gains utility from it; this illustrates how the committee's unfairness is also reflected in the payment structure.

\medskip

To address this issue, we propose a \emph{proportional payment} approach. Under this scheme, each voter spends her (one) dollar in proportion to the weighted selected amounts $w_cx_c$ on the committee members she approves. However, if many voters approve a particular candidate, the total amount they collectively spend on that candidate could exceed \(\frac{n}{k}\). To avoid such an overpayment, we introduce a \emph{discount} \(w_c \in (0,1]\) for each candidate \(c\). Concretely, we say a payment is a \emph{proportional payment} if each voter allocates her budget among the committee members she approves in proportion to the weighted selected amounts $w_cx_c$. Specifically, when \(x_c < 1\), we set \(w_c = 1\), meaning there is no discount and thus no adjustment is needed. On the other hand, if \(x_c = 1\) (i.e., candidate \(c\) is fully selected), we may reduce \(w_c\) below \(1\) to ensure that the total payment for \(c\) equals the nominal price of \(\frac{n}{k}\). This discount \(w_c\) effectively lowers the collective payment on the popular candidate and helps maintain fairness in the resulting committee.

We formally define the proportional payment as following.
\begin{definition}[proportional payment]    \label{def:payment}
    For $w\in (0, 1]^m$ and $x\in [0,1]^m$ where $\|x\|_1\leq k$, we define the weighted utility of voter $v$ as $\phi_v(x)=\sum_{c\in A_v}w_cx_c$. If $\phi_v(x)>0$ for all $v\in N$, we also define a \emph{payment function} $p_{vc}=\frac{w_cx_c}{\phi_v(x)}$ if $c\in A_v$, and $p_{vc}=0$ if $c\notin A_v$. We say $(w,x)$ is supported by a \emph{proportional payment}, if all of the following holds:
    \begin{itemize}
        \item (Positive utility) For each $v\in N$, $\phi_v(x)>0$.
        \item (Affordability) For any $c\in C$, $\sum_{v\in N}p_{vc}=\frac{n}{k}x_c$.
        \item (Stability) For any $c\in C$, if $x_c<1$, then $\sum_{v:c\in A_v}\frac{1}{\phi_v(x)}\leq \frac{n}{k}$.
        \item (Weight-selection coupling) $w_c<1$ only when $x_c=1$.
    \end{itemize}
\end{definition}

In this definition, the last condition ensures that the discount applies only to candidates that are fully selected. Note that this condition is already implied by affordability and stability when $0 < x_c < 1$, and we still include it in order to ensure $w_c = 1$ for $x_c = 0$.

Now, we motivate the first three conditions. For each voter $v$, the payment $p_{vc}$ is proportional to how much $x_c$ contributes to $\phi_v(x)$, which is the reason why we call it proportional payment. Note that for each voter $v\in N$, $\sum_{c\in C}p_{vc}=1$, corresponding to each voter has one dollar. The second condition, affordability, corresponds to the price of each candidate is $n/k$. As for the third condition, stability, if a candidate $c$ violates this condition, it implies that candidate $c$ has ``enough'' support to be fully bought. If $\sum_{v:c\in A_v} \frac{1}{\phi_v(x)} > \frac{n}{k}$, then the total ``willingness to pay'' of $c$’s supporters exceeds the price $n/k$ of $c$. That suggests $x_c$ should be raised if possible, contradicting the assumption that $x_c < 1$. The stability condition rules out such scenarios, ensuring that if a candidate is not fully selected, it is precisely because its supporters do not, in aggregate, have sufficient incentive to cover the cost.

This proportional payment corresponds to a separable personalized price in the Lindahl equilibrium, and we will discuss more details in Section \ref{sec:lindahl}.

It might be observed that both the fractional Nash core in Definition \ref{def:nashcore} and the proportional payment in Definition \ref{def:payment} are defined for pair $(w,x)$. We will show they are equivalent definitions (Theorem \ref{thm:frac:equiv}).

\subsection{Our Results}

We summarize our results in this subsection, including results mentioned in previous subsections.

For the fractional Nash-core existence result, we impose the following assumption.

{
\begin{assumption}\label{assup:1}
    For any non-empty voter set $\emptyset\subsetneq S\subseteq N$, $|\bigcup_{v\in S}A_v|>\frac{k|S|}{n}$.
\end{assumption}
}

Assumption~\ref{assup:1} excludes coalitions whose approval union fits within their proportional committee share. It is needed only for fractional Nash-core existence in Section~\ref{sec:frac:exist}. For fractional-core existence, one can inductively reduce along a violating coalition to a smaller instance. The bounded-voter theorem in Section~\ref{sec:bv-core-eight} uses the same reduction.

In Section \ref{sec:intro:fraccore}, we introduced the fractional Nash core, and in Section \ref{sec:intro:proppay}, we discussed the concept of proportional payment. As a reminder, We motivated the fractional Nash core via Nash social welfare and motivated proportional payment via separable market price. We now show that these two concepts are equivalent.

\begin{restatable}{theorem}{fracequiv}
\label{thm:frac:equiv}
    For $w\in (0, 1]^m$ and $x\in [0,1]^m$ where $\|x\|_1\leq k$, $(w,x)$ is in the fractional Nash core if and only if $(w,x)$ is supported by a proportional payment.
\end{restatable}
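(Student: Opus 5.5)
The plan is to characterize Nash optimality through KKT conditions and match them term by term with the payment conditions. The program $\max_z \sum_{v\in N}\log\bigl(\sum_{c\in A_v} w_c z_c\bigr)$ subject to $\|z\|_1\le k$ and $z\ge 0$ has a concave objective and linear constraints. Hence (once the objective is finite at $z$) a feasible $z$ is optimal if and only if there is a multiplier $\lambda\ge 0$ with
\begin{align*}
w_c\sum_{v:c\in A_v}\frac{1}{\phi_v(z)}\le \lambda \text{ for all } c, \qquad \text{with equality whenever } z_c>0,
\end{align*}
together with $\lambda(k-\|z\|_1)=0$.

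Multiplying the KKT condition by $z_c$ and summing over $c$ gives $\sum_{v}\sum_{c\in A_v} w_cz_c/\phi_v(z)=n$. Since the left side also equals $\lambda\|z\|_1$, we get $\lambda\|z\|_1=n$. Thus $\lambda>0$, so $\|z\|_1=k$ and $\lambda=n/k$.

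\textbf{Direction ($\Rightarrow$).} Assume $(w,x)$ is in the fractional Nash core.
\begin{itemize}
\item Positive utility holds because some feasible $z$ has finite objective. For example, $z=\frac{k}{m}\vecone$ gives $\phi_v>0$ since $A_v\neq\emptyset$ and $w>0$, so an optimum cannot have $\phi_v(x)=0$.
\item By the KKT identity with $\lambda=n/k$, every $c$ with $x_c>0$ satisfies $w_c\sum_{v:c\in A_v}1/\phi_v(x)=n/k$. Multiplying by $x_c$ gives $\sum_v p_{vc}=\frac nk x_c$, and the same identity holds trivially when $x_c=0$. This is affordability.
\item For stability, take $x_c<1$. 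Weight-selection coupling forces $w_c=1$, and the KKT inequality then gives $\sum_{v:c\in A_v}1/\phi_v(x)\le n/k$.
\item Coupling is assumed directly.
\end{itemize}

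\textbf{Direction ($\Leftarrow$).} Assume $(w,x)$ is supported by a proportional payment. Summing affordability over $c$ gives $\frac nk\|x\|_1=\sum_v\sum_c p_{vc}=n$, so $\|x\|_1=k$. To verify the KKT conditions with $\lambda=n/k$, split the candidates into three cases.
\begin{itemize}
\item If $x_c>0$, affordability divided by $x_c$ gives equality in the KKT condition.
\item If $x_c=0$, coupling gives $w_c=1$, and stability gives the inequality.
\item Complementary slackness holds since $\|x\|_1=k$.
\end{itemize}
The objective is finite at $x$ by positive utility. Concavity then makes these conditions sufficient, so $x$ is Nash optimal. Coupling is again assumed directly.

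The main care point is justifying that KKT is both necessary and sufficient despite the $\log 0=-\infty$ convention. Sufficiency follows directly from concavity: for any feasible $z$ with finite objective,
\begin{align*}
\sum_v\log\phi_v(z)-\sum_v\log\phi_v(x)\le\sum_c (z_c-x_c)\,w_c\sum_{v:c\in A_v}\frac{1}{\phi_v(x)}\le \frac nk(\|z\|_1-\|x\|_1)\le 0.
\end{align*}
For necessity, the objective is smooth near any $x$ with all $\phi_v(x)>0$, and the constraints are linear, so Slater-free KKT applies. Alternatively, one can argue directly by perturbation: moving a small mass $\epsilon$ from a candidate with $x_c>0$ to any other candidate cannot increase the objective, and neither can scaling $x$ up when $\|x\|_1<k$.
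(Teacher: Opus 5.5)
Your proof is correct and follows the paper's own route: you match the KKT conditions of the candidate-weighted Nash program, with multiplier $\lambda=n/k$ obtained by multiplying by $x_c$ and summing, to affordability and stability, and you use weight--selection coupling to set $w_c=1$ wherever $x_c<1$. The differences are minor. You get $\|x\|_1=k$ from $\lambda\|x\|_1=n$ plus complementary slackness, whereas the paper uses a scaling argument. You also spell out sufficiency through the concavity gradient inequality, which handles the $\log 0=-\infty$ convention, and necessity through the linear-constraint qualification; the paper leaves both implicit.
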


We prove this theorem by considering the KKT condition of the optimization problem in the definition of the fractional Nash core. The proof is deferred to Appendix \ref{sec:proof:frac:equiv}.

We also show fractional Nash core solution is in the fractional core in Section \ref{sec:frac:incore}.

\begin{restatable}{theorem}{fracincore}
\label{thm:frac:incore}
    For $w\in (0, 1]^m$ and $x\in [0,1]^m$ where $\|x\|_1\leq k$, if $(w,x)$ is in the fractional Nash core, then $x$ is in the fractional core.
\end{restatable}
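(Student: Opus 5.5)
The plan is to argue by contradiction and to reduce everything to a single inequality, $w_c\sum_{v:c\in A_v}1/\phi_v(x)\le \frac{n}{k}$ for every candidate $c$. This inequality follows from Theorem~\ref{thm:frac:equiv}. The argument is a weighted version of the classical Nash-welfare core argument of \citep{fain2016core}, with the weight-selection coupling used to handle the discounted candidates.

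\textbf{Step 1: the price inequality.} Since $(w,x)$ is in the fractional Nash core, Theorem~\ref{thm:frac:equiv} says $(w,x)$ is supported by a proportional payment. In particular $\phi_v(x)>0$ for all $v$. We now show that for every candidate $c$,
\[
w_c\sum_{v:c\in A_v}\frac{1}{\phi_v(x)}\le \frac{n}{k}.
\]
We split into two cases.
\begin{itemize}
\item If $x_c<1$, then $w_c=1$ by weight-selection coupling, and stability gives the bound directly.
\item If $x_c=1$, then affordability reads $\sum_{v:c\in A_v} w_c/\phi_v(x)=\frac{n}{k}x_c=\frac{n}{k}$, which is the bound with equality.
\end{itemize}
One could instead read this off the KKT conditions of the weighted Nash program, but the payment characterization already packages it.

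\textbf{Step 2: blocking voters gain in weighted utility.} Suppose $y\ne 0$ and let $S$ be the set of voters with $u_v(y)>u_v(x)$. The key observation is that $\phi_v(y)>\phi_v(x)$ for every $v\in S$. To see this, write
\[
\phi_v(y)-\phi_v(x)=\sum_{c\in A_v}w_c(y_c-x_c).
\]
Consider the terms with $w_c<1$. Weight-selection coupling gives $x_c=1$, and since $y_c\le 1$ we get $y_c-x_c\le 0$. Hence $w_c(y_c-x_c)\ge y_c-x_c$ for these terms. For all other terms $w_c=1$, so the two quantities coincide. Therefore
\[
\phi_v(y)-\phi_v(x)\ge \sum_{c\in A_v}(y_c-x_c)=u_v(y)-u_v(x)>0.
\]
I expect this to be the only delicate point. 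The weights could a priori break the comparison between weighted and unweighted utilities, and the coupling condition is exactly what rescues it.

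\textbf{Step 3: aggregate.} For each $v\in S$, Step~2 gives $\sum_{c\in A_v}w_cy_c/\phi_v(x)>1$. Summing over $v\in S$, swapping the order of summation, enlarging the inner sum from $S$ to all of $N$, and then applying Step~1 yields
\[
|S|<\sum_{c}y_c\, w_c\sum_{v\in S:\,c\in A_v}\frac{1}{\phi_v(x)}
\le \sum_c y_c\, w_c\sum_{v:c\in A_v}\frac{1}{\phi_v(x)}
\le \frac{n}{k}\|y\|_1 .
\]
When $S=\emptyset$ the bound $|S|<\frac{n}{k}\|y\|_1$ holds trivially, since $y\ne 0$. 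In either case this is precisely the condition of Definition~\ref{def:fraccore}, so $x$ is in the fractional core.
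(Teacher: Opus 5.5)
Your proof is correct and is essentially the paper's argument. The same weight--selection coupling step gives $\phi_v(y)>\phi_v(x)$ for each deviating voter, and the contradiction comes from the same first-order bound $\sum_{c} y_c w_c\sum_{v:c\in A_v}1/\phi_v(x)\le\frac{n}{k}\|y\|_1$. The only difference is how you obtain that bound: you read it off coordinatewise from Theorem~\ref{thm:frac:equiv} (i.e., the KKT conditions), whereas the paper gets it directly from the nonpositivity of the directional derivative of the weighted Nash objective at $x$ along $\frac{k}{\|y\|_1}y-x$, using that the gradient paired with $x$ equals $n$.
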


Most importantly, we prove that the fractional Nash core is non-empty under Assumption~\ref{assup:1} in Section \ref{sec:frac:exist}.

\begin{restatable}{theorem}{fracexist}
\label{thm:frac:exist}
    Under Assumption~\ref{assup:1}, there exists $w\in (0, 1]^m$ and $x\in [0,1]^m$ where $\|x\|_1\leq k$, such that $(w,x)$ is in the fractional Nash core.
\end{restatable}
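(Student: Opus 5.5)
Via Theorem~\ref{thm:frac:equiv}, it suffices to produce $(w,x)$ supported by a proportional payment. I would get such a pair as a fixed point. Define a map $F$ on the compact convex set $\mathcal{D}=\{(w,x): w\in[\epsilon,1]^m,\ x\in[0,1]^m,\ \|x\|_1\le k\}$, where $\epsilon>0$ is a small lower bound on the weights to be justified later.

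Given $(w,x)$, let $x'$ be the maximizer of the concave program $\max \sum_v \log\bigl(\sum_{c\in A_v} w_c z_c\bigr)$ subject to $\|z\|_1\le k$, $0\le z\le 1$. Its utility profile $\phi_v$ is unique by strict concavity of $\log$. Then update the weights. For each $c$ with $x'_c=1$, compare the voters' total ``willingness to pay'' $\sum_{v:c\in A_v} w_c/\phi_v$ with the price $\frac{n}{k}$. If it exceeds the price, lower $w_c$; if it falls below, raise it, capped at $1$.

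A cleaner alternative avoids set-valued maps and non-uniqueness of $x'$. Here I would use Kakutani on the correspondence $(w,x)\mapsto \bigl(\arg\max \text{ set}\bigr)\times\{\text{weight update}\}$. The argmax set is convex and upper hemicontinuous by Berge's maximum theorem, provided every voter can be given positive utility. That holds since each $A_v\neq\emptyset$ and $k>0$.

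At a fixed point, I would verify the four conditions of Definition~\ref{def:payment}. The capped program's KKT conditions give three things. There is a multiplier $\lambda$ for the size constraint. Each variable has a multiplier $\mu_c\ge 0$ for $z_c\le 1$. Finally, $\sum_{v:c\in A_v} w_c/\phi_v \le \lambda+\mu_c$ holds, with equality when $x_c>0$. The fixed point of the weight update gives $\sum_{v} w_c/\phi_v=\frac{n}{k}$ for fully selected candidates with $w_c<1$, so it absorbs $\mu_c$. Multiplying by $x_c$ and summing over $c$ gives $n=\lambda\|x\|_1+\sum_c\mu_c x_c$. From this one should deduce $\lambda=\frac{n}{k}$ whenever $\|x\|_1=k$. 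This yields affordability, $\sum_v p_{vc}=x_c\sum_v w_c/\phi_v=\frac{n}{k}x_c$, and stability for $x_c<1$.

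Two side facts also need checking. For Nash optimality as in Definition~\ref{def:nashcore}, the bound $z\le 1$ must be redundant once the weights are discounted. This is exactly what the equality $\sum_v w_c/\phi_v=\frac n k$ at $w_c<1$ ensures, with Theorem~\ref{thm:frac:equiv} handling the rest. Also, $\|x\|_1=k$ should hold, since $\|x\|_1<k$ would allow increasing some $x_c<1$. Assumption~\ref{assup:1} guarantees $m>k$ and more.

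The main obstacle is the weight lower bound and the boundary behaviour. I must rule out fixed points where some $w_c$ is pinned at $\epsilon$ while the payments still exceed $\frac{n}{k}$. I must also rule out cases where all candidates approved by a coalition $S$ are fully selected with overpayment everywhere. This is where Assumption~\ref{assup:1} enters. Suppose the weights of all of $\bigcup_{v\in S}A_v$ were driven to $0$, with those voters spending their $|S|$ dollars only inside that union. Then the union would have to be paid for at total price $\frac{n}{k}|\bigcup_{v\in S} A_v|>|S|$, which is a contradiction.

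Quantitatively, I would first consider the set $S$ of voters whose weighted utility is tiny, and show that their money cannot cover their union. That forces some candidate in the union to be underpaid, so its weight sits at $1$. That in turn lower-bounds these voters' $\phi_v$, and hence bounds the weights uniformly away from $0$ by an $\epsilon$ depending only on $n,m,k$. Pinning down this uniform $\epsilon$, possibly by a limiting argument $\epsilon\to 0$ with compactness of the fixed points, is the step I expect to be the hardest.
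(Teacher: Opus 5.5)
\textbf{There is a genuine gap: the step where Assumption~\ref{assup:1} is needed is not proved, and the sketch for it does not go through.}

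Your overall architecture matches the paper's: a fixed point with weights truncated at a floor $\epsilon$, then $\epsilon\to 0$ with compactness. The paper does this via Rosen's theorem for a concave game with floor $1/(t+1)$ and a convergent subsequence.

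The failing step is this one. From ``the voters in $S$ have only $|S|<\frac{n}{k}|\bigcup_{v\in S}A_v|$ dollars'' you cannot conclude that some candidate in the union is underpaid. Voters outside $S$ also pay for those candidates. The missing idea is to show that these outside payments are negligible, which requires a separation of scales between $S$ and everyone else.

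The paper obtains this separation from the limit:
\begin{itemize}
\item Set $W:=\{c:w^{(\infty)}_c=0\}$ and $S:=\{v:A_v\subseteq W\}$.
\item Each $c\in W$ eventually has $w_c<1$, hence $x_c\ge 1$, hence collects $\frac{n}{k}x_c\ge\frac{n}{k}$.
\item Each $v\notin S$ approves some $c'$ with $w^{(\infty)}_{c'}>0$. The KKT bound $w_{c'}/\phi_v\le\sum_{v':c'\in A_{v'}}w_{c'}/\phi_{v'}\le\frac{n}{k}$ keeps $\phi_v$ bounded away from $0$.
\item Therefore $v$'s payment $\sum_{c\in A_v\cap W}w_cx_c/\phi_v$ to $W$ tends to $0$.
\item Summing gives $\frac{n}{k}|W|\le|S|$ in the limit, contradicting Assumption~\ref{assup:1}.
\end{itemize}
No uniform $\epsilon$ depending only on $n,m,k$ is needed. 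If you want one, you must build the separation yourself, e.g.\ by pigeonholing a large multiplicative gap among the $n$ values $\phi_v$ and letting $S$ be the voters below it.

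\textbf{Your fixed-point map is also not yet well defined.} An update triggered by ``$x'_c=1$'' is discontinuous, and $x'$ is not unique. So the closed-graph and convex-valuedness hypotheses of Kakutani are unverified. With the cap $z\le 1$ you must additionally track multipliers $\mu_c$, which can be unbounded on candidates pinned at $\epsilon$.

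The paper sidesteps all of this:
\begin{itemize}
\item Player~$0$ maximizes the geometric mean of weighted utilities (continuous, unlike $\sum\log$ with $\log 0=-\infty$) over $\{x\in[0,k]^m:\|x\|_1\le k\}$, with no cap. Nash optimality in Definition~\ref{def:nashcore} is then literally her best response, and $\lambda=\frac{n}{k}$ comes from KKT for free.
\item Each candidate $c$ is a player with payoff $-(1-x_c)(1-w_c)$, affine in $w_c$. Its best response is $w_c=1$ if $x_c<1$ and $w_c$ at the floor if $x_c>1$.
\item Once the limit weights are known to be positive, for large $t$ no weight sits at the floor. This forces $x_c\le 1$ and the coupling $w_c<1\Rightarrow x_c=1$.
\end{itemize}
In this formulation no ``willingness-to-pay'' comparison needs to be engineered into the weight update; it falls out of Player~$0$'s first-order conditions.
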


We prove this theorem by formulating a concave game \citep{rosen1965existence}, in which a pure Nash equilibrium exists, and we show this pure Nash equilibrium corresponds to a fractional Nash core solution.

From a computational perspective, \citep{kroer2025computing} introduces a convex program whose optimal solution lies in the fractional core. We show that every fractional Nash core solution corresponds to an optimal solution of this program.

\begin{theorem}[informal]\label{thm:compute}
    Every fractional Nash core solution is an optimal solution of the convex program introduced in \citep{kroer2025computing}.
\end{theorem}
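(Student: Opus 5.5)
The plan is to verify optimality through KKT conditions. We use Theorem~\ref{thm:frac:equiv} to turn a fractional Nash core solution $(w,x)$ into a proportional payment. From that payment we build explicit dual multipliers for the convex program of \citep{kroer2025computing}. Since that program is convex with a concave (logarithmic) objective and linear constraints, Slater's condition holds and the KKT conditions are sufficient for optimality. So it is enough to exhibit a primal–dual pair satisfying stationarity, complementary slackness and feasibility.

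\textbf{Step 1 (primal).} We write the program in its Lagrangian form: a logarithmic objective in voter utilities, the budget constraint $\|x\|_1\le k$ (with price $\frac nk$ per unit), and caps $x_c\le 1$. Its natural dual variables are a price per unit of committee size, one multiplier $\mu_c\ge 0$ per cap constraint $x_c\le 1$, and per-voter marginal values. The vector $x$ is primal feasible because $x\in[0,1]^m$ and $\|x\|_1\le k$.

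\textbf{Step 2 (candidate dual certificate).} We set $\lambda_v=1/\phi_v(x)$, which is positive by the positive-utility condition. Affordability gives $w_c x_c\sum_{v:c\in A_v}\lambda_v=\frac nk x_c$. Hence for every $c$ with $x_c>0$ we get $\sum_{v:c\in A_v}\lambda_v=\frac{n}{k w_c}$. We define the cap multiplier
\begin{align*}
\mu_c=\frac nk\left(\frac1{w_c}-1\right)\ge 0 .
\end{align*}
With this choice, the marginal support of $c$ equals the price $\frac nk$ plus $\mu_c$. Weight-selection coupling gives $\mu_c>0$ only if $x_c=1$, which is exactly complementary slackness for the cap.

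\textbf{Step 3 (remaining conditions).} For candidates with $x_c<1$ we have $w_c=1$, so $\mu_c=0$. Stability then gives $\sum_{v:c\in A_v}\lambda_v\le \frac nk$, with equality when $x_c>0$ by Step~2. This is the stationarity inequality for variables sitting at their lower bound or in the interior. Summing the payments over all voters, each voter's payments add up to $1$, so $n=\frac nk\|x\|_1$ and $\|x\|_1=k$. This makes the budget constraint tight, consistent with a positive price. Combining Steps~2–3 gives all KKT conditions, and hence $x$ (together with the induced utilities or spending variables of the program) is optimal.

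\textbf{Main obstacle.} The hard part is matching variables exactly with the formulation in \citep{kroer2025computing}. That program may be written in spending or utility variables rather than directly in $x$, possibly with auxiliary slack variables for the caps. The weights $w_c$ do not appear there explicitly and must be recovered as the ratio $\frac{n/k}{n/k+\mu_c}$. I would first rewrite that program in the multiplier language above. If its variables are spending variables, I would map them to our payments as $b_{vc}=p_{vc}$, and check that per-voter budgets and per-candidate prices line up with Definition~\ref{def:payment}.
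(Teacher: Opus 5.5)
Your high-level plan (KKT sufficiency, multipliers built from the proportional payment, spending variables set to $b_{vc}=p_{vc}$) matches the paper's. However, the argument you actually carry out never touches the Kroer--Peters program, so there is a genuine gap.

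Steps~1--3 work with a program in $x$ alone that has ``a logarithmic objective in voter utilities'', a budget and caps. Read literally, that is capped Nash welfare $\max\sum_{v}\log u_v(x)$ subject to $\|x\|_1\le k$, $0\le x\le 1$. Its per-voter multipliers would have to be $1/u_v(x)$, not $1/\phi_v(x)$, and for that program the statement is false. In Example~\ref{exp:nsw}, the Nash core allocation $x=(1,1,\tfrac{1}{2},\tfrac{1}{2},0)$ has unweighted marginal $\tfrac{2}{3}$ at the interior candidate $c_3$ but $1$ at the unselected candidate $c_5$. Shifting mass from $c_3$ to $c_5$ therefore increases $\sum_v\log u_v$.

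What Steps~2--3 really check is $\sum_{v:c\in A_v}1/\phi_v(x)=\tfrac{n}{k}+\tfrac{n}{k}(1/w_c-1)$ on the support and $\le\tfrac{n}{k}$ off it. That is just $w_c\sum_{v:c\in A_v}1/\phi_v(x)=\tfrac{n}{k}$ plus stability. These are the proportional-payment (equivalently, weighted Nash-optimality) conditions you started from, merely relabelled with a cap multiplier. Nothing about the program in the statement has been established.

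The missing work is the verification for the actual program \eqref{eq:primal}, which lives in spending variables. It maximizes $-\sum_v\sum_{c\in A_v}b_{vc}\log(b_{vc}/x_c)$ subject to $\sum_{c\in A_v}b_{vc}=1$, $\sum_{v:c\in A_v}b_{vc}=\tfrac{n}{k}x_c$, $b\ge 0$, $0\le x\le 1$. Take multipliers $\lambda_v$ for the voter-budget equality, $\mu_c$ for the funding equality, and $\nu_c\ge 0$ for the cap. With $b=p$, primal feasibility is exactly the one-dollar budgets plus affordability.

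The substantive check is stationarity in $b_{vc}$, namely $\log(b_{vc}/x_c)+1+\lambda_v-\mu_c=0$. This demands that $b_{vc}/x_c$ factor as a voter term times a candidate term. Proportional payments give precisely $b_{vc}/x_c=w_c/\phi_v(x)$, so $\lambda_v=\log\phi_v(x)$ and $\mu_c=1+\log w_c$ work. Stationarity in $x_c$, $\mu_c+\tfrac{k}{n}\nu_c=1$, then forces the cap multiplier $\nu_c=-\tfrac{n}{k}\log w_c\ge 0$. This is nonzero only when $w_c<1$, hence only when $x_c=1$ by weight-selection coupling.

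So the correspondence you anticipated is wrong. The cap multiplier is logarithmic in $w_c$, not $\tfrac{n}{k}(1/w_c-1)$. Likewise, $w_c=e^{\mu_c-1}$ is read off from the multiplier of the funding equality, not recovered as $\tfrac{n/k}{n/k+\mu_c}$. This step, where the proportional form of the payments is actually used, is the heart of the proof, and your proposal only defers it.
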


We prove this theorem in Section \ref{sec:compute}.

Our main exact result in the discrete setting is that every approval-based committee election with at most eight equally weighted voters has a committee in the discrete core (Section~\ref{sec:bv-core-eight}). The parameter is the number of individual voters, not the number of distinct approval types with arbitrary multiplicities. The proof combines the fractional Nash-core structure with exact finite verification after reducing the unresolved problem to a binary residual matrix with at most eight columns.

As a complementary approach for unrestricted elections, we define the discrete Nash core and a weaker variant called the weak Nash core in Section~\ref{sec:disc}. The discrete Nash core selects a committee that maximizes the sum of candidate-weighted harmonic utilities. In contrast, the weak Nash core requires that no unselected candidate would significantly increase this utility if added to the committee. Furthermore, checking whether $(w,x)$ belongs to the weak Nash core can be done efficiently. We now present the following property.

\begin{restatable}{theorem}{discincore}
\label{thm:disc:incore}
    Suppose $w\in (0,1]^m$ and $x\in \{0,1\}^m$ where $\|x\|_1\leq k$. If $(w,x)$ is in the discrete Nash core, then it is in the weak Nash core. If $(w,x)$ is in the weak Nash core, then $x$ is in the discrete core.
\end{restatable}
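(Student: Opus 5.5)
We have not yet seen the formal definitions of Section~\ref{sec:disc}, so we write them in the form we expect; the plan below should transfer if the exact normalization differs. For a discrete committee $x$, let $\phi_v(x)=\sum_{c\in A_v}w_cx_c$, and let the candidate-weighted harmonic utility of $v$ be a concave ``integrated'' version of $\log\phi_v$. Its increment from adding an unselected candidate $c$ is then of the form $\frac{w_c}{\phi_v(x)+w_c}$ (the discrete analogue of $\frac{w_c}{\phi_v(x)}$ in the KKT conditions behind Theorem~\ref{thm:frac:equiv}). We expect the weak Nash core to consist of three conditions. The first is the weight-selection coupling ($w_c<1\Rightarrow x_c=1$). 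The second is affordability of selected candidates, i.e.\ payments $p_{vc}=w_cx_c/\phi_v(x)$ with $\sum_v p_{vc}\le \frac nk$. The third is the marginal condition $\sum_{v:c\in A_v}\frac{w_c}{\phi_v(x)+w_c}<\frac nk$ for every $c$ with $x_c=0$. The proof splits into the two implications of Theorem~\ref{thm:disc:incore}.

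\emph{Discrete Nash core $\Rightarrow$ weak Nash core.} This is a local-optimality argument mirroring the ``only if'' direction of Theorem~\ref{thm:frac:equiv}. The coupling condition is inherited verbatim. For an unselected $c$, we compare $x$ with the committee obtained by adding $c$ and removing the selected candidate whose removal costs least in objective (or with $x+e_c$ directly if the budget is slack). Optimality says the gain, which is exactly $\sum_{v\ni c}\frac{w_c}{\phi_v(x)+w_c}$, is at most the loss. The loss from removing a candidate is bounded by its payment share, and averaging over selected candidates (whose payments total at most $n$ over at most $k$ seats) gives a candidate whose removal loss is at most $\frac nk$. Affordability of selected candidates should follow similarly by comparing with removal. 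The strictness in the marginal condition should come from the tie-breaking convention in the definition.

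\emph{Weak Nash core $\Rightarrow$ discrete core.} Here we follow the proof of Theorem~\ref{thm:frac:incore}. Suppose $y\in\{0,1\}^m$ is nonzero and $S$ is the set of voters with $u_v(y)\ge u_v(x)+1$; we want $|S|<\frac nk\|y\|_1$. For each $v\in S$ and $c\in A_v\cap y$, define a share $q_{vc}$ as follows: $q_{vc}=p_{vc}$ if $x_c=1$, and $q_{vc}=\frac{w_c}{\phi_v(x)+w_c}$ if $x_c=0$. The key per-voter inequality is
\begin{align*}
\sum_{c\in A_v,\,y_c=1} q_{vc}\;\ge\;1 \qquad (v\in S).
\end{align*}
It should hold because $v$ has at least one more approved candidate in $y$ than in $x$, while all weights lie in $(0,1]$ and each unselected candidate has weight $1$ by coupling. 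The extreme case is when $y\cap A_v$ contains all of $x\cap A_v$ plus one unselected candidate $c$. That case gives $\frac{\phi_v}{\phi_v}\cdot$(selected part)$\,\cdot$ rescaled by $\frac{\phi_v}{\phi_v+1}$, plus $\frac{1}{\phi_v+1}$, which sums to exactly $1$. We would prove the general case by an exchange argument: replace selected candidates of $y$ by unselected ones and check monotonicity, using $\frac{w_c}{\phi_v}\ge \frac{w_c}{\phi_v+1}$ and $w_c\le 1$.

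Summing the per-voter inequality over $S$ and swapping sums gives $|S|\le\sum_{c:y_c=1}\sum_{v}q_{vc}$. Each inner sum is at most $\frac nk$ for selected $c$ by affordability, and strictly less than $\frac nk$ for unselected $c$ by the marginal condition. Strictness overall needs care when $y\subseteq x$. In that case no voter can strictly improve, so $S=\emptyset$ and the inequality $|S|<\frac nk\|y\|_1$ is trivial.

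The main obstacle is the per-voter inequality. The payment shares of selected candidates are normalized by $\phi_v(x)$, but the unselected shares are normalized by $\phi_v(x)+1$, so we must show the ``$+1$ approved candidate'' always buys a full dollar of shares. We would first try the exchange argument described above. If that fails, we would fall back to the cruder weighting $q_{vc}=\frac{w_c}{\phi_v(x)+w_c}$ for all $c$, and prove $\sum_{c\in A_v\cap y}\frac{w_c}{\phi_v(x)+w_c}\ge 1$ directly from $|A_v\cap y|\ge|A_v\cap x|+1$ via convexity of $t\mapsto\frac{t}{\phi+t}$.
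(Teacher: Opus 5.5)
\paragraph{Definitions and the first implication.} Your guessed definitions differ from the paper's, and this breaks your first implication. In the paper, $(w,x)$ is in the weak Nash core if and only if the coupling holds and, for \emph{every} candidate $c$, selected or not, $\sum_{v:c\in A_v}\frac{w_c}{\phi_v(x)+1}<\frac{n}{k}$. There is no affordability condition with denominator $\phi_v(x)$.

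Your affordability claim is in fact false for discrete Nash core pairs. Take $n=3$, $k=2$, $A_1=\{a\}$, $A_2=A_3=\{b\}$, $w\equiv 1$ and $x=\{a,b\}$. Its score $3$ is maximal over all $z\in\{0,1,2\}^2$ with $\|z\|_1\le 2$, yet $\sum_{v:b\in A_v}w_b/\phi_v(x)=2>\frac{3}{2}$.

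For a selected $c$, what you must show is the $+1$ condition, and your plan never addresses selected candidates. The missing idea is that Definition~\ref{def:disccore} maximizes over multiplicities $z\in\{0,1,\dots,k\}^m$. So $y=x+e_c$ is a legitimate competitor even when $x_c=1$. Corollary~\ref{cor:cont:h} then bounds its gain below by $\sum_{v:c\in A_v}\frac{w_c}{\phi_v(x)+1}$, for selected and unselected $c$ alike.

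Your removal step also gives no strictness. Averaging removal costs over the $k$ seats of $x$ yields some $c'$ costing at most $\frac{n}{k}$. That only ties the gain of at least $\frac{n}{k}$, so you get $sc_{PAV}(w,x+e_c-e_{c'})\ge sc_{PAV}(w,x)$. Definition~\ref{def:disccore} has no tie-breaking convention, so this contradicts nothing.

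The paper first disposes of $\|x\|_1<k$, where $y$ itself is feasible. It then averages over the $k+1$ units of $y$ and evaluates the losses at $y$ using $H(z)-H(z-a)\le a/z$:
\[
\frac{1}{k+1}\sum_{c'}y_{c'}\sum_{v:c'\in A_v}\frac{w_{c'}}{\phi_v(y)}=\frac{|\{v:\phi_v(y)>0\}|}{k+1}\le\frac{n}{k+1}<\frac{n}{k}.
\]
Hence some unit can be removed at a cost strictly below the gain.

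\paragraph{The second implication.} This part does transfer, and once corrected it is exactly the paper's argument. As written, though, it invokes affordability, which the actual hypothesis does not supply. The shares must therefore be $\frac{w_c}{\phi_v(x)+1}$ for every $c$. The per-voter inequality then reads $\phi_v(y)\ge\phi_v(x)+1$ for $v\in S$, and it needs no exchange argument. Since $w_c<1$ forces $x_c=1$, and $y_c\le 1$,
\[
\phi_v(y)-\phi_v(x)=\sum_{c\in A_v,\,w_c<1}w_c(y_c-1)+\sum_{c\in A_v,\,w_c=1}(y_c-x_c)\ge u_v(y)-u_v(x)\ge 1,
\]
as in the proof of Theorem~\ref{thm:frac:incore}. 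Summing gives $\frac{n}{k}\|y\|_1>\sum_c y_c\sum_{v:c\in A_v}\frac{w_c}{\phi_v(x)+1}\ge\sum_{v\in S}\frac{\phi_v(y)}{\phi_v(x)+1}\ge |S|$. The first inequality is strict because $y\neq 0$ and the weak Nash core inequality is strict for every candidate. No case split on $y\subseteq x$ is needed.
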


We complement these theoretical results with experiments on real voting data. Although our iterative heuristics lack formal convergence guarantees, they efficiently find candidate fractional Nash-core pairs and weak Nash-core pairs on the tested instances. Rather than verifying discrete-core membership directly---a coNP-hard problem \citep{brill2024approval}---we certify that the computed solution lies in the weak Nash core, which can be checked in polynomial time. To the best of our knowledge, this is the first algorithmic approach to find discrete-core solutions on real-world approval-voting data.

Based on the experiments, we conjecture that there always exists a weak Nash core solution.

\begin{conjecture}
There exists $w\in (0,1]^m$ and $x\in \{0,1\}^m$ where $\|x\|_1\leq k$, such that $(w,x)$ is in the weak Nash core.
\end{conjecture}

If this conjecture is true, it would imply that the discrete core is nonempty for unrestricted elections, suggesting a path beyond the bounded-voter result proved in Section~\ref{sec:bv-core-eight}.

\subsection{Related Work}

\subsubsection{The Lindahl Equilibrium}

Lindahl equilibrium, first introduced by \citep{lindahl1958just}, describes an efficient allocation mechanism for public goods where individuals pay personalized prices based on their marginal benefit. Samuelson \citep{samuelson2024pure} formalized the conditions for optimal public good provision but highlighted strategic challenges, such as the incentive to misrepresent preferences. Later, Arrow \citep{arrow1969organization} and Hurwicz \citep{hurwicz1979outcome} explored mechanism design issues, leading to the development of incentive-compatible frameworks.

A Lindahl equilibrium describes a way to fund and decide on the level of a public good by assigning each individual a ``personalized price'' that matches their valuation of the good, so that the total of these prices covers its cost. In this scenario, everyone contributes according to how much they benefit, ensuring no one pays more or less than their true valuation and leading to a Pareto-efficient outcome. Under our proportional payment framework, the personalized price remains separable as the ratio of candidate weight to voter utility.

\subsubsection{Justified Representation}

The discrete core implies several fairness concepts, including justified representation (JR), extended justified representation (EJR) \citep{aziz2017justified}, and proportional justified representation (PJR) \citep{sanchez2017proportional,aziz2018complexity}. These proportional representation axioms aim to identify cohesive voter groups that collectively approve a small set of candidates and ensure such groups receive proportional representation in the outcome.

Voting rules such as \emph{Proportional Approval Voting} (PAV) \citep{thiele1895om} and \emph{Method of Equal Shares} (MES) \citep{peters2020proportionality,peters2021proportional} satisfy EJR, and EJR implies JR. The core solution is both Pareto optimal and priceable \citep{peters2020proportionality}. Pareto optimality means that no other committee of the same size can make all voters at least as well off while making some strictly better off. Priceability means there exists a payment scheme where each voter pays only for the candidates they approve. The PAV solution is Pareto optimal but not priceable \citep{peters2020proportionality}, whereas the MES solution is priceable but fails to satisfy Pareto optimality \citep{pierczynski2022core}.

Our definition of the discrete Nash core extends the Proportional Approval Voting (PAV) method by incorporating weighted adjustments. These weights serve to reduce the influence of popular candidates on the score function, mitigating the impact on candidates who share voters with them. Similarly, our definition of the weak Nash core modifies the Method of Equal Shares (MES) by introducing a weighted distribution. Instead of allocating a committee member’s payment uniformly among approving voters, our approach distributes it inversely proportional to each voter's weighted utility, ensuring a fairer allocation.

\subsubsection{Discrete-Core Nonemptiness}

Whether every approval-based committee election admits a committee in the discrete core remains open, but exact nonemptiness is known under several restrictions. \citet{pierczynski2022core} prove nonemptiness for several restricted domains, including voter-interval and candidate-interval profiles. \citet{brill2024approval} establish nonemptiness for approval-based apportionment, where multiple seats may be assigned to each approved party.

Other results bound numerical parameters of the election. \citet{peters2025core} proves that the core is nonempty whenever the committee size satisfies $k\leq8$, regardless of the numbers of voters and candidates, and whenever the number of candidates satisfies $m\leq15$. These restrictions are incomparable with a bound on the number of voters.

More recently, \citet{becker2026core} prove core nonemptiness for elections with at most seven voters. Their proof applies to weighted voters and therefore also covers elections with at most seven distinct approval types and arbitrary multiplicities. For elections with at most five voters, they additionally give a polynomial-time algorithm for computing a core committee.

Our Theorem~\ref{thm:bv-main} proves core nonemptiness for elections with at most eight individual equally weighted voters. Relative to \citet{becker2026core}, the genuinely new case consists of eight equally weighted voters with eight distinct approval sets; a repeated approval set yields at most seven voter types. Conversely, their result allows arbitrary weights and multiplicities across seven types, which our theorem does not. Our proof resolves the remaining eight-voter case through an exact computer-assisted classification of eight-row residual regions.

\subsubsection{Approximations of the Core}

Given the challenges in finding a discrete core, recent studies have explored relaxations of this notion. One approach, as shown by \citep{cheng2020group}, introduces a probabilistic variant, ensuring the core property in expectation via a lottery over committees. Another line of work focuses on deterministic approximations, where Jiang et al. \citep{jiang2020approximately} developed an algorithm providing a 32-group-size approximation, while Fain et al. \citep{fain2018fair} and Munagala et al. \citep{munagala2022approximate} proposed different rounding techniques to approximate the core with varying guarantees. Furthermore, Peters and Skowron \citep{peters2020proportionality} explored constraints on the space of deviations, defining restricted variants such as the core subject to cohesiveness, which aligns with Extended Justified Representation (EJR).

\subsection{Roadmap}

Section~\ref{sec:frac} develops the fractional Nash-core theory. Section~\ref{sec:bv-core-eight} then uses that theory to prove discrete-core nonemptiness for elections with at most eight voters. Section~\ref{sec:disc} presents the complementary Nash-core certificate framework for unrestricted discrete elections, and Section~\ref{sec:experi} reports our computational experiments.

\section{The Fractional Setting}    \label{sec:frac}

In this section, we prove the theorems in the fractional setting.

\subsection{Nash Core Solution is Core Solution} \label{sec:frac:incore}

In this subsection, we prove Theorem \ref{thm:frac:incore}.

\fracincore*

\begin{proof}
  Assume $(w,x)$ is a fractional Nash core solution, but $x$ is not a fractional core solution. Then there exists $S\subseteq N$ and $y\in [0,1]^{m}\setminus\{0\}$ such that $\onenorm{y}\leq \frac{k|S|}{n}$, and $\sum_{c\in A_v}x_c< \sum_{c\in A_v}y_c$ for all $v\in S$.

  Let $W = \{c\in C: w_c<1\}$, so $x_c=1$ for $c\in W$. For $v\in S$, since $\sum_{c\in A_v}x_c< \sum_{c\in A_v}y_c$, we have
  \begin{align*}
     \sum_{c\in A_v}w_{c}y_{c} - \sum_{c\in A_v}w_{c}x_{c}
    =& \sum_{c\in A_v\cap W} w_{c} (y_{c} - 1) + \sum_{c\in A_v\setminus W} (y_{c}-x_{c})   \\
    \geq & \sum_{c\in A_v\cap W} (y_{c} - 1) + \sum_{c\in A_v\setminus W} (y_{c}-x_{c})   \\
    =& \sum_{c\in A_v}y_c - \sum_{c\in A_v}x_c > 0,
  \end{align*}
  implying $\sum_{c\in A_v}w_{c}x_{c} < \sum_{c\in A_v}w_{c}y_{c}$.

  Recall that $z=x$ is an optimal solution of the following optimization problem.
\[
  \max_{z \ge 0,\, \|z\|_1 \le k} f(z):=\sum_{v\in N} \log\Bigl(\sum_{c\in A_v} w_c z_c\Bigr).
\]

  Note that $\frac{k}{\onenorm{y}}y$ is also a feasible solution of this optimization problem. Let $\ell = \frac{k}{\onenorm{y}}y - x$. Since the feasible region of the optimization problem is convex, we have $\frac{\partial f}{\partial \ell}(x) \leq 0$ by the optimality of $z=x$. Since we have
  \begin{align*}
     & y^T \cdot \nabla f(x)
    =  \sum_{c\in C}y_{c} \sum_{v\in B_c}\frac{w_c}{\sum_{c'\in A_v}w_{c'}x_{c'}}
    =   \sum_{v\in N} \frac{\sum_{c'\in A_v}w_{c'}y_{c'}}{\sum_{c'\in A_v}w_{c'}x_{c'}}
    \geq   \sum_{v\in S} \frac{\sum_{c'\in A_v}w_{c'}y_{c'}}{\sum_{c'\in A_v}w_{c'}x_{c'}}  > |S|,    \\
    & x^T \cdot \nabla f(x)
    =  \sum_{c\in C}x_c \sum_{v\in B_c}\frac{w_c}{\sum_{c'\in A_v}w_{c'}x_{c'}}
    =  \sum_{v\in N} \frac{\sum_{c'\in A_v}w_{c'}x_{c'}}{\sum_{c'\in A_v}w_{c'}x_{c'}} = n,
  \end{align*}
  so the partial derivative along $\ell$ is
  \begin{align*}
     & \frac{\partial f}{\partial \ell}(x)
    =  \frac{k}{\onenorm{y}} y^T \cdot \nabla f(x) -  x^T \cdot \nabla f(x)
    > 0,
  \end{align*}
  which contradicts to $\frac{\partial f}{\partial \ell}(x) \leq 0$.

  As a result, $x$ is a fractional core solution.
\end{proof}

\subsection{Existence of Nash Core Solutions}   \label{sec:frac:exist}

In this subsection, we prove Theorem \ref{thm:frac:exist}.

\fracexist*

To prove this theorem, we construct a concave game---where a pure Nash equilibrium always exists---and then demonstrate that this equilibrium corresponds to a Nash core solution. We begin by defining the concave game.

\begin{definition}[concave game]
Let $G = \bigl(\{S_i\}_{i=1}^q, \{u_i\}_{i=1}^q\bigr)$ be a $q$-player game in normal form, where:
\begin{enumerate}
    \item Each strategy set $S_i$ is a nonempty, convex, and compact subset of $\mathbb{R}^{d_i}$, where $d_i$ is a positive integer.
    \item Each payoff (utility) function $\theta_i: S_1 \times \cdots \times S_q \to \mathbb{R}$ is a continuous function.
\end{enumerate}
We say $G$ is a \emph{concave game} if, for every player $i$ and every fixed strategy profile $x_{-i} \in \prod_{j \neq i} S_j$, the function
\[
x_i \;\mapsto\; \theta_i(x_i, x_{-i})
\]
is \emph{concave} on $S_i$.
\end{definition}

The concave game has the following property.

\begin{lemma}[\citep{rosen1965existence}] \label{lem:pureexist}
    In the concave game, a pure Nash equilibrium exists.
\end{lemma}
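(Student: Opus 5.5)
The plan is to deduce Lemma~\ref{lem:pureexist} from Kakutani's fixed point theorem, applied to the joint best-response correspondence on the product strategy space. Set $S = S_1\times\cdots\times S_q\subseteq \mathbb{R}^{d_1+\cdots+d_q}$. As a finite product of nonempty, convex, compact sets, $S$ is itself nonempty, convex and compact. For each player $i$ and each $x\in S$, define the best-response set
\[
  B_i(x_{-i}) := \arg\max_{s_i\in S_i} \theta_i(s_i, x_{-i}),
\]
and let $B(x) := B_1(x_{-1})\times\cdots\times B_q(x_{-q})\subseteq S$. A fixed point $x^*\in B(x^*)$ is exactly a pure Nash equilibrium, because then every $x_i^*$ maximizes $\theta_i(\cdot, x^*_{-i})$ over $S_i$.

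Next I check the hypotheses of Kakutani's theorem for $B$, one player at a time.
\begin{itemize}
  \item \emph{Nonempty values.} For fixed $x_{-i}$, the map $s_i\mapsto\theta_i(s_i,x_{-i})$ is continuous on the compact set $S_i$. By Weierstrass it attains its maximum, so $B_i(x_{-i})\neq\emptyset$.
  \item \emph{Convex values.} Let $M$ be the maximum value. Then $B_i(x_{-i}) = \{s_i\in S_i: \theta_i(s_i,x_{-i})\geq M\}$. This is a superlevel set of a concave function on a convex set, hence convex; here the concavity hypothesis of the concave game is used.
  \item \emph{Products.} A product of nonempty convex sets is nonempty and convex, so $B(x)$ has nonempty convex values.
\end{itemize}

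The remaining hypothesis, and the main technical step, is that $B$ has a closed graph. Take sequences $x^{(t)}\to x$ and $y^{(t)}\to y$ with $y^{(t)}\in B(x^{(t)})$. Fix a player $i$ and an arbitrary $s_i\in S_i$. Optimality gives
\[
  \theta_i\bigl(y_i^{(t)}, x^{(t)}_{-i}\bigr)\ge \theta_i\bigl(s_i, x^{(t)}_{-i}\bigr).
\]
Since $\theta_i$ is jointly continuous on $S$, letting $t\to\infty$ yields $\theta_i(y_i,x_{-i})\ge\theta_i(s_i,x_{-i})$. Also $y_i\in S_i$ because $S_i$ is closed. Hence $y_i\in B_i(x_{-i})$ for every $i$, so $y\in B(x)$. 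This is the standard argument behind Berge's maximum theorem. The one point that needs care is that continuity is required jointly in all players' strategies, not separately in each; the definition of a concave game provides exactly this.

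With $S$ nonempty, convex and compact in Euclidean space, and $B:S\rightrightarrows S$ having nonempty convex values and a closed graph, Kakutani's theorem gives some $x^*\in B(x^*)$, which is a pure Nash equilibrium. Rosen's original argument instead applies Kakutani to the maximizer of the auxiliary function $\sum_i \theta_i(s_i,x_{-i})$; it would serve equally well as a fallback.
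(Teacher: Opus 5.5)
Your Kakutani argument is correct and is essentially the proof behind the cited result. The paper gives no proof of its own and simply invokes Rosen (1965), whose proof applies Kakutani to the maximizer set of $y\mapsto\sum_i\theta_i(y_i,x_{-i})$; for product strategy sets $S_1\times\cdots\times S_q$ this sum separates, so that maximizer set is exactly your best-response correspondence $B(x)$, and your convex-values step in fact needs only quasi-concavity in each player's own strategy.
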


We construct a sequence of concave games with $m+1$ players. Fix a positive integer $t$ and consider the game $G^{(t)}$. We use $w=(w_c)_{c\in C}$ and $x=(x_c)_{c\in C}$ for a generic strategy profile in this game. Each candidate $c\in C$ corresponds to a player who chooses an action $w_c\in \left[\frac{1}{t+1},1\right]$, while the additional Player~$0$ chooses an action from $\Omega_x := \{x\in [0,k]^m: \sum_{c\in C}x_c\leq k\}$. On strategy variables, the superscript $(t)$ will be reserved for an equilibrium profile $(w^{(t)},x^{(t)})$ selected from $G^{(t)}$. Player~$0$'s utility is defined as the geometric mean of the weighted voter utilities:
\begin{equation*}
\theta^{(t)}_0(w,x)
:=
\left(
\prod_{v\in N}\sum_{c\in A_v}w_c x_c
\right)^{1/n}.
\end{equation*}
The utility of candidate player $c\in C$ at a generic strategy profile $(w,x)$ is $\theta^{(t)}_c(w,x)=-(1-x_c)(1-w_c)$.

The geometric mean is continuous, coordinatewise nondecreasing, and concave on $\mathbb{R}^n_+$. Since the vector of weighted voter utilities is continuous in the full strategy profile $(w,x)$ and affine in $x$ for fixed $w$, Player~$0$'s payoff is continuous in the full strategy profile and concave in her own strategy. Each candidate player's payoff is continuous in the full strategy profile and affine in the candidate's own strategy $w_c$. Therefore, $G^{(t)}$ is a concave game.

Moreover, Assumption~\ref{assup:1} implies that $A_v\neq\emptyset$ for every voter $v$. Since every feasible candidate strategy satisfies $w_c\geq 1/(t+1)>0$, the choice $x_c=k/m$ for every $c\in C$ belongs to $\Omega_x$ and gives every voter positive weighted utility. Hence the maximum payoff of Player~$0$ is positive, and every best response of Player~$0$ gives every voter positive weighted utility. At every such point,
\begin{equation*}
\log \theta^{(t)}_0(w,x)
=
\frac{1}{n}\sum_{v\in N}
\log\left(\sum_{c\in A_v}w_c x_c\right).
\end{equation*}
Therefore, Player~$0$ has exactly the same best responses as in the candidate-weighted Nash-social-welfare maximization problem.

By Lemma \ref{lem:pureexist}, $G^{(t)}$ has a pure Nash equilibrium; select one and denote it by $(w^{(t)},x^{(t)})$. Thus, $w$ and $x$ denote generic strategies, whereas $w^{(t)}$ and $x^{(t)}$ denote the selected equilibrium strategies in $G^{(t)}$. Since for any $t\geq 1$, $w^{(t)}\in [0,1]^m$ and $x^{(t)}\in \Omega_x$, both are in a compact set. By the Bolzano–Weierstrass theorem \citep{bartle2000introduction}, there is a subsequence $\{(w^{(r_i)},x^{(r_i)})\}_{i=1}^{\infty}$ that converges to a point $(w^{(\infty)}, x^{(\infty)})\in [0,1]^m\times \Omega_x$, where $\{r_i\}_{i=1}^{\infty}$ is a strictly increasing sequence of positive integers.

We show that $w^{(\infty)}$ is actually strictly positive.

\begin{lemma}
    We have $w^{(\infty)}\in (0,1]^m$.
\end{lemma}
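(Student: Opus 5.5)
The plan is to show that at each equilibrium $(w^{(t)},x^{(t)})$ a candidate weight can be small only if some voter's weighted utility is small. Assumption~\ref{assup:1} will then rule out a coalition of voters whose utilities vanish in the limit.

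\textbf{Step 1: structure of the equilibrium.} Candidate $c$'s payoff $-(1-x_c)(1-w_c)$ is strictly increasing in $w_c$ when $x_c<1$, so $w^{(t)}_c=1$ whenever $x^{(t)}_c<1$. Write $\phi^{(t)}_v=\sum_{c\in A_v}w^{(t)}_cx^{(t)}_c$, which is positive as argued in the text, and $B_c=\{v: c\in A_v\}$. Player~$0$ maximizes a concave function that is increasing under scaling of $x$, so $\|x^{(t)}\|_1=k$. The KKT conditions for $\max\sum_v\log\phi_v$ over $\Omega_x$ then give a multiplier $\lambda$ with
\[
w^{(t)}_c\sum_{v\in B_c}\frac{1}{\phi^{(t)}_v}\;\le\;\lambda ,
\]
with equality when $x^{(t)}_c>0$. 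Multiplying by $x^{(t)}_c$ and summing gives $n=\lambda k$, so $\lambda=n/k$; this is the same computation as in Theorem~\ref{thm:frac:equiv}. Consequently, every candidate with $x^{(t)}_c>0$ satisfies
\[
w^{(t)}_c=\frac{n/k}{\sum_{v\in B_c}1/\phi^{(t)}_v},
\]
and every candidate with $x^{(t)}_c=0$ has $w^{(t)}_c=1$. The box constraint $x_c\le k$ is slack or harmless here.

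\textbf{Step 2: reduction to a vanishing coalition.} Suppose $w^{(\infty)}_c=0$ for some $c$. Pass to a further subsequence along which every $\phi^{(r_i)}_v$ converges in $[0,\infty]$. By Step~1, $x^{(r_i)}_c>0$ eventually, and $\sum_{v\in B_c}1/\phi_v^{(r_i)}\to\infty$. Hence $S:=\{v:\phi^{(r_i)}_v\to 0\}$ is nonempty; in fact $S\cap B_c\neq\emptyset$. Let $T=\bigcup_{v\in S}A_v$. Voters outside $S$ have $\phi_v$ bounded below by a positive constant along the subsequence.

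\textbf{Step 3: contradiction via payments.} Consider any $c'\in T$ and pick $v\in S\cap B_{c'}$. Then $\sum_{u\in B_{c'}}1/\phi_u\to\infty$. If $x^{(r_i)}_{c'}<1$, then $w^{(r_i)}_{c'}=1$, and the KKT inequality $\sum_{u\in B_{c'}}1/\phi_u\le n/k$ would fail for large $i$. So eventually $x^{(r_i)}_{c'}=1$ for all $c'\in T$, and $x(T)=|T|$.

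Next use the payments $p_{vc}=w_cx_c/\phi_v$. Each voter pays exactly $1$ in total, and candidate $c'$ receives $(n/k)x_{c'}$ by the equality case of Step~1. Voters in $S$ pay only to candidates in $T$. A voter $u\notin S$ pays $p_{uc'}=w_{c'}x_{c'}/\phi_u$ to $c'\in T$. Here $\phi_u$ is bounded below, and $w_{c'}x_{c'}\le\phi_v\to0$ for any $v\in S\cap B_{c'}$, so $p_{uc'}\to 0$. Summing over $T$ gives
\[
\frac{n}{k}|T| \;=\; \sum_{c'\in T}\sum_{u}p_{uc'} \;=\; |S|+o(1).
\]
Assumption~\ref{assup:1} gives $|T|>k|S|/n$, so the left-hand side is a constant strictly larger than $|S|$. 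This is a contradiction, and therefore $w^{(\infty)}\in(0,1]^m$.

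\textbf{Main obstacle.} I expect the main care to be needed in Step~1: deriving the KKT multiplier and $\|x\|_1=k$ while handling candidates with $x_c=0$ and empty $B_c$. A candidate with empty $B_c$ never receives positive $x_c$, because that would waste budget, so its weight is $1$. The limiting argument in Step~3 is then straightforward.
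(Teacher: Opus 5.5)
Your proof is correct. It uses the same mechanism as the paper: KKT prices with multiplier $n/k$, proportional payments in which each voter spends exactly one unit, and Assumption~\ref{assup:1} applied to a coalition whose approved candidates are all at least fully funded. What differs is that you build the sets in the opposite order.

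The paper starts from the candidates $W=\{c:w^{(\infty)}_c=0\}$ and sets $S=\{v:A_v\subseteq W\}$. It then needs a separate KKT argument, through a candidate in $A_u\setminus W$ whose weight stays positive, to keep $\phi_u$ away from zero for $u\notin S$. Outside payments into $W$ vanish because $w_c\to0$ on $W$.

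You start from the voters whose weighted utility vanishes and set $T=\bigcup_{v\in S}A_v$. The lower bound for $u\notin S$ then holds by definition, and outside payments into $T$ vanish by the bound $w_{c'}x_{c'}\le\phi_v$. Along your refined subsequence your $S$ in fact equals the paper's $S$ and $T\subseteq W$, so the coalition is the same and your candidate set is no larger.

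Your route has three advantages. It handles limits more cleanly: you pass to a subsequence on which every $\phi_v$ converges, whereas the paper leaves the existence of the limits of $\phi_v$ and of the payment ratios implicit. It applies Assumption~\ref{assup:1} verbatim, with $S\neq\varnothing$ guaranteed. And since the hypothesis $w^{(\infty)}_c=0$ is used only to make $S$ nonempty, you actually show that no voter's weighted utility can tend to zero. That gives positivity of the weights directly, because each $w_c$ is either $1$ or $(n/k)/\sum_{v\in B_c}1/\phi_v$. The paper's route, in turn, argues directly about the vanishing-weight set named in the statement.

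One correction is needed. In $G^{(t)}$, Player~0 chooses $x\in\Omega_x\subseteq[0,k]^m$, so Step~3 only yields $x^{(r_i)}_{c'}\ge1$, not $x^{(r_i)}_{c'}=1$. Indeed, $x_{c'}>1$ is exactly when candidate $c'$ best-responds with the minimal weight. Replace $x(T)=|T|$ by $\frac{n}{k}|T|\le\frac{n}{k}x(T)=\sum_{c'\in T}\sum_u p_{uc'}$. This is the direction you need, just as the paper uses $x^{(r_i)}_c\ge1$.
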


\begin{proof}

Define $W=\{c: w^{(\infty)}_c = 0\}$. We assume $W$ is not empty, and define $S=\{v\in N: A_v\subseteq W\}$. By assumption 1, we have $\frac{n}{k}|W|>|S|$.

W.l.o.g., for $c\in W$, assume $w_c^{(r_i)}<1$ for any $i\geq 1$, otherwise we can remove the first a few entries of $\{r_i\}_{i=1}^{\infty}$ to achieve this property because $\{w_c^{(r_i)}\}_{i=1}^{\infty}$ converges to 0. Note that $w_c^{(r_i)}<1$ is the optimal strategy of Player $c$ in response to other players' strategies, so $x_c^{(r_i)}\geq 1$, otherwise $w_c^{(r_i)}=1$ is a better strategy for Player $c$.

In game $G^{(r_i)}$, since $x^{(r_i)}$ is an optimal strategy of Player 0 in response to other players' strategies, $z=x^{(r_i)}$ is an optimal solution of
\[
\max_{z \ge 0,\, \|z\|_1\le k}\;\; \sum_{v\in N} \log\Bigl(\sum_{c\in A_v} w_c^{(r_i)} z_c\Bigr).
\]

Recall that in the proof of Lemma \ref{lem:frac:equiv:forward}, we showed Equation \ref{eq:deriv} and Inequality \ref{ieq:deriv}. Using similar KKT-condition arguments, we have
\begin{align} \label{ieq:exist:deriv}
\sum_{v : c\in A_v} \frac{w^{(r_i)}_c}{\sum_{c'\in A_v}w^{(r_i)}_{c'} x^{(r_i)}_{c'}}
  \;\leq \;\frac{n}{k},
\end{align}
and equality holds when $x^{(r_i)}_c>0$. Since $x^{(r_i)}_c\geq 1$ for $c\in W$, we have
\begin{align}
\frac{n}{k}|W|
\leq & \sum_{c\in W}\frac{n}{k} x^{(r_i)}_c  \label{ieq:exist:1} \\
= & \sum_{c\in W} \sum_{v : c\in A_v} \frac{w^{(r_i)}_c x^{(r_i)}_c}{\sum_{c'\in A_v}w^{(r_i)}_{c'} x^{(r_i)}_{c'}}    \\
= &  \sum_{v\in N} \frac{\sum_{c\in A_v\cap W} w^{(r_i)}_c x^{(r_i)}_c}{\sum_{c'\in A_v}w^{(r_i)}_{c'} x^{(r_i)}_{c'}}    \\
\leq & \left( |S| + \sum_{v\in N\setminus S} \frac{\sum_{c\in A_v\cap W} w^{(r_i)}_c x^{(r_i)}_c}{\sum_{c'\in A_v}w^{(r_i)}_{c'} x^{(r_i)}_{c'}}\right).
\end{align}

For $v\in N\setminus S$, we have
\begin{align}   \label{ieq:exist:2}
\lim\limits_{i\rightarrow \infty}\left( \sum_{c'\in A_v}w^{(r_i)}_{c'} x^{(r_i)}_{c'} \right)>0,
\end{align}
otherwise for $c\in A_v\setminus W$, since $\lim\limits_{i\rightarrow \infty} w^{(r_i)}_c>0$,
\[
\lim\limits_{i\rightarrow \infty} \sum_{v' : c\in A_{v'}} \frac{w^{(r_i)}_c}{\sum_{c'\in A_{v'}}w^{(r_i)}_{c'} x^{(r_i)}_{c'}}
\geq \lim\limits_{i\rightarrow \infty} \frac{w^{(r_i)}_c}{\sum_{c'\in A_v}w^{(r_i)}_{c'} x^{(r_i)}_{c'}} = +\infty,
\]
which violates Inequality \ref{ieq:exist:deriv}.

Combining formulas \ref{ieq:exist:1} to \ref{ieq:exist:2}, we have
\begin{align*}
\frac{n}{k}|W|
\leq & |S| +  \lim\limits_{i\rightarrow \infty} \sum_{v\in N\setminus S} \frac{\sum_{c\in A_v\cap W} w^{(r_i)}_c x^{(r_i)}_c}{\sum_{c'\in A_v}w^{(r_i)}_{c'} x^{(r_i)}_{c'}}    \\
= & |S| +   \sum_{v\in N\setminus S} \frac{\lim\limits_{i\rightarrow \infty} \sum_{c\in A_v\cap W} w^{(r_i)}_c x^{(r_i)}_c}{\lim\limits_{i\rightarrow \infty} \sum_{c'\in A_v}w^{(r_i)}_{c'} x^{(r_i)}_{c'}}
=|S|,
\end{align*}
which contradicts to Assumption 1. As a result, we have $w^{(\infty)}\in (0,1]^m$.

\end{proof}

Since $w^{(\infty)}\in (0,1]^m$, there is a large enough $i$ such that $w^{(r_i)}_c>\frac{1}{r_i+1}$ for each $c\in C$. Since $w^{(r_i)}_c$ is the best strategy of Player $c$ in response to other players' strategies, we have $x^{(r_i)}_c\leq 1$, otherwise $w^{(r_i)}_c=\frac{1}{r_i+1}$ is a better strategy for Player $c$. Furthermore, based Definition \ref{def:nashcore}, $(w^{(r_i)}, x^{(r_i)})$ satisfies weight-selection coupling ($x_c=1$ if $w_c<1$) by the optimality of $w^{(r_i)}_c$ for Player $c$, and it satisfies Nash optimality by the optimality of $x^{(r_i)}_c$ for Player $0$. As a result, $(w^{(r_i)}, x^{(r_i)})$ is in the fractional Nash core, completing the proof of Theorem \ref{thm:frac:exist}.

\subsection{Connection with the Convex Program}\label{sec:compute}

Kroer and Peters~\citep{kroer2025computing} introduce a convex optimization problem whose optimal solution lies in the fractional core. For the approval-based setting under Assumption \ref{assup:1}, an equivalent formulation that is slightly simpler is as follow:
\begin{equation}\label{eq:primal}\tag{$\mathcal P$}
  \begin{alignedat}{3}
    \max_{b,x}\quad &-\sum_{v\in N}\sum_{c\in A_v} b_{vc}\,\log\!\Bigl(\tfrac{b_{vc}}{x_c}\Bigr)\\[2pt]
    \text{s.t.}\quad &\sum_{c\in A_v} b_{vc}=1           &&\qquad(\forall v\in N),\\
                      &\sum_{v\,:\,c\in A_v} b_{vc}=\tfrac{n}{k}\,x_c &&\qquad(\forall c\in C),\\
                      &b_{vc}\ge 0                           &&\qquad(\forall v,c),\\
                      &0\le x_c\le 1                         &&\qquad(\forall c\in C).
  \end{alignedat}
\end{equation}

The variables $x_c$ denote the total funding allocated to each candidate $c$, while $b_{vc}$ specifies how much voter $v$ contributes to candidate $c$. Each voter has a unit budget, and the price per unit for each candidate is fixed at $n/k$.

The objective is \emph{strictly concave} in $b$ and affine in $x$, while all constraints are affine; hence~\eqref{eq:primal} is a convex program, and the KKT conditions are sufficient for optimality.

\begin{lemma}[\citep{kroer2025computing}]
  Every optimal solution of~\eqref{eq:primal} lies in the fractional core.
\end{lemma}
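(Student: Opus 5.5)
The plan is to extract from the KKT conditions of \eqref{eq:primal} a weight vector $w$ such that $(w,x)$ is supported by a proportional payment. Then Theorem~\ref{thm:frac:equiv} puts $(w,x)$ in the fractional Nash core, and Theorem~\ref{thm:frac:incore} shows $x$ is in the fractional core. Because \eqref{eq:primal} is convex with affine constraints, KKT conditions hold at every optimum. The only delicacy is the entropy term at boundary points ($b_{vc}=0$ or $x_c=0$), where the derivative of $-b\log(b/x)$ blows up.

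\textbf{Step 1: proportional form of $b$.} Attach multipliers $\alpha_v$ to the budget constraints, $\beta_c$ to the price constraints, and $\mu_c\ge 0$, $\nu_c\ge0$ to $x_c\le 1$ and $x_c\ge 0$. First, if $x_c>0$ and $c\in A_v$, then $b_{vc}>0$. Otherwise, moving a tiny mass $\varepsilon$ into $b_{vc}$ while rebalancing the other constraints gains $\Theta(\varepsilon\log(1/\varepsilon))$ and loses only $O(\varepsilon)$. Stationarity in $b_{vc}$ on the support then gives $-\log(b_{vc}/x_c)-1-\alpha_v+\beta_c=0$, i.e. $b_{vc}=x_c e^{\beta_c}e^{-1-\alpha_v}$. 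Define $w_c:=e^{\beta_c-1}$ and $\phi_v:=\sum_{c\in A_v}w_cx_c$. The budget constraint then forces $b_{vc}=w_cx_c/\phi_v$. Since voter $v$ spends one unit, some $c\in A_v$ has $x_c>0$, so $\phi_v>0$. The price constraint becomes exactly affordability: $\sum_v p_{vc}=\frac nk x_c$.

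\textbf{Step 2: sign of $\beta_c-1$.} Use stationarity in $x_c$: $\frac{1}{x_c}\sum_v b_{vc}-\frac nk\beta_c-\mu_c+\nu_c=0$, and by the price constraint the first term equals $\frac nk$ whenever $x_c>0$. Hence $\beta_c=1$ (so $w_c=1$) when $0<x_c<1$, and $\beta_c=1-\frac kn\mu_c\le 1$ (so $w_c\le 1$) when $x_c=1$. This gives weight-selection coupling. Stability for $0<x_c<1$ is automatic from affordability, since $\sum_{v\in B_c}\frac{1}{\phi_v}=\frac{1}{x_c}\sum_{v} p_{vc}=\frac nk$.

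\textbf{Step 3: the main obstacle, candidates with $x_c=0$.} For these we set $w_c:=1$ and must prove $\sum_{v:c\in A_v}\frac{1}{\phi_v}\le\frac nk$; here the formal KKT derivative is not well defined. The plan is a direct perturbation argument. Raise $x_c$ to $\varepsilon$, let each supporter $v$ of $c$ move $b_{vc}=\varepsilon\,\frac{1/\phi_v}{\sum_{u\in B_c}1/\phi_u}\cdot\frac nk$ from its other candidates proportionally to $b_{vc'}$, and scale the other $x_{c'}$ down to keep the price constraints. Computing the first-order change in the objective, I expect it to be $\varepsilon\frac nk\log\bigl(\frac kn\sum_{v\in B_c}1/\phi_v\bigr)+o(\varepsilon)$; this is where the normalization $w_c=e^{\beta_c-1}$ must line up. If so, optimality forces $\sum_{v\in B_c}1/\phi_v\le\frac nk$. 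The total size $\|x\|_1$ stays at most $k$ because total payment is $n$. Having verified all four conditions of Definition~\ref{def:payment}, invoking Theorems~\ref{thm:frac:equiv} and~\ref{thm:frac:incore} completes the proof.
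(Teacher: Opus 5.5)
Note first that the paper does not prove this lemma; it cites Kroer and Peters, and itself proves only the converse (a proportional-payment pair is optimal for \eqref{eq:primal}). Your route is therefore genuinely different: optimum of \eqref{eq:primal} $\Rightarrow$ proportional payment $\Rightarrow$ Theorems~\ref{thm:frac:equiv} and~\ref{thm:frac:incore}.

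Steps~1--2 are sound in substance. In Step~1, the clean justification is to move along the segment toward a feasible point with all $b_{vc}>0$. Such a point exists by the strict Hall condition in Assumption~\ref{assup:1}; purely local ``rebalancing'' can get stuck at capped candidates.

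The genuine gap is Step~3. Your perturbation withdraws mass proportionally from \emph{every} candidate $c'$ that a supporter pays for. This includes capped candidates with $w_{c'}<1$, whose cap multiplier is $\mu_{c'}=-\frac nk\log w_{c'}>0$. The first-order change is actually
\[
  \varepsilon\,\frac nk\Bigl[\log\Bigl(\frac kn\sum_{v\in N(c)}\frac1{\phi_v}\Bigr)-\Lambda\Bigr],
  \qquad
  \Lambda:=\sum_{v\in N(c)}\theta_v\sum_{c'\in A_v}b_{vc'}\log\frac1{w_{c'}}\ \ge 0,
\]
with $\theta_v\propto 1/\phi_v$. Optimality therefore gives only $\frac kn\sum_{v\in N(c)}1/\phi_v\le e^{\Lambda}$. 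This is weaker than stability whenever a supporter of $c$ pays for a discounted candidate.

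Worse, $w_c=e^{\beta_c-1}$ is not determined by the KKT system. Consider a connected component of the graph $\{(v,c'):c'\in A_v,\ x_{c'}>0\}$ in which every candidate has $x_{c'}=1$. There $(\alpha,\beta)$ can be shifted by a common constant, and for some admissible shifts stability is simply false.

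Here is a concrete instance. Take $n=2$, $k=1$, $A_{v_1}=\{c_1,c_2\}$, $A_{v_2}=\{c_1\}$; Assumption~\ref{assup:1} holds.
\begin{itemize}
\item The unique optimum is $x=(1,0)$ with $b_{v_1c_1}=b_{v_2c_1}=1$.
\item Every $\beta_{c_1}\le 1$ satisfies stationarity on the support, with $\mu_{c_1}=2(1-\beta_{c_1})\ge 0$. So any $w_{c_1}\in(0,1]$ arises.
\item Stability at $c_2$ requires $1/w_{c_1}\le 2$, which fails for $w_{c_1}<1/2$.
\item Your perturbation has slope $-2\log 2$ here, not the predicted $2\log\frac{1}{2w_{c_1}}$.
\end{itemize}

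The repair needs two ideas missing from the proposal.
\begin{itemize}
\item \emph{Normalize the multipliers} so that every component contains a candidate with $\beta_{c'}=1$. This is automatic if the component contains some $0<x_{c'}<1$. Otherwise, shift that component's $(\alpha,\beta)$ upward until $\max\beta=1$; this keeps all $\mu\ge 0$.
\item \emph{Route the withdrawn mass along paths.} Send each supporter's withdrawn mass along an alternating path of the payment graph to such an undiscounted candidate $c^*$. All $b$'s on the path are positive by Step~1, and the only $x$ that decreases is $x_{c^*}$, with $\mu_{c^*}=0$.
\end{itemize}

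The KKT identity then gives the directional derivative $\sum_{v\in N(c)}db_{vc}\bigl[\log(dx_c/db_{vc})-\log\phi_v\bigr]$. With $db_{vc}=\varepsilon\frac nk\theta_v$, this equals exactly $\varepsilon\frac nk\log\bigl(\frac kn\sum_{v\in N(c)}1/\phi_v\bigr)$, and optimality yields stability.
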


We refine this statement and link it to proportional payments.

\begin{theorem}[Restatement of Theorem~\ref{thm:compute}]
  Let $(w,x)$ be a \emph{fractional committee} supported by proportional payments and define
  \[
    \phi_v(x)\;:=\;\sum_{c\in A_v} w_c\,x_c,
    \qquad
    b_{vc}\;:=\;
      \frac{w_c\,x_c}{\phi_v(x)}\,
      \mathbf 1_{\{c\in A_v\}}\!.
  \]
  Then $(b,x)$ is an optimal solution to the primal program~\eqref{eq:primal}.
\end{theorem}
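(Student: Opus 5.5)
We will prove optimality through the KKT conditions of \eqref{eq:primal}. Since the program is convex with affine constraints, it suffices to check two things: that $(b,x)$ is feasible, and that there are multipliers satisfying stationarity together with complementary slackness.

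\textbf{Feasibility.} This follows directly from Definition~\ref{def:payment}.
\begin{itemize}
\item The positive-utility condition makes $b$ well defined and nonnegative.
\item By construction, $\sum_{c\in A_v} b_{vc}=\phi_v(x)/\phi_v(x)=1$.
\item Affordability gives exactly $\sum_{v: c\in A_v} b_{vc}=\frac{n}{k}x_c$.
\item The bound $0\le x_c\le 1$ is given.
\end{itemize}
On the support, $b_{vc}/x_c = w_c/\phi_v(x)$. So the objective term $b_{vc}\log(b_{vc}/x_c)$ equals $b_{vc}\log(w_c/\phi_v(x))$, and we use the convention $0\log 0=0$ when $x_c=0$.

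\textbf{Multipliers.} We attach the following multipliers:
\begin{itemize}
\item $\alpha_v$ to the budget constraint of voter $v$;
\item $\beta_c$ to the price constraint of candidate $c$;
\item $\mu_{vc}\ge 0$ to $b_{vc}\ge 0$;
\item $\lambda_c\ge0$ to $x_c\le1$;
\item $\nu_c\ge0$ to $x_c\ge0$.
\end{itemize}

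For $b_{vc}>0$, stationarity in $b_{vc}$ reads $-\log(b_{vc}/x_c)-1=\alpha_v+\beta_c$. Since $\log(b_{vc}/x_c)=\log w_c-\log\phi_v(x)$, we choose
\[
\alpha_v=\log\phi_v(x)-1,\qquad \beta_c=-\log w_c .
\]
With this choice the equation holds exactly for every pair with $c\in A_v$ and $x_c>0$.

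Stationarity in $x_c$ reads $\sum_{v: c\in A_v} b_{vc}/x_c - \frac{n}{k}\beta_c - \lambda_c+\nu_c=0$. Here we need the sign convention for $\beta_c$ to match how the constraint is written, and we should double-check it. The first term is $\sum_{v:c\in A_v} w_c/\phi_v(x)$. Using affordability with $x_c>0$, we get $\sum_{v: c\in A_v} w_c/\phi_v(x)=\frac{n}{k}$.

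\textbf{Main obstacle.} The entropy objective has a derivative $\log(b/x)$ that blows up at the boundary, so the delicate cases are those with $x_c=0$ or $b_{vc}=0$. For these it is cleaner to avoid the KKT machinery and instead verify optimality directly by a weak-duality (Lagrangian) inequality. For any feasible $(b',x')$, consider the Lagrangian evaluated with the multipliers above. By the log-sum/Gibbs inequality, $-\sum_{c} b'_{vc}\log(b'_{vc}/x'_c)$ is at most its linearization around $(b,x)$.

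The linear terms then reduce, after using the voter budget equalities, to
\[
\sum_{c} x'_c\Big(\sum_{v: c\in A_v} \tfrac{w_c}{\phi_v(x)}\Big)-\tfrac{n}{k}\sum_c x'_c\log(1/w_c)\cdots .
\]
The coefficient of each $x'_c$ is controlled by the three cases of Definition~\ref{def:payment}:
\begin{itemize}
\item If $x_c<1$, stability gives $\sum_{v:c\in A_v} 1/\phi_v(x)\le \frac{n}{k}$, and $w_c=1$ by weight-selection coupling.
\item If $x_c=1$, then $w_c\le1$ and the constraint $x'_c\le1=x_c$ takes the role of $\lambda_c\ge0$.
\item If $0<x_c<1$, equality holds.
\end{itemize}
Thus the change in the linearized objective is at most zero for every feasible direction. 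Since the objective is concave, this first-order condition certifies that $(b,x)$ is a global maximizer. Equivalently, this is the KKT system written with Theorem~\ref{thm:frac:equiv} in mind: the proportional-payment conditions are exactly the stationarity-plus-complementarity conditions of \eqref{eq:primal}.
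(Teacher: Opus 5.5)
Your overall plan is the paper's: check feasibility, then certify optimality through KKT conditions, with affordability, stability and weight--selection coupling acting as stationarity and complementary slackness. Your concern about the boundary $x_c=0$ is also legitimate, since the paper's own verification divides by $x_c$ there. As written, however, the argument does not close, because your multipliers are wrong.

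Stationarity in $b_{vc}$ fixes only the sums $\alpha_v+\beta_c$. The split between voter and candidate multipliers is pinned down by stationarity in $x_c$, which you never verify. Let $F$ be the objective of \eqref{eq:primal}. The Lagrangian that produces your $b$-equation is $F-\sum_v\alpha_v(\sum_c b_{vc}-1)-\sum_c\beta_c(\sum_v b_{vc}-\tfrac{n}{k}x_c)+\cdots$. For it, the $x_c$-equation is $\sum_v b_{vc}/x_c+\tfrac{n}{k}\beta_c-\lambda_c+\nu_c=0$. Hence for $x_c>0$ your choice forces $\lambda_c-\nu_c=\tfrac{n}{k}\bigl(1+\log(1/w_c)\bigr)$. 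Now take a candidate with $0<x_c<1$; such candidates exist, e.g.\ in Example~\ref{exp:multicore}. Complementary slackness requires $\lambda_c=\nu_c=0$, and coupling gives $w_c=1$, so the equation reads $0=\tfrac{n}{k}$. The sign you actually wrote fails in the same way. Equivalently, with your multipliers the Lagrangian dual bound exceeds the primal value as soon as some coordinate of $x$ is fractional. The constant $-1$ belongs in the candidate multiplier: take $\alpha_v=\log\phi_v(x)$ and $\beta_c=-1-\log w_c$. This gives $\lambda_c=\tfrac{n}{k}\log(1/w_c)\ge 0$, which is nonzero only when $w_c<1$, hence only when $x_c=1$. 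These are exactly the paper's $\lambda_v,\mu_c,\nu_c$ up to sign convention.

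Your fallback weak-duality paragraph does not rescue this as written. For feasible $(b',x')$ the multipliers drop out, so it is really a supergradient argument, and two things go wrong.
\begin{enumerate}
\item The displayed coefficient has the wrong sign. From $\sum_v b'_{vc}=\tfrac{n}{k}x'_c$ the term is $+\tfrac{n}{k}\sum_c x'_c\log(1/w_c)$. Only with this sign does $x'_c\le 1=x_c$ help when $w_c<1$.
\item The step hidden in ``$\cdots$'' is essential. Put $q_{vc}:=w_c/\phi_v(x)$ and $Q_c:=\tfrac{k}{n}\sum_{v:c\in A_v}q_{vc}$, and use the termwise bound $-b'_{vc}\log(b'_{vc}/x'_c)\le b'_{vc}\log(1/q_{vc})+q_{vc}x'_c-b'_{vc}$. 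This is $\log t\le t-1$, and it is trivial when $b'_{vc}=0$. It uses $q_{vc}$ even when $x_c=0$, where no linearization at $(b,x)$ exists. It is tight at $(b,x)$ because $b_{vc}=q_{vc}x_c$ for every $c\in A_v$. Summing with both constraint families gives $F(b',x')-F(b,x)\le\tfrac{n}{k}\sum_c(x'_c-x_c)\bigl(Q_c+\log(1/w_c)\bigr)$. The constant part $\tfrac{n}{k}\sum_c(x'_c-x_c)$ vanishes only because summing the constraints of \eqref{eq:primal} forces $\|x'\|_1=k=\|x\|_1$; you never state this.
\end{enumerate}
Once that is in place, your case analysis correctly shows $\tfrac{n}{k}\sum_c(x'_c-x_c)\bigl(Q_c-1+\log(1/w_c)\bigr)\le 0$. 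The summand is $x'_c(Q_c-1)\le0$ if $x_c=0$, zero if $0<x_c<1$, and $(x'_c-1)\log(1/w_c)\le0$ if $x_c=1$.
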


\begin{proof}
Since~\eqref{eq:primal} is a convex program that admits a strictly feasible point, the Karush–Kuhn–Tucker (KKT) conditions are necessary and sufficient for optimality.

\smallskip
\noindent\textit{Dual multipliers.} For each voter $v\in N$ and candidate $c$ let
\begin{itemize}
  \item $\lambda_v\in \real$ correspond to the constraint $\sum_{c\in A_v} b_{vc}=1$,
  \item $\mu_c\in \real$ correspond to the constraint $\sum_{v:\,c\in A_v} b_{vc}= \tfrac{n}{k}\,x_c$,
  \item $\pi_{vc}\ge 0$ correspond to $b_{vc}\ge 0$,
  \item $\nu_c\ge 0$ correspond to the upper bound $x_c\le 1$.
\end{itemize}

\smallskip
\noindent\textit{Choice of variables.} Using the proportional–payment solution $(w,x)$, set
\[
  b_{vc}:=\frac{w_c\,x_c}{\phi_v(x)}\,\mathbf 1_{\{c\in A_v\}},\qquad
  \lambda_v:=\log\!\bigl(\phi_v(x)\bigr),\qquad
  \mu_c:=1+\log w_c,
\]
\[
  \nu_c:=-\frac{n}{k}\,\log w_c\quad(\ge 0\text{ because }w_c\le 1),
  \qquad
  \pi_{vc}:=0 .
\]

\smallskip
\noindent\textit{Stationarity.} Let $\mathcal L$ be the Lagrangian. Differentiating w.r.t.\ $b_{vc}$ and $x_c$ (for $c\in A_v$) gives
\begin{align*}
  \frac{\partial\mathcal L}{\partial b_{vc}}: &
  \quad
  \log\!\Bigl(\tfrac{b_{vc}}{x_c}\Bigr)+1+\lambda_v-\mu_c-\pi_{vc}=0, \\[4pt]
  \frac{\partial\mathcal L}{\partial x_c}: &
  \quad
  \mu_c+\frac{k}{n}\,\nu_c=1 .
\end{align*}
Substituting the chosen multipliers verifies both equalities, so stationarity holds.

\smallskip
\noindent\textit{Complementary slackness.} Because $\pi_{vc}=0$, the condition $\pi_{vc}b_{vc}=0$ is immediate. For the upper‑bound constraint,
\[
  \nu_c(1-x_c)=0\quad\Longleftrightarrow\quad
  \bigl[-\tfrac{n}{k}\log w_c\bigr]\,(1-x_c)=0 .
\]
Hence either $x_c=1$ or $w_c=1$; both cases are consistent with proportional payments and satisfy the required product equality.

\smallskip
\noindent\textit{Feasibility.} Primal feasibility is built into the definitions of $b$ and $x$. Dual feasibility holds because $\nu_c,\pi_{vc}\ge 0$ and all unrestricted multipliers are free.

\smallskip
All KKT conditions are therefore met, so $(b,x)$ is a globally optimal solution of~\eqref{eq:primal}.
\end{proof}

\subsection{Connection with Lindahl equilibrium}   \label{sec:lindahl}

A specialized definition of Lindahl equilibrium for the Participatory Budgeting Problem is provided by \citep{fain2016core}, which always exists \citep{foley1970lindahl}. Building on that, the following further refines the concept for the fractional committee selection problem, which is referred to in \citep{fain2016core} as the ``saturating utility''.

\begin{definition}[saturating version of \citep{fain2016core}]   \label{def:lindahl}
    In a public goods market with budget $k$, per-voter prices $q_1,q_2,\dots,q_n$ each in ${\real^{+}}^{m}$ and allocation $x\in [0,1]^m$ constitute a Lindahl equilibrium if the following two conditions hold:
    \begin{enumerate}
      \item For every agent $v\in N$, the utility $u_v(y_v):=\sum_{c\in A_v}y_{vc}$ is maximized subject to $q_v^{T} y_v\leq \frac{k}{n}$ and $y_v\in [0,1]^m$ when $y_v=x$.
      \item The profit defined as $ (\sum_{v}q_v)^T z - \|z\|_1 $, subject to $z\in [0,1]^m$ and $\|z\|_{1}\leq k$ is maximized when $z=x$.
    \end{enumerate}
\end{definition}

We show that proportional payment implies Lindahl equilibrium.

\begin{theorem} \label{thm:lindahl}
    For $w\in (0, 1]^m$ and $x\in [0,1]^m$ where $\|x\|_1\leq k$, if $(w,x)$ is supported by a proportional payment, then there are per-voter prices $q_1,q_2,\dots,q_n$ each in ${\real^{+}}^{m}$ such that $(q,x)$ constitutes a Lindahl equilibrium.
\end{theorem}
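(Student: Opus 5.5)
The natural plan is to read the per-voter prices directly off the proportional payment. Given $(w,x)$ supported by a proportional payment, set $\phi_v:=\phi_v(x)>0$ (positive by the positive-utility condition). For each voter $v$ define
\[
q_{vc}:=\frac{k}{n}\cdot\frac{w_c}{\phi_v}\quad\text{if } c\in A_v,\qquad q_{vc}:=0\quad\text{otherwise},
\]
so that $q_{vc}x_c=\frac{k}{n}p_{vc}$, and check the two conditions of Definition~\ref{def:lindahl} separately. I should say up front that, as I will argue below, this direct choice fails the profit condition whenever some $w_c<1$, so this is a plan with an unresolved gap rather than a complete route.

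\emph{Budget and voter optimality.} Since $\sum_c p_{vc}=1$, we get $q_v^T x=\frac{k}{n}$, so $x$ is affordable for $v$. For any $y_v\in[0,1]^m$ with $q_v^Ty_v\le \frac{k}{n}$ we have $\sum_{c\in A_v}w_cy_{vc}\le\phi_v$. Let $W=\{c:w_c<1\}$; by weight-selection coupling $x_c=1$ on $W$. Splitting,
\[
\sum_{c\in A_v}y_{vc}=\sum_{c\in A_v}w_cy_{vc}+\sum_{c\in A_v\cap W}(1-w_c)y_{vc}\le\phi_v+\sum_{c\in A_v\cap W}(1-w_c)=\sum_{c\in A_v}x_c ,
\]
where the inequality uses $y_{vc}\le 1$, and the final equality is the identity $\phi_v+\sum_{c\in A_v\cap W}(1-w_c)x_c=\sum_{c\in A_v}x_c$ together with $x_c=1$ on $W$. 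So $x$ maximizes $u_v$ on $v$'s budget set. This argument mirrors the weight manipulation in the proof of Theorem~\ref{thm:frac:incore} and works as intended.

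\emph{Profit maximization.} The aggregate price is $P_c=\sum_vq_{vc}=\frac{k}{n}\sum_{v:c\in A_v}\frac{w_c}{\phi_v}$.
\begin{itemize}
\item If $x_c>0$, affordability gives $P_c=w_c$.
\item If $x_c=0$, then $w_c=1$ by coupling, and stability gives $P_c\le 1$.
\end{itemize}
Thus $P_c-1\le 0$ for every $c$, with equality for every $c\notin W$ that has $x_c>0$. The profit of $x$ is then $\sum_{c\in W}(w_c-1)<0$ whenever $W\neq\emptyset$, which is beaten by $z=0$. So the literal prices above fail the second condition.

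This step is the main obstacle. Any Lindahl prices must satisfy $P_c\ge1$ on $W$ for profit optimality, yet the voters' budgets $\frac{k}{n}$ are fully spent at the weighted prices. The route I would try first is to raise the per-voter prices on $W$ to $q_{vc}=\frac{k}{n}\cdot\frac{1}{\phi_v}$. These are still no more expensive per unit of utility than the marginal (unit-weight) candidates, so the saturating utility argument survives. I would then rebalance by lowering the prices on the $x_c=0$ candidates, which only need $P_c\le 1$.

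The remaining gap is the exact budget identity $q_v^Tx=\frac{k}{n}$: it must be restored by a compensating redistribution, and whether that is possible is the crux. If no redistribution works, I would check whether the definition intends the budget constraint as an inequality, or intends profit over $z$ with $z_c\le x_c$ on saturated goods. Under either reading the weighted prices above already suffice.
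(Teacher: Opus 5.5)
Your proposal does not yet prove the theorem, because you conclude that the direct prices fail the profit condition. That conclusion comes from an algebra slip, and the route you abandon is exactly the paper's proof. The claim ``if $x_c>0$, affordability gives $P_c=w_c$'' is wrong. Affordability reads $w_c x_c\sum_{v:c\in A_v}1/\phi_v=\frac{n}{k}x_c$. So for $x_c>0$ it gives $\sum_{v:c\in A_v}1/\phi_v=\frac{n}{k w_c}$, not $\frac{n}{k}$, and hence
\[
P_c=\frac{k}{n}\,w_c\sum_{v:c\in A_v}\frac{1}{\phi_v}=1 \qquad\text{whenever } x_c>0,
\]
in particular for every discounted candidate $c\in W$. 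This also follows from your own identity $q_{vc}x_c=\frac{k}{n}p_{vc}$: summing over $v$ and using affordability gives $P_cx_c=x_c$. You correctly derived $P_c\le 1$ when $x_c=0$ from stability and coupling. So every feasible $z$ has profit $\sum_c(P_c-1)z_c\le 0=\sum_c(P_c-1)x_c$, and $x$ maximizes profit. The second Lindahl condition therefore holds for the prices you first wrote down. There is no tension between needing $P_c\ge 1$ on $W$ and fully spent budgets. The discount lowers each supporter's per-unit price by the factor $w_c$, but affordability forces $\sum_{v:c\in A_v}1/\phi_v=\frac{n}{kw_c}$, so the total is still exactly the unit cost.

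You should therefore discard the final part of the proposal. The suggested repair would actually fail. With $q_{vc}=\frac{k}{n}\cdot\frac{1}{\phi_v}$ on $W$, one gets $q_v^\top x=\frac{k}{n}\bigl(1+\frac{1}{\phi_v}\sum_{c\in A_v\cap W}(1-w_c)\bigr)>\frac{k}{n}$ for every voter approving a discounted candidate. Lowering the prices of candidates with $x_c=0$ cannot compensate, since they do not enter $q_v^\top x$. No reinterpretation of Definition~\ref{def:lindahl} is needed either.

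Your voter-optimality step is correct and equivalent to the paper's. The paper shows that any $y_v$ with strictly higher utility costs strictly more than $\frac{k}{n}$, while you bound the utility of every affordable $y_v$ directly. Once you correct the value to $P_c=1$, your argument coincides with the paper's proof.
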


We defer the proof of this theorem to Appendix~\ref{sec:proof:lindahl}.

We note that the converse of this theorem does not hold, as illustrated by the following example, which is the same as Example \ref{exp:multicore}.

\begin{example}
  Consider the fraction case with $n=2$ voters, $m=3$ candidates, and $k=2$ committee members. Voter $v_1$ approves $\{c_1,c_2\}$, and $v_2$ approves $\{c_1,c_3\}$.

  Under a Lindahl Equilibrium, the outcome is $x=(1,1,0)$, supported by price vectors $p_1=(0,1,0), p_2=(1,0,1)$.

  However, the unique proportional payment solution is given by $w=(\frac{1}{2},1,1)$, $x=(1,\frac{1}{2},\frac{1}{2})$.
\end{example}

We next use the fractional Nash-core allocation and its proportional-payment prices to establish an exact discrete-core existence result for elections with at most eight voters.

\providecommand{\R}{\mathbb{R}}
\providecommand{\Z}{\mathbb{Z}}
\providecommand{\one}{\mathbf{1}}
\providecommand{\supp}{\operatorname{supp}}
\providecommand{\cl}{\operatorname{cl}}
\providecommand{\ri}{\operatorname{relint}}
\providecommand{\floor}[1]{\left\lfloor #1\right\rfloor}
\providecommand{\ceil}[1]{\left\lceil #1\right\rceil}
\providecommand{\cD}{\mathcal D}
\providecommand{\cU}{\mathcal U}
\providecommand{\cR}{\mathcal R}
\newif\ifbveightformal
\newcommand{\bveightclassificationtitle}{Informal}

\section{Discrete-Core Nonemptiness for Eight Equal-Weight Voters}
\label{sec:bv-core-eight}

We establish an exact discrete-core existence theorem for elections with at most eight equally weighted individual voters. \citet{becker2026core} prove the more general result that the discrete core is nonempty whenever there are at most seven voter types, allowing arbitrary voter weights. Consequently, an eight-voter election with a repeated approval set is already covered by their theorem. The genuinely additional case considered here therefore consists of eight equally weighted voters with eight distinct approval sets.

Our proof of the eight-voter case starts from a fractional Nash-core allocation and its proportional-payment prices. After choosing a fractional Nash-core pair with the minimum possible number of fractional candidates, the Nash first-order conditions reduce every unresolved instance to a positive-dual full-column-rank binary antichain with at most eight columns. We then verify the remaining eight-row floor regions using exact computer-assisted arguments. The principal new ingredients are fixed and adaptive price--saturation certificates, which strengthen the one-deficit price arguments sufficient in smaller dimensions.

\begin{restatable}{theorem}{bveightmain}
\label{thm:bv-main}
Every approval-based committee election with at most eight equally weighted voters has a committee in the discrete core.
\end{restatable}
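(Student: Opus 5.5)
We argue by strong induction on the number of voters $n\le 8$ and the committee size $k$, and split off everything already covered by earlier results. If some approval set is repeated, there are at most seven voter types, and the weighted theorem of \citet{becker2026core} applies directly. We may therefore assume eight voters with pairwise distinct approval sets.

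\textbf{Reduction step.} If Assumption~\ref{assup:1} fails, pick a coalition $S$ with $|\bigcup_{v\in S}A_v|\le k|S|/n$. Choose $S$ maximal, or minimal, as convenient. Put all of $U=\bigcup_{v\in S}A_v$ into the committee. Then recurse on $N\setminus S$ with candidates $C\setminus U$ and budget $k-|U|$, which is at least $k(n-|S|)/n$. A blocking coalition $T$ against the combined committee splits into $T\cap S$ and $T\setminus S$. Members of $T\cap S$ are already saturated and cannot strictly improve. The set $T\setminus S$ must then block the residual instance at the residual price, which is no higher than the original price. This contradicts the inductive hypothesis.

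\textbf{Fractional starting point.} Once Assumption~\ref{assup:1} holds, Theorem~\ref{thm:frac:exist} gives a fractional Nash-core pair $(w,x)$. Among all such pairs we choose one minimizing the number of fractional coordinates $F=\{c:0<x_c<1\}$. By Theorem~\ref{thm:frac:equiv}, this pair carries proportional payments. Affordability and stability hold with equality on $F$, since $w_c=1$ there.

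Minimality of $|F|$ forces structure on the support. Let $M$ be the voter-by-candidate incidence matrix restricted to $F$. If $M$ has a nontrivial kernel direction that preserves $\|x\|_1$, then moving $x$ along it keeps the Nash first-order conditions. This holds because the Nash objective is strictly concave only in the utilities $\phi_v$, so the optimal utility vector is unique. Moving along such a direction drives some coordinate to $0$ or $1$, contradicting minimality. Hence $M$ has full column rank, and in particular $|F|\le n\le 8$.

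Similarly, two fractional candidates with nested approver sets can be merged or shifted, so the columns form an antichain. The dual prices $1/\phi_v$ are positive. After rounding $x$ down on $F$, the residual problem is to choose $r=k-\lfloor\cdot\rfloor$ columns of a binary antichain matrix with at most eight columns. The floor region is the set of residual budgets and utility offsets, and the chosen columns must admit no blocking coalition.

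\textbf{Certifying a rounding.} For each such residual configuration we must exhibit a rounding that lies in the discrete core. The certificate is a price argument: any blocking $(T,y)$ would need the payments $\sum_{v\in T}p_{vc}$ to cover $\frac nk\|y\|_1$. Since each voter holds only one dollar, a one-deficit bound suffices when every $v\in T$ loses at most one unit relative to $x$. When voters lose more, we use price-saturation certificates: fixed or adaptively chosen dual vectors showing that the total willingness to pay of $T$ falls short of $\frac nk\|y\|_1$.

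Because the matrix has at most eight rows and columns, the antichain matrices with full column rank form a finite list up to symmetry, and the floor regions are polyhedral. An exact computer-assisted enumeration can then check every case with rational arithmetic.

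\textbf{Main obstacle.} The hardest part is this final classification for eight rows. The simple one-deficit price certificates suffice for seven or fewer voters but fail on some eight-row regions. For those regions the plan is to first try fixed saturation certificates obtained by linear programming over the region. Where these fail, we subdivide the region adaptively, choosing a different rounding and certificate on each piece, until the whole region is covered.
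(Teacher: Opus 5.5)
Your overall architecture matches the paper's. You dispose of repeated approval sets and of Assumption~\ref{assup:1} violations via \citet{becker2026core} and Lemma~\ref{lem:bv-assumption-one-reduction}. You take a Nash-core pair with the fewest fractional coordinates and reduce to a positive-dual full-column-rank antichain with $d\le 8$. You finish with an exact computer census using fixed and adaptive certificates.

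The genuine gap is the step that turns ``no coalition blocks the rounding $W=I\cup\supp(z)$'' into a finite check on residual data. A blocking committee may discard any number of fully selected candidates in $I$, whose discounted weights are arbitrary, and may add candidates from $O$; meanwhile $k$ and $|I|$ are unbounded. The payment argument you sketch is the fractional-core argument for $x$. It says nothing about $W$, where the deficient voter has $(Az)_i+1-(Af)_i=-\delta_i\le0$. So knowing that the antichains form a finite list and the regions are polyhedral does not yet make the statement finite.

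The paper supplies the missing lemma, Proposition~\ref{prop:bv-blocker-conditions}, in two parts.
\begin{itemize}
\item Part (i) is $\sum_{i\in S}\alpha_i\bigl((Az)_i+1-(Af)_i\bigr)\le0$. It bounds the coalition's $\alpha$-mass on $R=T\setminus I$ by $|R|$ via \eqref{eq:bv-nash-price}, lower-bounds the mass lost on $D=I\setminus T$ through $\beta_i=t/\alpha_i-r_i$ and $w_c\le1$, and uses $|T|\le t|S|$.
\item Part (ii) is $\ceil{t/\alpha_i+q_i-r_i}+1\le\floor{ts}$, because voter $i$ already receives $g_i\ge\ceil{t/\alpha_i-r_i}$ candidates from $I$.
\end{itemize}
Since (ii) still involves $t=k/8$, Lemma~\ref{lem:bv-lambda} removes $k$. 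It minimizes $k/\bigl(8(\floor{ks/8}+a)\bigr)$ over $k\ge\kappa$ within each residue class mod $8$, yielding the thresholds $\lambda_{\kappa,s,a}$ checked in Theorem~\ref{thm:bv-eight-criterion}. Your certificate descriptions do not match this. A one-deficit rounding is a $z^i$ with $Az^i\ge h-e_i$, i.e.\ a single voter one below her floor. Saturation is the individual cardinality bound in (ii), not a willingness-to-pay shortfall.

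Relatedly, the prices are not yours to choose. The residual data determine $\alpha$ only up to $\cD(A)$: a point for $d=8$, one-dimensional for $d=7$, two-dimensional for $d=6$. Every certificate must therefore hold uniformly over $\alpha\in\cD(A)$ and $f\in\cR_{<}(A,\kappa,h)$. Reducing that to finitely many exact checks needs the reductions of Proposition~\ref{prop:bv-polyhedral}: closure of the open region, concavity in $\alpha$, recession directions on zero rows, and vertex pairs.

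Your fallback of adaptively subdividing regions can stall where margins vanish on shared boundaries. The paper instead uses the aggregate condition $\sum_{i\in E}G_i(\alpha,f)>0$, which forces some $G_i>0$ pointwise without partitioning.

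A smaller slip: full column rank also requires excluding kernel vectors $\gamma$ with $\one^\top\gamma\ne0$, not only budget-preserving ones. Moving along such a $\gamma$, oriented so that $\one^\top\gamma<0$, keeps every utility while freeing budget. Rescaling then strictly increases Nash welfare, contradicting optimality.

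Even with all of this in place, whether the census closes is exactly the content of Lemma~\ref{lem:bv-ca-eight}. There, $1{,}085{,}459$ regions receive fixed certificates, $24$ receive adaptive ones, and the failure set is empty. Your proposal describes this as a plan rather than establishing it.
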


The theorem allows arbitrary committee size $k$ and an arbitrary number $m$ of candidates. It is parameterized by the number of individual equal-weight voters, not by the number of voter types with arbitrary multiplicities.

\subsection{Reduction to a residual rounding problem}
\label{sec:bv-reduction}

For a candidate $c$, write $N(c):=\{i\in N:c\in A_i\}$.

\begin{lemma}[Floor rounding]\label{lem:bv-floor-rounding}
Let $x\in[0,1]^m$ lie in the fractional core.  If an integral committee $W\subseteq C$ satisfies $|W|\le k$ and $|W\cap A_i|\ge\floor{u_i(x)}$ for every $i \in N$, then $W$ lies in the discrete core.
\end{lemma}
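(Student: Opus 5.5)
Suppose, for contradiction, that $W$ is not in the discrete core. Then there is a nonzero $y\in\{0,1\}^m$ and a voter set $S$ with $|S|\ge \frac{n}{k}\|y\|_1$ such that every $i\in S$ strictly prefers $y$ to $W$, i.e.\ $|A_i\cap Y|>|A_i\cap W|$, where $Y$ is the support of $y$. The plan is to turn this integral blocking pair into a fractional blocking pair against $x$, which contradicts the assumption that $x$ lies in the fractional core (Definition~\ref{def:fraccore}). The same $y$, read as a vector in $[0,1]^m$, serves as the fractional deviation.

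The single key step is to compare utilities. For each $i\in S$ the counts are integers, so
\[
|A_i\cap Y|\ge |A_i\cap W|+1\ge \floor{u_i(x)}+1>u_i(x).
\]
Hence $\sum_{c\in A_i}y_c>\sum_{c\in A_i}x_c$ for every $i\in S$. The set of voters strictly preferring $y$ to $x$ therefore contains $S$, and its size is at least $|S|\ge\frac{n}{k}\|y\|_1$. This violates the strict inequality required of $x$ in Definition~\ref{def:fraccore}, which is the desired contradiction.

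Finally, I will check the side conditions. $W$ is an admissible discrete committee because $|W|\le k$ by hypothesis. The vector $y$ is nonzero, integral, and lies in $[0,1]^m$, so it is a legitimate fractional deviation. The committee size $k$ is the same in both definitions, so the price $\frac{n}{k}$ per unit is the same.

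I do not expect a real obstacle. The only point needing care is the integrality step $\floor{u}+1>u$, which is exactly where the floor rounding hypothesis is used. Strict integer preference upgrades to strict fractional preference at no cost, so the proof is a direct reduction.
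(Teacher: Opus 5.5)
Your proof is correct and is essentially the paper's argument. Both use the integrality step $|Y\cap A_i|\ge |W\cap A_i|+1\ge\lfloor u_i(x)\rfloor+1>u_i(x)$ to show that the integral blocking committee, read as a fractional deviation, blocks $x$ through the same coalition, which contradicts fractional-core stability.
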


\begin{proof}
Suppose that an integral committee $T$ blocks $W$ through a coalition $S$. For every $i\in S$, integrality gives
\[
  |T\cap A_i|\ge |W\cap A_i|+1
  \ge \floor{u_i(x)}+1>u_i(x).
\]
The same committee $T$, viewed as a fractional committee, therefore blocks $x$ through $S$, contradicting fractional-core stability.
\end{proof}

\begin{lemma}[Reduction when Assumption~\ref{assup:1} fails]
\label{lem:bv-assumption-one-reduction}
Let $S\subsetneq N$ be nonempty, put $U:=\bigcup_{i\in S}A_i$, and suppose $|U|\le k|S|/n$.  Define $N':=N\setminus S$, $C':=C\setminus U$, and $k':=k-|U|$. If the residual election $(N',C',k')$ has a core committee $W'$, then $U\cup W'$ is a core committee of the original election.
\end{lemma}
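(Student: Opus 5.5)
Let $W:=U\cup W'$. We first check that $W$ is a feasible committee, and then show that any blocking pair against $W$ yields a blocking pair against $W'$ in the residual election.

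\textbf{Feasibility.} Because $W'\subseteq C'$ and $C'$ is disjoint from $U$, we have $|W|=|U|+|W'|\le |U|+k'=k$. We also need $k'\ge 0$. This holds because $|U|\le k|S|/n\le k$.

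\textbf{Setting up a contradiction.} Suppose a nonempty integral committee $T$ blocks $W$ through a coalition $S^*\subseteq N$. So every $i\in S^*$ satisfies $|T\cap A_i|>|W\cap A_i|$, and $|S^*|\ge \frac{n}{k}|T|$.

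\textbf{Voters in $S$ cannot deviate.} For $i\in S$ we have $A_i\subseteq U\subseteq W$. Hence $|W\cap A_i|=|A_i|\ge |T\cap A_i|$, so $i$ cannot strictly prefer $T$. Therefore $S^*\subseteq N'$.

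\textbf{Restricting $T$ to the residual.} Put $T':=T\setminus U\subseteq C'$. For $i\in S^*$, the part of $A_i$ inside $U$ is entirely contained in $W$. This gives
\[
|T'\cap A_i| = |T\cap A_i|-|T\cap U\cap A_i| \ge |T\cap A_i|-|U\cap A_i| > |W\cap A_i|-|U\cap A_i| = |W'\cap A_i|.
\]
So every member of $S^*$ strictly prefers $T'$ to $W'$. In particular $T'\ne\emptyset$, since each $i\in S^*$ gains at least one approved candidate. The strict inequality here should be read in $C'$, that is, with $A_i\cap C'$.

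\textbf{The affordability inequality (main obstacle).} It remains to show $|S^*|\ge \frac{n'}{k'}|T'|$, where $n'=|N'|=n-|S|$. This is where the hypothesis $|U|\le k|S|/n$ enters. It gives
\[
k'=k-|U|\ge k-\frac{k|S|}{n}=\frac{k(n-|S|)}{n}=\frac{kn'}{n},
\]
so $\frac{n'}{k'}\le\frac{n}{k}$. Combining with $|T'|\le|T|$,
\[
\frac{n'}{k'}|T'|\le\frac{n}{k}|T|\le |S^*|.
\]
Thus $T'$ blocks $W'$ in the residual election through $S^*$, contradicting $W'$ being a core committee.

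\textbf{Degenerate cases.} If $k'=0$, then $W'=\emptyset$. In that case any nonempty $T'$ counts as blocking only when the price $n'/k'$ is read as infinite. Since $\frac{n'}{k'}\le\frac nk$ has just been shown to be finite, $k'=0$ forces $n'=0$, which contradicts $S\subsetneq N$. Hence $k'>0$, and we only need to check that $k'$ meets whatever convention the paper adopts for the committee size (for example, integrality in the discrete setting). Since $|U|$ is an integer, $k'$ is an integer whenever $k$ is.
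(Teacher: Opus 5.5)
Your proof is correct and follows essentially the same route as the paper: voters in $S$ already have every approved candidate and cannot improve, the deviation is restricted to $T'=T\setminus U$ via the same utility computation, and the hypothesis $|U|\le k|S|/n$ gives $k'/|N'|\ge k/n$, so $T'$ is affordable in the residual election for the same coalition. Your extra checks ($|U\cup W'|\le k$, $T'\neq\emptyset$, and $k'>0$) are details the paper leaves implicit; note that $k'>0$ follows directly from $k'\ge k|N'|/n>0$, which is cleaner than your finiteness remark, since that remark divides by $k'$ before knowing $k'\neq 0$.
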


\begin{proof}
Every voter in $S$ receives every candidate she approves and therefore cannot strictly improve.  Moreover,
\[
  \frac{k'}{|N'|}=\frac{k-|U|}{n-|S|}\ge\frac{k}{n}.
\]
Suppose that a coalition $Q\subseteq N'$ blocks $U\cup W'$ through a committee $T$.  Put $T':=T\setminus U$.  For every $i\in Q$,
\begin{align*}
 |T'\cap A_i|-|W'\cap A_i|
 &=|T\cap A_i|-|(U\cup W')\cap A_i|\\
 &\quad+|U\cap A_i|-|(T\cap U)\cap A_i|>0.
\end{align*}
Also,
\[
  |T'|\le |T|\le\frac{k|Q|}{n}
  \le\frac{k'|Q|}{|N'|}.
\]
Thus $T'$ blocks $W'$ in the residual election, a contradiction.
\end{proof}

Let $C^+:=\bigcup_{i\in N}A_i$.  Candidates outside $C^+$ may be deleted.  If $|C^+|\le k$, selecting every candidate in $C^+$ is core-stable. Thus, under Assumption~\ref{assup:1}, only the case $|C^+|>k$ remains.

By Theorem~\ref{thm:frac:exist}, a fractional Nash-core pair exists. Among all such pairs, choose one whose allocation has the minimum possible number of fractional coordinates. Theorem~\ref{thm:frac:incore} implies that $x$ lies in the fractional core, while Theorem~\ref{thm:frac:equiv} and the positive-utility condition in Definition~\ref{def:payment} imply that every $\phi_i(x)$ is positive. The Nash optimality condition in Definition~\ref{def:nashcore} implies $\|x\|_1=k$: otherwise scaling $x$ to use the full budget would strictly increase every weighted utility.

Define $t:=k/n$, $\alpha_i:=t/\phi_i(x)$, and $\alpha(R):=\sum_{i\in R}\alpha_i$. The affordability and stability conditions in Definition~\ref{def:payment} give, for every candidate $c$,
\begin{equation}\label{eq:bv-nash-price}
  w_c\alpha(N(c))=1\quad\text{if }x_c>0,
  \qquad
  w_c\alpha(N(c))\le1\quad\text{if }x_c=0.
\end{equation}
The weight--selection coupling in Definition~\ref{def:payment} also gives
\begin{equation}\label{eq:bv-price-by-status}
  \alpha(N(c))=1\ \text{if }0<x_c<1,
  \qquad
  \alpha(N(c))\le1\ \text{if }x_c=0.
\end{equation}

Put $I:=\{c:x_c=1\}$, $F:=\{c:0<x_c<1\}$, and $O:=\{c:x_c=0\}$. Let $d:=|F|$ and identify the candidates in $F$ with $[d]$.  Let $A\in\{0,1\}^{n\times d}$ be the voter-by-fractional-candidate incidence matrix, so $A_{ic}=1$ exactly when voter $i$ approves fractional candidate $c$. Let $f:=x_F\in(0,1)^d$, $\kappa:=\one^\top f=k-|I|\in\Z_{\ge0}$, and $r:=Af$. Write $h:=\floor{r}\in\Z^n_{\ge0}$ and $\delta:=r-h$, so $0\le\delta_i<1$. For each voter $i$, let $g_i:=|I\cap A_i|$ and $\beta_i:=\sum_{c\in I\cap A_i}w_c$. Every fractional candidate has weight one, and therefore
\begin{equation}\label{eq:bv-beta-relation}
  \alpha_i(\beta_i+r_i)=t,
  \qquad
  \beta_i=\frac{t}{\alpha_i}-r_i,
  \qquad
  g_i\ge\beta_i.
\end{equation}

For $z\in\{0,1\}^d$, call $Az$ its residual utility vector.  We say that the integer vector $h$ is \emph{implementable at budget $\kappa$} if there exists $z\in\{0,1\}^d$ with $\one^\top z\le\kappa$ and $Az\ge h$.  Such a selection may be padded to exactly $\kappa$ residual candidates without decreasing any utility.

If $h$ is implementable, set $W:=I\cup\supp(z)$.  Then $|W|\le k$ and
\[
  |W\cap A_i|=g_i+(Az)_i
  \ge g_i+h_i
  =\floor{g_i+r_i}
  =\floor{u_i(x)}.
\]
Lemma~\ref{lem:bv-floor-rounding} then gives a discrete-core committee.  We may therefore restrict attention to nonimplementable $h$.

\begin{restatable}[Minimum-fractionality reduction]{proposition}{bvreducedmatrix}
\label{prop:bv-reduced-matrix}
If $h$ is not implementable, then:
\begin{enumerate}[label=\textup{(\roman*)}]
\item the columns of $A$ are pairwise distinct and form an antichain under support inclusion;
\item $A$ has full column rank, and hence $d\le n$;
\item $2\le\kappa\le d-2$, and hence $4\le d\le n$;
\item $A^\top\alpha=\one$ for some $\alpha\in\R^n_{>0}$.
\end{enumerate}
\end{restatable}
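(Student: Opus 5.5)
The plan is to argue each item by contradiction. Whenever an item fails, we either exhibit an implementable selection directly, or we build another fractional Nash-core pair with strictly fewer fractional coordinates. The second option contradicts the choice of $(w,x)$. For the perturbation arguments we use the characterization of Theorem~\ref{thm:frac:equiv} and the price conditions \eqref{eq:bv-nash-price}--\eqref{eq:bv-beta-relation}. If we move $x_F$ along a direction $\ell$ with $A\ell=0$ and $\one^\top\ell=0$, then $r=Af$ is unchanged, so every $\phi_i$ is unchanged. Every $\alpha_i$ is therefore unchanged, and the conditions of Definition~\ref{def:payment} continue to hold as long as the coordinates stay in $[0,1]$; the weights on $F$ are $1$ and stay $1$. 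A coordinate that reaches $0$ enters $O$ with $\alpha(N(c))=1\le 1$. A coordinate that reaches $1$ enters $I$ with $w_c=1$, and affordability $w_c\alpha(N(c))=x_c=1$ holds. So we move until some coordinate hits $0$ or $1$, and fractionality strictly drops.

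\emph{Item (ii).} Suppose $A\ell=0$ for some nonzero $\ell$. By \eqref{eq:bv-price-by-status}, each $c\in F$ satisfies $\alpha(N(c))=1$, i.e.\ $A^\top\alpha=\one$. Hence $\one^\top\ell=\alpha^\top A\ell=0$ automatically, and the perturbation above applies, contradicting minimality. Thus $A$ has full column rank and $d\le n$. Item (iv) is exactly $A^\top\alpha=\one$, and $\alpha_i=t/\phi_i>0$ by positive utility.

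\emph{Item (i).} If two columns coincide, $e_c-e_{c'}$ lies in the kernel, which is already excluded by (ii). Now suppose $N(c)\subsetneq N(c')$ for $c,c'\in F$. Then $\alpha(N(c'))-\alpha(N(c))=\alpha(N(c')\setminus N(c))>0$ because $\alpha>0$, contradicting $\alpha(N(c))=\alpha(N(c'))=1$.

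\emph{Item (iii).} First, $\kappa$ is an integer because $\|x\|_1=k$. Since $0<f_c<1$ for every $c\in F$, $\kappa=0$ forces $d=0$, and then $h=0$ is implementable. If $\kappa=d$, the selection $z=\one$ gives $Az\ge r\ge h$. So $1\le\kappa\le d-1$, and it remains to exclude $\kappa=1$ and $\kappa=d-1$. For $\kappa=1$: $\one^\top f=1$ with each $A_{ic}\in\{0,1\}$ gives $r_i\le 1$, with equality only if $N(c)\ni i$ for every $c\in F$. If some $h_i=1$, the single selected candidate covers every such voter, and all other $h_i$ are $0$, so $h$ is implementable. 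For $\kappa=d-1$: write $1-f_c=\epsilon_c$, so that $\sum\epsilon_c=1$ and $r_i=|N^{-1}(i)\cap F|-\sum_{c\in A_i\cap F}\epsilon_c$. Dropping a single candidate $c$ costs voter $i$ exactly $1$ if $c\in A_i$. Voter $i$ needs $h_i=\floor{r_i}$, which equals $\deg_F(i)-1$ when $i$ approves some fractional candidate (since $0<\sum_{A_i}\epsilon_c\le 1$; the case $\sum=1$ gives exactly $\deg-1$), and equals $0$ otherwise. Dropping any one candidate therefore meets all requirements. Both extremes are implementable, so $2\le\kappa\le d-2$ and $d\ge4$.

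The main obstacle is the perturbation step in (ii): we must check carefully that stopping at a boundary preserves all four conditions of Definition~\ref{def:payment}, in particular that the new $\|x\|_1$ is still $k$ and the coupling holds. The invariance of $\phi$ makes this routine. The edge cases of (iii) are elementary counting.
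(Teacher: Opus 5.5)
Your proof is correct. It reaches the result by a somewhat different route: you build everything on the dual identity $A^\top\alpha=\one$, whereas the paper uses primal perturbations.

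\emph{The paper's route.} It proves (i) and (ii) directly from Nash optimality. Identical supports are merged by a mass transfer. A proper containment $N(a)\subsetneq N(b)$ is ruled out by shifting a little mass from $a$ to $b$, which weakly raises every weighted utility and strictly raises some. For a kernel vector $\gamma$ it splits into two cases. If $\one^\top\gamma=0$, the number of fractional coordinates drops. If $\one^\top\gamma<0$, one gets a Nash optimum that leaves budget unused, which scaling contradicts.

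\emph{Your route.} You establish (iv) first, from \eqref{eq:bv-price-by-status}. You then note that $\one^\top\ell=\alpha^\top A\ell=0$ for every kernel vector, so the paper's second case is actually vacuous. Identical columns become a special case of (ii). A proper containment is excluded in one line, since $\alpha(N(c')\setminus N(c))>0$ while both $\alpha(N(c))$ and $\alpha(N(c'))$ equal $1$.

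\emph{Trade-offs.} Your approach is more economical: there is only one perturbation argument, and the antichain property follows immediately from strict positivity of the Nash prices. The paper's approach makes preservation of the Nash-core property automatic. Unchanged weighted utilities and an unchanged budget give an unchanged objective value, so optimality carries over. You instead have to re-check positive utility, affordability, stability, and coupling from Definition~\ref{def:payment} at the boundary point and then invoke Theorem~\ref{thm:frac:equiv}. You do this correctly: fractional candidates have weight one, and $\alpha(N(c))=1$ makes affordability hold for any value of $x_c$. Part (iii) is handled the same way in both proofs.
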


\begin{proof}[Proof sketch]
For part~\textup{(i)}, identical supports can be merged by transferring mass between the two coordinates until one becomes integral, without changing the budget or any voter utility.  If one support is properly contained in another, transferring a small amount of mass toward the larger support weakly improves all weighted utilities and strictly improves some of them, contradicting Nash optimality.

For part~\textup{(ii)}, a nonzero vector in the kernel of $A$ gives a direction along which all voter utilities are unchanged.  Orienting this direction to have nonpositive total mass and moving until a coordinate reaches the boundary either reduces the number of fractional coordinates or produces a Nash optimum that does not exhaust the budget; the latter could be scaled up, again a contradiction.  Hence $A$ has full column rank.  The endpoint cases $\kappa\in\{0,1,d-1,d\}$ are directly implementable, which yields part~\textup{(iii)}.  Finally, every fractional candidate has weight one, so the Nash-price identity gives $\alpha(N(c))=1$ for every column, equivalently $A^\top\alpha=\one$ with $\alpha>0$.  Full details are given in Appendix~\ref{sec:proof:bv-reduced-matrix}.
\end{proof}

For a fixed residual matrix $A$, define its positive-dual domain by
\begin{equation}\label{eq:bv-dual-domain}
  \cD(A):=\{\alpha\in\R^n_{>0}:A^\top\alpha=\one\}.
\end{equation}
The exact classification therefore needs to enumerate only full-column-rank binary antichains with $4\le d\le n$, $2\le\kappa\le d-2$, and $\cD(A)\ne\varnothing$.

For each such matrix and residual budget, we compare a fractional allocation with the utility vectors attainable by integral selections. Fix a full-column-rank antichain $A\in\{0,1\}^{n\times d}$ and an integer residual budget $\kappa$, and define $\cU(A,\kappa):=\{Az:z\in\{0,1\}^d,\ \one^\top z=\kappa\}$ to be the finite set of utility vectors obtainable by selecting exactly $\kappa$ residual candidates. An integer vector $h\ge0$ is implementable exactly when some $u\in\cU(A,\kappa)$ satisfies $u\ge h$. A nonimplementable vector $h$ is coordinatewise minimal if no smaller nonnegative integer vector is nonimplementable.

For an integer utility profile $h$, we define the fractional allocation region indexed by $h$ and its closed relaxation to be
\begin{align*}
 \cR_{<}(A,\kappa,h)
 &:={}
 \{f\in(0,1)^d:
 \one^\top f=\kappa,
 \ h\le Af<h+\one\},\\
 \cR_{\leq}(A,\kappa,h)
 &:={}
 \{f\in[0,1]^d:
 \one^\top f=\kappa,
 \ h\le Af\le h+\one\}.
\end{align*}
All vector inequalities are componentwise.

\subsection{Price and saturation certificates}
\label{sec:bv-rounding-certificates}

We first derive two conditions that any coalition blocking a residual committee must satisfy.

Fix a residual integral committee $z\in\{0,1\}^d$ with $\one^\top z=\kappa$, and put $q:=Az$ and $W:=I\cup\supp(z)$. Thus, $q_i=(Az)_i$ is the number of selected candidates from $F$ that voter $i$ approves, and $W$ is an integral committee.

\begin{restatable}[Necessary conditions for a blocking coalition]{proposition}{bvblockerconditions}\label{prop:bv-blocker-conditions}
If an integral committee $T$ blocks $W$ through a coalition $S$ of size $s$, then
  \begin{enumerate}[label=(\roman*)]
  \item $\sum_{i\in S}\alpha_i(q_i+1-r_i)\le0$;
  \item Every $i\in S$ satisfies $\ceil{t/\alpha_i+q_i-r_i}+1\le \floor{ts}$.
  \end{enumerate}
\end{restatable}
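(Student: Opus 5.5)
The plan is to compare the blocking committee $T$ with the fractional Nash-core pair $(w,x)$ through the dual prices $\alpha_i=t/\phi_i(x)$. Part (i) is a price (budget) argument in the spirit of the proof of Theorem~\ref{thm:frac:incore}. Part (ii) is a counting argument based on integrality. Throughout, $T$ blocks $W$ through $S$, so $|T|\le ts$, and every $i\in S$ satisfies
\[
|T\cap A_i|\ge |W\cap A_i|+1=g_i+q_i+1 .
\]

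For (i), I first bound the total weighted utility of $T$ using the Nash price identity \eqref{eq:bv-nash-price}. That identity gives $w_c\alpha(N(c))\le 1$ for every candidate $c$. Exchanging the order of summation,
\[
\sum_{i\in N}\alpha_i\phi_i(T)=\sum_{c\in T}w_c\,\alpha(N(c))\le |T|\le ts,
\]
where $\phi_i(T)=\sum_{c\in T\cap A_i}w_c$.

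Next I bound $\phi_i(T)$ from below for $i\in S$, exactly as in the proof of Theorem~\ref{thm:frac:incore}. Only candidates in $I$ can have $w_c<1$, and those have $x_c=1$. Hence each term $w_c(\mathbf 1_{c\in T}-x_c)$ is at least $\mathbf 1_{c\in T}-x_c$. Summing over $A_i$ gives
\[
\phi_i(T)-\phi_i(x)\ \ge\ |T\cap A_i|-u_i(x)\ \ge\ (g_i+q_i+1)-(g_i+r_i)=q_i+1-r_i .
\]
Multiplying by $\alpha_i$ and using $\alpha_i\phi_i(x)=t$ yields $\alpha_i\phi_i(T)\ge t+\alpha_i(q_i+1-r_i)$.

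Finally I combine the two bounds. Voters outside $S$ contribute nonnegatively, so I drop them and sum over $S$:
\[
ts+\sum_{i\in S}\alpha_i(q_i+1-r_i)\le \sum_{i\in N}\alpha_i\phi_i(T)\le ts,
\]
which is (i).

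For (ii), I use integrality twice. First, $|T|$ is an integer with $|T|\le ts$, so $|T|\le\floor{ts}$. Second, \eqref{eq:bv-beta-relation} gives $g_i\ge\beta_i=t/\alpha_i-r_i$, and $g_i$ is an integer, so $g_i\ge\ceil{t/\alpha_i-r_i}$. Since $q_i$ is also an integer, this gives $g_i+q_i\ge\ceil{t/\alpha_i+q_i-r_i}$. Chaining these, for each $i\in S$,
\[
\floor{ts}\ge|T|\ge|T\cap A_i|\ge g_i+q_i+1\ge \ceil{t/\alpha_i+q_i-r_i}+1 .
\]

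The only delicate step is the per-voter inequality $\phi_i(T)-\phi_i(x)\ge|T\cap A_i|-u_i(x)$. It depends on the weight--selection coupling, which ensures that discounted candidates are fully selected, so no case can give a worse bound. Everything else is bookkeeping.
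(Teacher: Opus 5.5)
Your proof is correct. Part (ii) is exactly the paper's argument: integrality of $|T|$ gives $|T|\le\lfloor ts\rfloor$, and integrality of $g_i$ together with $g_i\ge\beta_i=t/\alpha_i-r_i$ gives the ceiling bound.

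For part (i) you take a more streamlined route than the paper.

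\emph{The paper's route.} It splits the deviation into newly added candidates $R=T\setminus I$ and dropped fully selected candidates $D=I\setminus T$. It bounds the $\alpha$-price of $R$ by $|R|$, using $\alpha(N(c))\le 1$ on $F\cup O$. It then lower-bounds the lost normalized payment on $D$ by $\max\{0,B_S-|I\setminus D|\}$, where $B_S=\sum_{i\in S}\alpha_i\beta_i$. This forces a two-case analysis on the sign of $B_S-|I\setminus D|$.

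\emph{Your route.} You compare the weighted utility of the whole committee $T$ with that of $x$, voter by voter, exactly as in the proof of Theorem~\ref{thm:frac:incore}. You then charge all of $T$ at once through $\sum_{c\in T}w_c\,\alpha(N(c))\le|T|$. The termwise inequality $w_c(\mathbf{1}_{c\in T}-1)\ge \mathbf{1}_{c\in T}-1$ on $I$ does the work of the paper's payment bound on $D$. The retained candidates in $I\setminus D$ are priced together with $R$, so no case split arises.

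\emph{Comparison.} Both arguments use the same three facts: $w_c\,\alpha(N(c))\le 1$ for every $c$; $w_c<1$ only when $x_c=1$; and $\alpha_i\phi_i(x)=t$. Both yield the same intermediate bound $\sum_{i\in S}\alpha_i\bigl(|T\cap A_i|-g_i-r_i\bigr)\le |T|-ts$. Your version is shorter. It also shows that (i) is just the fractional-core price argument, with the integral improvement $q_i+1$ replacing a strict inequality. The paper's version keeps the explicit accounting of added versus removed candidates, which matches its interpretation of the price margin as a net Nash-price balance.
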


\begin{proof}[Interpretation and proof sketch]
Part~\textup{(i)} says that a blocking coalition must have nonpositive net Nash-price margin after accounting for candidates added to and removed from the fractional benchmark.  To see this, write $R:=T\setminus I$ for the new candidates and $D:=I\setminus T$ for the fully selected candidates that are removed.  Nash prices bound the coalition's total price for $R$ by $|R|$, while its normalized payments toward $I$ give a lower bound on the payments lost through $D$.  Combining these bounds with $|T|\le t|S|$ and comparing with the utility gain $u_i(R)-u_i(D)\ge q_i+1$ yields part~\textup{(i)}.

Part~\textup{(ii)} is the corresponding individual saturation condition.  By \eqref{eq:bv-beta-relation}, voter $i$ already receives at least $\ceil{t/\alpha_i-r_i}+q_i$ approved candidates from the incumbent committee. Strict improvement therefore requires one additional approved candidate, and this demand cannot exceed the coalition's cardinality entitlement $\floor{ts}$.  The complete payment accounting is given in Appendix~\ref{sec:proof:bv-blocker-conditions}.
\end{proof}

For a coalition $S$ and a residual committee $z$, define the price margin
\[
  P_{S,z}(\alpha,f)
  :=\sum_{j\in S}\alpha_j\bigl((Az)_j+1-(Af)_j\bigr).
\]
The price filter says that a blocking coalition must have nonpositive price margin.

These conditions give two ways to construct a core committee from the fractional allocation.

\begin{definition}\label{def:bv-rounding-vectors}
Let $h$ be nonimplementable, and define
\[
  B(h):=\{i:h_i>0\text{ and }h-e_i\text{ is implementable}\}.
\]
For $i\in B(h)$, let $z^i\in\{0,1\}^d$ denote any vector satisfying $\one^\top z^i=\kappa$ and $Az^i\ge h-e_i$. Necessarily $(Az^i)_i=h_i-1$ and $(Az^i)_j\ge h_j$ for every $j\ne i$, since otherwise $Az^i\ge h$ would implement $h$.
\end{definition}

\begin{theorem}\label{thm:bv-rounding-criterion}
Let $i\in B(h)$, and let $z^i$ satisfy the conditions in Definition~\ref{def:bv-rounding-vectors}. If
\begin{align}
 \alpha_i\delta_i&<t(1-\alpha_i),\label{eq:bv-alpha-singleton}\\
 \alpha_i\delta_i&<\alpha_j(1-\delta_j)
 \quad\forall j\in N\setminus\{i\},\label{eq:bv-alpha-pairwise}
\end{align}
then $I\cup\supp(z^i)$ lies in the discrete core.
\end{theorem}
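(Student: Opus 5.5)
We argue by contradiction, using the necessary conditions of Proposition~\ref{prop:bv-blocker-conditions}. Put $z:=z^i$, $q:=Az^i$ and $W:=I\cup\supp(z^i)$. Suppose an integral committee $T$ blocks $W$ through a coalition $S$ of size $s$. The first step records the margins $q_j+1-r_j$. For $j\ne i$ we have $q_j\ge h_j$, so $q_j+1-r_j\ge 1-\delta_j>0$. For $j=i$ we have $q_i=h_i-1$, so $q_i+1-r_i=-\delta_i\le 0$. Thus voter $i$ is the only voter whose price margin can be nonpositive, and only $i$ can help a coalition satisfy condition~(i).

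Next we show $i\in S$. If $i\notin S$, every term of $P_{S,z}$ is strictly positive, so the sum is positive, contradicting part~(i). Hence $i\in S$.

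Now split on whether $S$ is a singleton.

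Suppose $S=\{i\}$. Apply part~(ii) to voter $i$. The left side is $\lceil t/\alpha_i+q_i-r_i\rceil+1=\lceil t/\alpha_i-1-\delta_i\rceil+1=\lceil t/\alpha_i-\delta_i\rceil$. The condition then reads $\lceil t/\alpha_i-\delta_i\rceil\le\lfloor t\rfloor$, and in particular $t/\alpha_i-\delta_i\le t$. Multiplying by $\alpha_i>0$ gives $t-\alpha_i\delta_i\le t\alpha_i$, that is, $\alpha_i\delta_i\ge t(1-\alpha_i)$. This contradicts \eqref{eq:bv-alpha-singleton}.

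Suppose instead $|S|\ge 2$. Pick any $j\in S\setminus\{i\}$. Since every term other than those for $i$ and $j$ is positive,
\[
P_{S,z}\ge -\alpha_i\delta_i+\alpha_j(1-\delta_j)>0
\]
by \eqref{eq:bv-alpha-pairwise}. This again contradicts part~(i).

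Having ruled out every case, no blocking coalition exists. Also $|W|=|I|+\kappa=k$, so $W$ is a feasible committee, and it lies in the discrete core.

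The main obstacle is the singleton case. The price filter alone does not rule it out, because $P_{\{i\},z}=-\alpha_i\delta_i\le0$. The singleton bound therefore has to come from the saturation condition. The delicate points there are the ceiling and floor manipulation, and checking that the identity $\lceil t/\alpha_i+q_i-r_i\rceil$ from part~(ii) is applied with $q_i-r_i=-1-\delta_i$ exactly. If part~(ii) turns out to be stated with a weaker rounding, I would instead use $g_i\ge\beta_i=t/\alpha_i-r_i$ from \eqref{eq:bv-beta-relation} directly. Voter $i$ then has at least $t/\alpha_i-1-\delta_i$ approved candidates in $W$, and a blocking $T$ with $|T|\le t$ must give her more than that, which forces the same inequality.
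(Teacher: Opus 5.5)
Your proof is correct and follows essentially the same route as the paper: the price filter of Proposition~\ref{prop:bv-blocker-conditions}(i) forces $i\in S$ and, together with \eqref{eq:bv-alpha-pairwise}, excludes $|S|\ge 2$, while the saturation condition~(ii) with $q_i-r_i=-1-\delta_i$ excludes $S=\{i\}$ via \eqref{eq:bv-alpha-singleton}. The differences are cosmetic: you bound the price margin using a single $j\in S\setminus\{i\}$ rather than the full sum, and you treat the singleton case first.
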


\begin{proof}
Let $q:=Az^i$ and suppose a coalition $S$ blocks the associated committee.  If $i\notin S$, then $q_j+1-r_j\ge1-\delta_j>0$ for every $j\in S$, so $P_{S,z^i}(\alpha,f)>0$, contradicting Proposition~\ref{prop:bv-blocker-conditions}(i).  Hence $i\in S$.

At voter $i$ the price coefficient is $q_i+1-r_i=-\delta_i$, while every other coefficient is at least $1-\delta_j$.  The price filter therefore implies $\sum_{j\in S\setminus\{i\}}\alpha_j(1-\delta_j)\le\alpha_i\delta_i$. If $|S|\ge2$, this contradicts \eqref{eq:bv-alpha-pairwise}.

It remains to rule out $S=\{i\}$. Condition \eqref{eq:bv-alpha-singleton} is equivalent to $t/\alpha_i-\delta_i>t$. Since $q_i-r_i=-1-\delta_i$, we have $t/\alpha_i+q_i-r_i>t-1$. Consequently, $\ceil{t/\alpha_i+q_i-r_i}+1>\floor t$, contradicting Proposition~\ref{prop:bv-blocker-conditions}(ii) with $s=1$.
\end{proof}

\begin{corollary}\label{cor:bv-integer-coordinate}
Suppose $i\in B(h)$ and $(Af)_i=h_i$.  Then a core committee exists.
\end{corollary}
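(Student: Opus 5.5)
The plan is to apply Theorem~\ref{thm:bv-rounding-criterion} directly. The hypothesis $(Af)_i=h_i$ means $\delta_i=r_i-h_i=0$. Both conditions \eqref{eq:bv-alpha-singleton} and \eqref{eq:bv-alpha-pairwise} then have left-hand side $\alpha_i\delta_i=0$, so it suffices to show their right-hand sides are strictly positive.

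For the pairwise condition \eqref{eq:bv-alpha-pairwise}, every $\alpha_j=t/\phi_j(x)$ is strictly positive, because each $\phi_j(x)>0$ by the positive-utility condition. Also $1-\delta_j>0$, since $0\le\delta_j<1$. Hence $\alpha_j(1-\delta_j)>0=\alpha_i\delta_i$ for every $j\ne i$.

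For the singleton condition \eqref{eq:bv-alpha-singleton}, we need $t(1-\alpha_i)>0$, that is, $\alpha_i<1$. I would obtain this from the Nash-price identity \eqref{eq:bv-price-by-status}. Because $i\in B(h)$, we have $h_i>0$, so $r_i=(Af)_i=h_i\ge1>0$. Thus voter $i$ approves some fractional candidate $c\in F$, and for that candidate $\alpha(N(c))=1$.

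If some other voter $j\ne i$ also approves $c$, then $\alpha_i<\alpha(N(c))=1$, since all $\alpha$'s are positive, and we are done. The remaining case is that every fractional candidate approved by $i$ is approved by $i$ alone, so each such column of $A$ equals $e_i$. The antichain property in Proposition~\ref{prop:bv-reduced-matrix}(i) makes columns distinct, so there is at most one such column. Then $r_i=f_c<1$, which contradicts $r_i\ge1$.

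Therefore $\alpha_i<1$, both conditions of Theorem~\ref{thm:bv-rounding-criterion} hold, and $I\cup\supp(z^i)$ lies in the discrete core.

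The main obstacle is the strictness $\alpha_i<1$, and the argument above handles it. Even without the antichain property, I would expect $\alpha_i=1$ to be impossible: $\alpha_i=1$ gives $\beta_i+r_i=t$, while $r_i\ge1$ would force every column containing $i$ to be $e_i$. Distinct columns then force a single such column with $f_c<1$. The contradiction does not depend on any further structure.

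Finally, $t>0$ is needed for the singleton condition and holds trivially because $k\ge1$. If $k=0$, the empty committee is already core-stable.
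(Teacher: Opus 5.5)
Your proposal is correct and follows essentially the same route as the paper. You use $\delta_i=0$, get the pairwise condition from $\alpha_j>0$ and $\delta_j<1$, and obtain $\alpha_i<1$ from a fractional column containing $i$ together with another voter, which must exist because a lone singleton column $e_i$ would give $r_i=f_c<1$. The differences are minor: you need only column distinctness in the degenerate case, whereas the paper also cites antichainness, and your remarks on $t>0$ are harmless extras.
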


\begin{proof}
Choose $z^i$ as in Definition~\ref{def:bv-rounding-vectors}.  Here $\delta_i=0$.  Since $h_i>0$, voter $i$ is incident with a fractional column. If the only such support were the singleton $\{i\}$, distinctness and antichainness would force every other fractional support to omit $i$, making $r_i$ a single fractional coordinate rather than the positive integer $h_i$. Thus some fractional column contains $i$ and another voter.  Its total $\alpha$-mass is one, so positivity of $\alpha$ gives $\alpha_i<1$. Consequently \eqref{eq:bv-alpha-singleton} holds because $\delta_i=0$, and \eqref{eq:bv-alpha-pairwise} holds because $\alpha_j>0$ and $\delta_j<1$ for every $j\ne i$.  Theorem~\ref{thm:bv-rounding-criterion} applies.
\end{proof}

For eight voters, the two inequalities used in Theorem~\ref{thm:bv-rounding-criterion} do not cover every residual floor region.  We therefore derive explicit lower bounds on $\alpha_i$ for members of a blocking coalition and then test coalition price margins on the resulting restricted domain.

We use Proposition~\ref{prop:bv-blocker-conditions}(ii) to derive a necessary condition that depends only on the residual fractional part containing $A,\kappa$, and $h$, but not the original budget $k$ or the fully selected candidates $I$.

For integers $\kappa\ge1$, $1\le s\le8$, and $a\in\Z$, define
\begin{equation}\label{eq:bv-lambda-def}
 \lambda_{\kappa,s,a}:=
 \min\left(
 \left\{\frac1s\right\}
 \cup
 \left\{
 \frac{B}{8(\floor{Bs/8}+a)}:
 B=\kappa,\ldots,\kappa+7,
 \ \floor{Bs/8}+a>0
 \right\}
 \right).
\end{equation}
The set in \eqref{eq:bv-lambda-def} is nonempty because it contains $1/s$.

\begin{restatable}{lemma}{bvlambda}
  \label{lem:bv-lambda}
Consider an eight-voter election.  If a coalition $S$ of size $s$ blocks a residual committee $z$, then every $i\in S$ satisfies $\alpha_i>\lambda_{\kappa,s,h_i-(Az)_i}$.
\end{restatable}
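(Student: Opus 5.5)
The plan is to combine the individual saturation condition of Proposition~\ref{prop:bv-blocker-conditions}(ii) with the identity \eqref{eq:bv-beta-relation}, and then to remove the dependence on the unknown original budget $k$ by reducing $k$ modulo $8$. Throughout, $n=8$, so $t=k/8$; put $q:=Az$ and $a:=h_i-q_i$.

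\emph{Step 1 (an upper bound on $t/\alpha_i$).} Fix a blocking coalition $S$ with $|S|=s$ and a member $i\in S$. Proposition~\ref{prop:bv-blocker-conditions}(ii) gives
\[
t/\alpha_i+q_i-r_i\le\ceil{t/\alpha_i+q_i-r_i}\le\floor{ts}-1 .
\]
Since $r_i=h_i+\delta_i<h_i+1$, this rearranges to
\[
t/\alpha_i\le\floor{ts}-1-q_i+r_i<\floor{ks/8}+a .
\]
Two consequences follow. The left side is positive, so necessarily $D_k:=\floor{ks/8}+a>0$. Moreover, $\alpha_i>k/(8D_k)$ strictly.

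\emph{Step 2 (reduction of $k$ to the window $\kappa,\dots,\kappa+7$).} We have $k=|I|+\kappa\ge\kappa$, so we can write $k=B+8j$ with $B\in\{\kappa,\dots,\kappa+7\}$ and an integer $j\ge0$. Then $\floor{ks/8}=\floor{Bs/8}+js$. Put $D:=\floor{Bs/8}+a$, so that
\[
\frac{k}{8D_k}=\frac{B+8j}{8D+8js}.
\]
We then split into cases.
\begin{itemize}
\item If $j=0$, then $D=D_k>0$. So $B$ is admissible in \eqref{eq:bv-lambda-def}, and $k/(8D_k)=B/(8D)\ge\lambda_{\kappa,s,a}$.
\item If $j>0$ and $D>0$, the fraction is the mediant of $B/(8D)$ and $8j/(8js)=1/s$, both with positive denominators. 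A mediant is at least the smaller of the two fractions, and both belong to the set in \eqref{eq:bv-lambda-def}, so the fraction is at least $\lambda_{\kappa,s,a}$.
\item If $j>0$ and $D\le0$, the numerator is at least $8j$ and the (positive) denominator is at most $8js$. Hence the fraction is at least $1/s\ge\lambda_{\kappa,s,a}$.
\end{itemize}
Combining each case with the strict inequality from Step 1 gives $\alpha_i>\lambda_{\kappa,s,a}$.

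\emph{Main obstacle.} The main point requiring care is Step 2. Here we must check that positivity of $D_k$ (inherited from Step 1) keeps every denominator positive, so that the mediant inequality is legitimate. We must also handle separately the case in which the window term $B/(8D)$ is excluded from \eqref{eq:bv-lambda-def} because $D\le0$; this is exactly why the term $1/s$ is included in the minimum. Everything else is a direct rearrangement of Proposition~\ref{prop:bv-blocker-conditions}(ii) together with $\delta_i<1$.
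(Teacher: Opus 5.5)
Your proof is correct. Step~1 is exactly the paper's first step: Proposition~\ref{prop:bv-blocker-conditions}(ii) together with $\delta_i<1$ yields both $\lfloor ks/8\rfloor+a>0$ and $\alpha_i>k/\bigl(8(\lfloor ks/8\rfloor+a)\bigr)$.

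Where you depart from the paper is in eliminating the unknown budget $k$. The paper fixes a residue class $k=B_r+8\ell$ and treats $g_r(\ell)=(B_r+8\ell)/\bigl(8(D_r+\ell s)\bigr)$ as a sequence. Its successive differences have the constant sign of $8a-(B_rs\bmod 8)$, and it tends to $1/s$, so its infimum is $\min\{g_r(0),1/s\}$ when $D_r>0$. When $D_r\le0$, the paper must also show that this sign is strictly negative, including the edge case $a=0$, so that the admissible tail decreases to $1/s$.

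You instead bound each value directly. For $j>0$ and $D>0$, the value is the mediant of $B/(8D)$ and $1/s$, hence at least their minimum. For $D\le0$, you compare numerator and denominator, using the positivity of $D_k$ already obtained in Step~1.

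Your route is shorter and needs neither monotonicity nor a sign computation. The paper's route yields slightly more: $\lambda_{\kappa,s,a}$ equals the infimum of $k/\bigl(8(\lfloor ks/8\rfloor+a)\bigr)$ over all admissible $k\ge\kappa$. Thus no larger $k$-independent threshold can be extracted from that bound, a sharpness fact the lemma itself does not need.
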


Indeed, Proposition~\ref{prop:bv-blocker-conditions}(ii), with $a:=h_i-(Az)_i$, first gives $\alpha_i>k/[8(\floor{ks/8}+a)]$.  Minimizing this expression over all possible original budgets $k\ge\kappa$ can be done separately in the eight residue classes modulo eight.  Within each class the ratio is monotone and converges to $1/s$, so its infimum is either its value at the first representative in $\{\kappa,\ldots,\kappa+7\}$ or $1/s$.  This is precisely the threshold in \eqref{eq:bv-lambda-def}; the complete derivation is given in Appendix~\ref{sec:proof:bv-lambda}.

For $i\in B(h)$ and a vector $z^i$ satisfying $\one^\top z^i=\kappa$ and $Az^i\ge h-e_i$, define
\begin{equation*}
 G_i(\alpha,f):=
 \frac{\kappa}{8}(1-\alpha_i)
 -\alpha_i\bigl((Af)_i-h_i\bigr)
 =\frac{\kappa}{8}(1-\alpha_i)-\alpha_i\delta_i.
\end{equation*}

\begin{theorem}
  \label{thm:bv-eight-criterion}
Consider an eight-voter election.
\begin{enumerate}[label=(\roman*)]
\item Let $i\in B(h)$ and let $z^i\in\{0,1\}^d$ satisfy $\one^\top z^i=\kappa$ and $Az^i\ge h-e_i$. Suppose $G_i(\alpha,f)>0$ and, for every coalition $S\ni i$ with $|S|\ge2$, the inequalities $\alpha_j>\lambda_{\kappa,|S|,h_j-(Az^i)_j}$ for every $j\in S$ imply $P_{S,z^i}(\alpha,f)>0$. Then $I\cup\supp(z^i)$ lies in the discrete core.

\item Let $E\subseteq[8]$ be nonempty. Suppose that for every $i\in E$ there is a vector $z^i\in\{0,1\}^d$ satisfying $\one^\top z^i=\kappa$, $Az^i\ge h-e_i$, and the coalition-margin implication in part~(i), and suppose
\begin{equation}\label{eq:bv-adaptive-sum}
 \sum_{i\in E}G_i(\alpha,f)>0.
\end{equation}
Then at least one of the committees $I\cup\supp(z^i)$, $i\in E$, lies in the discrete core.
\end{enumerate}
\end{theorem}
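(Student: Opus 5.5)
Part (i) follows the proof of Theorem~\ref{thm:bv-rounding-criterion}, with Lemma~\ref{lem:bv-lambda} and the quantity $G_i$ replacing the two cruder inequalities. Put $q:=Az^i$ and suppose a coalition $S$ of size $s$ blocks $W:=I\cup\supp(z^i)$.

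First we show $i\in S$. For $j\ne i$ we have $q_j\ge h_j$, so $q_j+1-r_j\ge 1-\delta_j>0$. If $i\notin S$, then $P_{S,z^i}(\alpha,f)>0$, which contradicts Proposition~\ref{prop:bv-blocker-conditions}(i).

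Next suppose $|S|\ge2$. Lemma~\ref{lem:bv-lambda} applies to every member $j\in S$ with $a=h_j-q_j$, giving $\alpha_j>\lambda_{\kappa,|S|,h_j-(Az^i)_j}$. By the hypothesised coalition-margin implication, these inequalities force $P_{S,z^i}(\alpha,f)>0$. This again contradicts Proposition~\ref{prop:bv-blocker-conditions}(i).

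It remains to rule out $S=\{i\}$. Here $q_i-r_i=-1-\delta_i$, so Proposition~\ref{prop:bv-blocker-conditions}(ii) with $s=1$ gives $\ceil{t/\alpha_i-1-\delta_i}+1\le\floor t$. In particular $t/\alpha_i-\delta_i\le t$, which rearranges to $t(1-\alpha_i)\le\alpha_i\delta_i$.

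The key step is to replace $t$ by $\kappa/8$. Since $\|x\|_1=k$ and $\kappa=k-|I|\le k$, we get $t=k/8\ge\kappa/8$. If $\alpha_i<1$, then $\frac{\kappa}{8}(1-\alpha_i)\le t(1-\alpha_i)\le\alpha_i\delta_i$, so $G_i\le0$, a contradiction. If $\alpha_i\ge1$, then $G_i\le-\alpha_i\delta_i\le0$ directly. Either way $G_i\le0$, contradicting the hypothesis $G_i>0$. Hence $W$ lies in the discrete core.

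Part (ii) reduces to part (i) by averaging. Since $\sum_{i\in E}G_i(\alpha,f)>0$, some $i\in E$ has $G_i(\alpha,f)>0$. That $i$ is in $B(h)$: this follows because $Az^i\ge h-e_i$ while $h$ is nonimplementable, so in particular $h_i>0$. The vector $z^i$ satisfies the coalition-margin implication by assumption. Part (i) therefore applies to this $i$, and $I\cup\supp(z^i)$ lies in the discrete core.

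The main obstacle is the singleton case: one must track the sign of $1-\alpha_i$ and check that $t\ge\kappa/8$ pushes the inequality in the right direction. Everything else is direct bookkeeping from Proposition~\ref{prop:bv-blocker-conditions} and Lemma~\ref{lem:bv-lambda}.
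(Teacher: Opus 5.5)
Your proof is correct and follows the paper's argument. You force $i\in S$ via the price margin, handle $|S|\ge2$ through Lemma~\ref{lem:bv-lambda} and the hypothesised coalition implication, and exclude the singleton using Proposition~\ref{prop:bv-blocker-conditions}(ii) with $t=k/8\ge\kappa/8$; you run that last step contrapositively to obtain $G_i\le0$, whereas the paper argues forward from $G_i>0$ to $\alpha_i<1$ and inequality~\eqref{eq:bv-alpha-singleton}, and your explicit check in part~(ii) that the chosen index lies in $B(h)$ fills in a detail the paper leaves implicit.
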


\begin{proof}
For part~(i), note first that $(Az^i)_j+1-r_j\ge1-\delta_j>0$ for every $j\ne i$, whereas the coefficient at $i$ is $-\delta_i\le0$.  Hence any coalition satisfying the necessary price inequality must contain $i$.

A singleton block is impossible. Indeed, $G_i>0$ implies $\alpha_i<1$ and $\alpha_i\delta_i<(\kappa/8)(1-\alpha_i)\le t(1-\alpha_i)$, so the singleton argument in Theorem~\ref{thm:bv-rounding-criterion} applies. If a blocking coalition has size at least two, Lemma~\ref{lem:bv-lambda} places its price vector in the stipulated strict domain, and the resulting positive price margin contradicts Proposition~\ref{prop:bv-blocker-conditions}(i).

For part~(ii), equation \eqref{eq:bv-adaptive-sum} gives $G_i(\alpha,f)>0$ for some $i\in E$. Part~(i) applies to the corresponding vector $z^i$. The successful index may depend on the actual pair $(\alpha,f)$; the conclusion is existential.
\end{proof}

We call the data in Theorem~\ref{thm:bv-eight-criterion}(i) a \emph{fixed price--saturation certificate}.  The family of one-deficit roundings and the positive aggregate margin in part~\textup{(ii)} form an \emph{adaptive price--saturation certificate}; the successful deficient voter may then depend on the actual pair $(\alpha,f)$.

\subsection{Exact eight-row classification}
\label{sec:bv-cases}
\label{sec:bv-eight}

Only the eight-row classification is needed for the main theorem: the cases with at most seven voters follow from \citet{becker2026core}. We retain our independent six- and seven-row Nash-kernel classifications in Appendix~\ref{sec:apdx:bv-predecessors}, but they are not part of the logical dependency chain of Theorem~\ref{thm:bv-main}.

The eight-row exact verification uses all three certificate types. A tight-row certificate gives a rounding directly; a fixed price--saturation certificate specifies one suitable one-deficit rounding; and an adaptive certificate allows the deficient voter, and hence the rounding, to depend on the realized pair $(\alpha,f)$. The following lemma summarizes this classification.

\begin{restatable}[\bveightclassificationtitle]{lemma}{bvclasseight}\label{lem:bv-ca-eight}
\ifbveightformal
Let $4\leq d\leq8$ and $2\leq\kappa\leq d-2$.  Let $A\in\{0,1\}^{8\times d}$ be a full-column-rank antichain with $\cD(A)\neq\varnothing$.  Let $h$ be nonimplementable and suppose $\cR_{<}(A,\kappa,h)\neq\varnothing$.  Then at least one of the following holds:
\begin{enumerate}[label=\textup{(\Roman*)}]
\item for every $f\in\cR_{<}(A,\kappa,h)$, there exists $i\in B(h)$ such that $(Af)_{i}=h_{i}$;
\item there exist $i\in B(h)$ and $z^{i}\in\{0,1\}^d$ with $\one^\top z^i=\kappa$ and $Az^i\ge h-e_i$ such that, for every $f\in\cR_{<}(A,\kappa,h)$ and $\alpha\in\cD(A)$, $G_{i}(\alpha,f)>0$, and for every coalition $S\ni i$ with $|S|\geq2$,
\begin{equation*}
  \left[
    \alpha_{j}>
    \lambda_{\kappa,|S|,h_{j}-(Az^{i})_{j}}
    \text{ for every }j\in S
  \right]
  \Longrightarrow
  P_{S,z^{i}}(\alpha,f)>0;
\end{equation*}
\item there exist a nonempty set $E\subseteq B(h)$ and vectors $z^{i}\in\{0,1\}^d$, $i\in E$, with $\one^\top z^i=\kappa$ and $Az^i\ge h-e_i$, such that the coalition implication in~\textup{(II)} holds for every $i\in E$ and $\sum_{i\in E}G_{i}(\alpha,f)>0$ for every $f\in\cR_{<}(A,\kappa,h)$ and $\alpha\in\cD(A)$.
\end{enumerate}
\else
Every admissible eight-row residual floor region satisfies at least one of the following:
\begin{enumerate}[label=\textup{(\Roman*)}]
\item it has a usable tight row;
\item it admits a fixed price--saturation certificate satisfying Theorem~\ref{thm:bv-eight-criterion}(i);
\item it admits an adaptive price--saturation certificate satisfying Theorem~\ref{thm:bv-eight-criterion}(ii).
\end{enumerate}
\fi
\end{restatable}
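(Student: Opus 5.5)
This lemma is a finite, computer-assisted classification, so the plan is an exact enumeration followed by exact certificate checks on each enumerated case. First, we enumerate up to row and column permutation all matrices $A\in\{0,1\}^{8\times d}$ with $4\le d\le 8$ that meet the conditions of Proposition~\ref{prop:bv-reduced-matrix}: pairwise distinct columns forming an antichain, full column rank, and $\cD(A)\neq\varnothing$. The last condition is an LP feasibility test; we check strict positivity by maximizing a common slack $\epsilon$ subject to $A^\top\alpha=\one$ and $\alpha\ge\epsilon\one$. Symmetry reduction, canonical forms of bipartite incidence structures, is essential to keep this tractable. For each surviving $A$ and each $2\le\kappa\le d-2$, we compute $\cU(A,\kappa)$ by brute force over $\binom{d}{\kappa}$ selections. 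We then list the nonimplementable $h$ for which $\cR_{<}(A,\kappa,h)\neq\varnothing$. Since $h\le A\one$ is bounded, this list is finite, and nonemptiness of the half-open region is again a strict-feasibility LP. For each such $h$ we also compute $B(h)$.

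Next, for each region we test the three certificate types in order of cost. For type (I), we need $\cR_<$ covered by the hyperplanes $(Af)_i=h_i$ with $i\in B(h)$. Equivalently, the set obtained from $\cR_<$ by requiring $(Af)_i>h_i$ for all $i\in B(h)$ must be empty, which is a single strict LP infeasibility check. When (I) holds, Corollary~\ref{cor:bv-integer-coordinate} applies directly. For type (II), we try each $i\in B(h)$ and each candidate $z^i$ from the finite list with $Az^i\ge h-e_i$, and verify two things. The first is $\inf G_i>0$ over $\cR_{\le}\times\cl\cD(A)$. The second is the coalition implications for every $S\ni i$ with $|S|\ge2$, of which there are at most $2^7$.

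The difficulty in these checks is that $G_i$ and $P_{S,z}$ are bilinear in $(\alpha,f)$. We handle this by exploiting the product structure: $\cD(A)$ is a polytope in $\alpha$, and the $f$-region is a polytope independent of $\alpha$. For fixed $\alpha$, each function is linear in $f$, so its minimum is attained at a vertex of $\cR_{\le}$. We therefore enumerate the finitely many vertices of $\cR_{\le}$ exactly in rational arithmetic. At each vertex the function is linear in $\alpha$, so we minimize it by LP over the closure of $\cD(A)$ intersected with $\{\alpha_j\ge\lambda_{\kappa,|S|,h_j-(Az^i)_j}\}$. Strictness matters at boundary ties. If an LP minimum equals $0$, we fall back to checking that the minimizing face lies outside the open domain, or we split it. For type (III), we choose $E\subseteq B(h)$ and the $z^i$, and then need $\sum_{i\in E}G_i>0$ on the product set. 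This sum is again bilinear, so the same vertex-then-LP scheme applies.

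The main obstacle will be regions where no single $i$ yields $G_i>0$ uniformly, or where $P_{S,z}$ attains $0$ on the closure. This can happen when $\delta_i$ approaches $1$ and $\alpha_i$ is large, so that the fixed certificates fail near corners. The adaptive certificate (III) is designed for exactly these regions. I expect the search over subsets $E$ and choices of $z^i$ to need heuristics: first try $E=B(h)$ with the $z^i$ that maximize the minimal coalition margins, and subdivide $\cR_{\le}$ when a certificate succeeds only piecewise. All arithmetic is done exactly over the rationals, so the final statement is a verified exhaustive log showing that every admissible region receives one of (I)--(III).
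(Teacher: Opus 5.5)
Your plan follows the paper's computer-assisted strategy. Both enumerate positive-dual full-column-rank antichains canonically, generate the nonempty nonimplementable floor regions, settle alternative~(I) with one strict common-slack LP, and verify fixed and adaptive certificates in exact rational arithmetic. The one genuinely different ingredient is how you discharge the bilinear universal checks in $(\alpha,f)$.

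The paper reduces on the price side. Full column rank gives $\cD(A)$ dimension $8-d$. For $d=8$, which holds $1{,}049{,}187$ of the $1{,}085{,}483$ unresolved regions, the dual is a single point. For $d=7$, concavity of $\alpha\mapsto\min_f L(\alpha,f)$ reduces the check to segment endpoints, and for $d=6$ to vertex pairs. Unbounded directions, which are supported on zero rows, are handled by the monotonicity in Proposition~\ref{prop:bv-polyhedral}(iii). What remains are affine checks over the closed floor polytope, justified by density of $\cR_{<}$ in $\cR_{\le}$.

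You reduce on the floor side instead: enumerate the vertices of $\cR_{\le}(A,\kappa,h)$ and solve an LP in $\alpha$ at each, using $\inf_\alpha\min_f=\min_v\inf_\alpha$ over the finite vertex set. This is equally sound, uniform in $d$, and lets the LP absorb the recession analysis. Note that $\cl\cD(A)$ is an unbounded polyhedron, not a polytope, whenever $A$ has a zero row. Your tie test is also sound, since a zero at an admissible $\alpha$ would already show up at a vertex of $\cR_{\le}$. The cost is an exact vertex enumeration of a polytope of dimension up to $7$ for every region, whereas the paper's dominant $d=8$ case needs only one LP over $f$.

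Three points need tightening.

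First, subdividing $\cR_{\le}$ ``when a certificate succeeds only piecewise'' does not yield alternative~(II) or~(III). Those require one choice of $(i,z^i)$, respectively one $E$ with one family $(z^i)_{i\in E}$, valid for \emph{every} $f\in\cR_{<}(A,\kappa,h)$ and \emph{every} $\alpha\in\cD(A)$. A piecewise certificate proves only a pointwise variant. That variant is enough for Theorem~\ref{thm:bv-main} via Theorem~\ref{thm:bv-eight-criterion}, but it is not the lemma as stated. The adaptive condition $\sum_{i\in E}G_i>0$ is precisely the device that absorbs pointwise dependence while keeping each coalition implication uniform, and the paper's census shows no subdivision is needed. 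Splitting a domain merely to verify one fixed certificate at ties is harmless.

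Second, for~(III) you must also verify the coalition implication uniformly for each $i\in E$, not just positivity of the sum.

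Third, scanning every integer $h$ below $A\one$ with one LP apiece is impractical across roughly $6\times10^7$ matrix--budget cases. The paper instead generates floors from the coordinatewise-minimal nonimplementable targets maintained from $\cU(A,\kappa)$, and discards empty regions by exact subset and cover-dual certificates. Its enumeration completeness also rests on the antichain, full-rank and positive-dual properties being hereditary under column deletion, which your canonical-form generation should use as well.
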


The fully quantified statement, computer-assisted proof, and exact census are given in Appendix~\ref{sec:apdx:bv-eight}. The eight-row census contains $13{,}269{,}620$ positive-dual matrices and $1{,}085{,}483$ regions not settled by the earlier verification steps. Of these regions, $1{,}085{,}459$ receive fixed certificates and the remaining $24$ receive adaptive certificates; no region is left unresolved.

We are now ready to prove Theorem~\ref{thm:bv-main}.

\bveightmain*

\begin{proof}
By \citet{becker2026core}, every approval-based committee election with at most seven voters has a committee in the discrete core. It therefore remains to consider an election with $n=8$ equally weighted voters.

If Assumption~\ref{assup:1} fails, Lemma~\ref{lem:bv-assumption-one-reduction} reduces the election to one with fewer than eight voters. The result of \citet{becker2026core} supplies a core committee for the residual election, and Lemma~\ref{lem:bv-assumption-one-reduction} lifts it to a core committee of the original election.

Suppose henceforth that Assumption~\ref{assup:1} holds. Delete candidates approved by no voter. If at most $k$ candidates remain, select all of them. Otherwise, Theorem~\ref{thm:frac:exist} gives a fractional Nash-core pair, and we choose one with the minimum number of fractional coordinates. If $h$ is implementable, Lemma~\ref{lem:bv-floor-rounding} gives a core committee. Otherwise, Proposition~\ref{prop:bv-reduced-matrix} gives a positive-dual full-column-rank antichain with $4\leq d\leq 8$ and $2\leq\kappa\leq d-2$.

Apply Lemma~\ref{lem:bv-ca-eight} to $(f,\alpha)$. Branch~\textup{(I)} is handled by Corollary~\ref{cor:bv-integer-coordinate}; branch~\textup{(II)} by Theorem~\ref{thm:bv-eight-criterion}(i); and branch~\textup{(III)} by Theorem~\ref{thm:bv-eight-criterion}(ii). The branches are exhaustive.
\end{proof}

The theorem is existential and does not by itself yield a polynomial-time algorithm. Its genuinely additional domain consists of eight equally weighted voters with eight distinct approval sets: an eight-voter profile with a repeated approval set has at most seven voter types and is covered by \citet{becker2026core}. Conversely, their theorem allows arbitrary weights and multiplicities across seven types, which the present theorem does not. Thus neither parameterization subsumes the other, and unrestricted discrete-core nonemptiness remains open. Although preserving the utility floors of an arbitrary supplied fractional-core allocation is NP-complete, this hardness concerns a particular rounding route rather than discrete-core nonemptiness; the formal statement, proof, and scope qualifications are given in Appendix~\ref{app:floor-hardness}.

The preceding result establishes discrete-core nonemptiness for a bounded number of voters. We now turn to a complementary approach for unrestricted elections, based on candidate-weighted harmonic welfare and an efficiently verifiable sufficient condition for discrete-core membership.

\section{Nash-Core Certificates in the General Discrete Setting}  \label{sec:disc}

This section develops complementary Nash-based certificates for unrestricted discrete elections. The discrete Nash core implies the weak Nash core, and the weak Nash core is an efficiently verifiable sufficient condition for membership in the discrete core. These certificates are not used in the bounded-voter existence proof of Section~\ref{sec:bv-core-eight}, and their existence for unrestricted elections remains conjectural.

We generalize the candidate-weighted idea in the fractional setting to the discrete setting.

Recall that the utility of voter $v$ is $u_v(x):=\sum_{c\in A_v}x_c$. Note that $u_v(x)$ is an integer in the discrete setting. The Proportional Approval Voting \citep{thiele1895om,kilgour2010approval} (PAV)-score of committee $x$ is defined as $sc_{PAV}(x):=\sum_{v\in N}H(u_v(x))$, $H(r):=\sum_{i=1}^{r}\frac{1}{i}$ is the $r$-th harmonic number. When each candidate can be selected multiple times---or equivalently, when each candidate has infinite number of copies, the committee that maximizes the PAV-score is in the core \citep{brill2024approval}. However, this approach fails when each candidate can be chosen at most once. This fact is noted in several works \citep{peters2020proportionality,peters2025core,lackner2023multi}, where the counterexample has the same idea as Example \ref{exp:nsw}, but need more voters and candidates to make everything integral.

We also introduce the weight for candidates $w_c\in (0,1]$ in the discrete setting, as we did for the fractional case, but note that the weighted utility $\phi_v(x)=\sum_{c\in A_v}w_cx_c$ may not be an integer, while the harmonic number $H(r)$ is only defined for integer $r$. We adopt the Digamma function \citep{abramowitz1948handbook}, defined as $\psi(z)=\frac{d}{dz}\ln \Gamma(z)$, where $\Gamma(z)$ is the Gamma function. A property of the Digamma function is $H(r)=\psi(r+1)+\gamma$ for each non-negative integer $r$, where $\gamma$ is the Euler-Mascheroni constant, so we directly define the continuous generalization of the harmonic number as $H(z)=\psi(z+1)+\gamma$ for any $z\geq 0$. Then $H(z)$ is concave and we have $H(z+1)=H(z)+\frac{1}{z+1}$ for any $z\geq 0$. We summarize these properties as the following lemma.

\begin{lemma}[\citep{abramowitz1948handbook}]    \label{lem:cont:h}
    Let $H(z)=\frac{d}{dz}\ln \Gamma(z+1)+\gamma$, where $\Gamma(\cdot)$ is the Gamma function and $\gamma$ is the Euler–Mascheroni constant. Then
    \begin{itemize}
        \item When $z\geq 0$ is an integer, $H(z)=\sum_{i=1}^{z}\frac{1}{i}$, matching the harmonic number.
        \item For $z\geq 1$, $H(z)-H(z-1)=\frac{1}{z}$.
        \item $H(z)$ is concave when $z\geq 0$.
    \end{itemize}
\end{lemma}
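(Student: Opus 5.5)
The statement just above is Lemma~\ref{lem:cont:h}, a list of standard properties of $H(z)=\psi(z+1)+\gamma$. I would derive all three from two classical facts about the digamma function. The first is the recurrence $\psi(z+1)=\psi(z)+\frac{1}{z}$ for $z>0$, which comes from differentiating $\log$ of $\Gamma(z+1)=z\Gamma(z)$. The second is the series representation
\[
\psi(z+1)+\gamma=\sum_{j=1}^{\infty}\Bigl(\frac{1}{j}-\frac{1}{j+z}\Bigr),\qquad z>-1,
\]
obtained by logarithmically differentiating the Weierstrass product for $1/\Gamma$. I would cite \citep{abramowitz1948handbook} for both.

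For the second bullet, take $z\ge1$ and apply the recurrence at the point $z>0$: $\psi(z+1)=\psi(z)+\frac1z$. Since $H(z-1)=\psi(z)+\gamma$, this gives $H(z)-H(z-1)=\frac1z$.

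For the first bullet, I would induct on the integer $z$. The base case is $H(0)=\psi(1)+\gamma=0$, using $\psi(1)=-\gamma$; this is also immediate from the series, since every term vanishes at $z=0$. The inductive step is exactly the second bullet, yielding $H(r)=\sum_{i=1}^r\frac1i$.

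For the third bullet (concavity on $z\ge0$), I would differentiate the series termwise. Its terms have derivatives $\frac{1}{(j+z)^2}$, and $\sum_j\frac{1}{(j+z)^2}$ converges uniformly on $[0,\infty)$ by comparison with $\sum_j \frac{1}{j^2}$. Termwise differentiation is therefore justified, and
\[
H''(z)=-\sum_{j=1}^{\infty}\frac{2}{(j+z)^3}<0 .
\]
Alternatively, each map $z\mapsto \frac1j-\frac1{j+z}$ is concave on $z\ge0$. A pointwise convergent sum of concave functions is concave, so $H$ is concave without any differentiation.

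The only step needing care is justifying the series (or termwise differentiation). The concavity route via pointwise limits of concave partial sums avoids uniform-convergence issues entirely, so I would use that route and rely on the cited reference for the series identity itself.
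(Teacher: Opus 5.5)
Your proposal is correct. The paper gives no proof of this lemma beyond citing \citep{abramowitz1948handbook}. Your derivation of the three bullets from the recurrence $\psi(z+1)=\psi(z)+\frac{1}{z}$ and the series $\psi(z+1)+\gamma=\sum_{j\ge1}\bigl(\frac{1}{j}-\frac{1}{j+z}\bigr)$ is the standard argument that citation stands for, and your pointwise-limit-of-concave-partial-sums route to concavity is sound and avoids the termwise-differentiation question.
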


A direct corollary of the second and third properties in this lemma is following.
\begin{corollary} \label{cor:cont:h}
We have
    \begin{itemize}
        \item For $a\in (0,1]$, $z\geq a$, $H(z)-H(z-a)\leq \frac{a}{z}$,
        \item For $a\in [0,1]$, $z\geq 0$, $H(z+a)-H(z)\geq \frac{a}{z+1}$.
    \end{itemize}
\end{corollary}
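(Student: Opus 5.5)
We derive both bullets of Corollary~\ref{cor:cont:h} from the recurrence $H(z)-H(z-1)=\frac1z$ (valid for $z\ge1$) and concavity of $H$ on $[0,\infty)$ in Lemma~\ref{lem:cont:h}. The idea is to compare a step of length $a\le 1$ with a full unit step via concavity: over an interval of length $a$ the increment of a concave function lies between $a$ times the secant slopes of unit intervals to its left and right. Since unit-interval secants of $H$ are exactly reciprocals by the recurrence, the claimed bounds should follow.

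For the first bullet, fix $a\in(0,1]$ and $z\ge a$. If $z\ge1$, the interval $[z-a,z]$ sits at the right end of $[z-1,z]$. For a concave function, the secant slope over the right subinterval $[z-a,z]$ is at most the secant slope over all of $[z-1,z]$. Hence
\[
\frac{H(z)-H(z-a)}{a}\le H(z)-H(z-1)=\frac1z,
\]
which gives the claim. If $a\le z<1$, then $z-1<0$ lies outside the domain, so I would argue differently. Concavity on $[0,z]$ gives that the slope over $[z-a,z]$ is at most the slope over $[0,z]$, which is $H(z)/z$. It then suffices to show $H(z)\le 1$ for $z\in[0,1]$. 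This holds because $H$ is increasing there (the digamma function is increasing) with $H(1)=1$. This boundary case is the only real subtlety; I expect it to be the main obstacle, and I would handle it this way rather than extend the recurrence below $1$.

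For the second bullet, fix $a\in[0,1]$ and $z\ge0$. The case $a=0$ is trivial. Otherwise $[z,z+a]$ sits at the left end of $[z,z+1]$. By concavity, the secant slope over the left subinterval is at least the slope over the whole interval. Applying the recurrence at $z+1\ge1$ gives
\[
\frac{H(z+a)-H(z)}{a}\ge H(z+1)-H(z)=\frac1{z+1}.
\]

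Both directions use only the standard fact that secant slopes of a concave function decrease as the interval moves right (the three-chord lemma). I would state that fact once and apply it in each case.
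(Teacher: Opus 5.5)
Your proposal is correct and takes the paper's approach: the paper gives no written proof beyond saying the corollary follows directly from the recurrence $H(z)-H(z-1)=\frac{1}{z}$ and the concavity of $H$, and your three-chord argument makes exactly that derivation explicit. Your separate treatment of the case $a\le z<1$, where $[z-1,z]$ leaves the domain on which the lemma is stated, fills in a detail the paper leaves implicit.
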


We define the continuous PAV score as
\begin{definition}[candidate-weighted PAV score]
    $sc_{PAV}(w,x):=\sum_{v\in N}H(\phi_v(x))=\sum_{v\in N}H(\sum_{c\in A_v}w_cx_c)$.
\end{definition}

We define the discrete Nash core analogously to the fractional Nash core in Definition~\ref{def:nashcore}.
\begin{definition}[discrete Nash core]  \label{def:disccore}
    For $w\in (0,1]^m$ and $x\in \{0,1\}^m$ where $\|x\|_1\leq k$, we say $(w,x)$ is in the discrete Nash core, if both of the following hold:
    \begin{itemize}
        \item For each candidate $c\in C$, $w_c=1$ if $x_c=0$.
        \item $sc_{PAV}(w,x)=\max\limits_{z\in \{0,1,\dots,k\}^m: \|z\|_1\leq k}sc_{PAV}(w,z)$.
    \end{itemize}
\end{definition}

We also have the following definition.
\begin{definition}[weak Nash core]
    For $w\in (0,1]^m$ and $x\in \{0,1\}^m$ where $\|x\|_1\leq k$, we define the weighted utility of voter $v$ as $\phi_v(x)=\sum_{c\in A_v}w_cx_c$. We say $(w,x)$ is in the weak Nash core if both of the following hold:
    \begin{itemize}
        \item For each candidate $c\in C$, $w_c=1$ if $x_c=0$.
        \item For each candidate $c\in C$, $\sum_{v:c\in A_v}\frac{w_c}{\phi_v(x)+1}< \frac{n}{k}$.
    \end{itemize}
\end{definition}

In this definition, the term $\sum_{v:c\in A_v}\frac{w_c}{\phi_v(x)+1}$ in the second condition matches the idea of \emph{marginal utility} in \citep{aziz2017justified}.

We show discrete Nash core solutions are in the weak Nash core, and weak Nash core solutions are in the discrete core. We defer the proof of this theorem to Appendix~\ref{sec:proof:disc:incore}.

\discincore*

\section{Heuristically Computing the Core}  \label{sec:experi}

We develop heuristic algorithms and evaluate them on datasets from the Pabulib repository \citep{faliszewski2023participatory}, focusing primarily on the 100 largest datasets, each with over 3000 voters and 50 candidates, while also considering an additional 1000 smaller datasets. The ten largest datasets contain approximately $10^5$ voters and 100 candidates. Our algorithms approximately compute both the fractional Nash core solution and the weak Nash core solution, the latter of which also satisfies the discrete core condition. All experiments are conducted on a personal computer. For the vast majority of datasets, the runtime is under 1 minute, while the largest instances require between 5 and 10 minutes. We discuss more details in Appendix \ref{sec:apdx:exp}.

\section{Conclusions and Future Directions}

We study the core in approval voting through fractional and discrete approaches. In the discrete setting, we prove an exact core-existence theorem for elections with at most eight equally weighted voters and develop complementary Nash-based certificates for unrestricted elections.

In the fractional setting, we introduce the fractional Nash core---a candidate-weighted generalization of Nash social welfare maximization---and propose a proportional payment scheme (reminiscent of the Method of Equal Shares) that incorporates both prices and payments. We show that these two notions coincide, that they imply the fractional core, and that such a solution exists under Assumption~\ref{assup:1}.

In the discrete setting, we show that every approval-based committee election with at most eight individual equally weighted voters has a core committee. The proof uses the fractional Nash-core structure and an exact computer-assisted classification of eight-row residual regions. For unrestricted elections, we introduce the discrete Nash core---a candidate-weighted generalization of Proportional Approval Voting---and define the weak Nash core. We show that the discrete Nash core implies the weak Nash core, and that the weak Nash core implies the discrete core. Notably, the weak Nash core is straightforward to verify.

Additionally, we propose heuristic searches for candidate fractional Nash-core and weak Nash-core pairs. Although these methods provide numerical evidence only and do not prove convergence or existence, they run efficiently on real-world voting data.

We propose the following future directions.
\begin{itemize}
    \item \textbf{Uniqueness of fractional Nash core.} Although Nash core solution is not always unique, we conjecture that, for any Nash core solution $(w,x)$, the voter utility $\sum_{c\in A_v}x_c$ remains the same for each $v\in N$. In other words, every voter's utility is invariant under different Nash core solutions.
    \item \textbf{Efficient computation of fractional Nash core.} Although our heuristic search for candidate fractional Nash-core pairs is empirically efficient, it lacks a formal theoretical guarantee. Does there exist an efficient algorithm with a theoretical guarantee for computing the fractional Nash core?
    \item \textbf{Unrestricted discrete-core nonemptiness.} We establish nonemptiness for elections with at most eight equally weighted voters. Whether every unrestricted approval-based committee election admits a discrete-core committee remains open.
    \item \textbf{Existence of the weak Nash core solution.} Verifying whether a solution belongs to the discrete core is coNP-hard \citep{brill2024approval}. However, determining whether a solution is in the weak Nash core is straightforward. Since every weak Nash core solution lies within the discrete core---and our heuristic finds candidate pairs on real voting data---demonstrating existence for unrestricted elections offers a potential pathway beyond the at-most-eight-voter theorem.
\end{itemize}

\backmatter

\bmhead{Code availability}

The code is available at \url{https://github.com/nashcore-code/code}. The \texttt{heuristics} directory contains the heuristic experiments, and the \texttt{discrete-core-nonemptiness} directory contains the computer-assisted proof of discrete-core nonemptiness.

\begin{appendices}

\section{Omitted Proofs} \label{sec:apdx:proof}

\subsection{Proof of Theorem \ref{thm:frac:equiv}} \label{sec:proof:frac:equiv}

\fracequiv*

We split the argument into two lemmas, each of which handles one direction of implication.

\begin{lemma}[Fractional Nash Core \(\Longrightarrow\) Proportional Payment]
\label{lem:frac:equiv:forward}
Suppose \((w,x)\) is in the fractional Nash core (Definition~\ref{def:nashcore}). Then \((w,x)\) is supported by a proportional payment (Definition~\ref{def:payment}).
\end{lemma}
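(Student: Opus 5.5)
The plan is to read off the KKT conditions of the convex program in the Nash-optimality clause of Definition~\ref{def:nashcore} and check the four conditions of Definition~\ref{def:payment} one at a time.

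Write $f(z):=\sum_{v\in N}\log\bigl(\sum_{c\in A_v}w_cz_c\bigr)$ on $\{z\ge 0,\ \|z\|_1\le k\}$.

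\textbf{Positive utility.} Since $z=x$ is optimal, $f(x)$ must be finite. The point $z=\frac{k}{m}\mathbf{1}$ is feasible, and $A_v\neq\emptyset$ together with $w>0$ makes its value finite. Hence $\phi_v(x)>0$ for every $v$, so $f$ is differentiable at $x$. The payments $p_{vc}$ are then well defined.

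\textbf{KKT conditions.} The constraints are linear and the objective is concave, so the KKT conditions hold at the optimum. For each candidate $c$,
\begin{equation}\label{eq:deriv}
\frac{\partial f}{\partial z_c}(x)=\sum_{v:c\in A_v}\frac{w_c}{\phi_v(x)}=\mu-\eta_c ,
\end{equation}
with multipliers $\mu\ge 0$ for the budget constraint and $\eta_c\ge 0$ for $z_c\ge 0$, where $\eta_cx_c=0$. In particular,
\begin{equation}\label{ieq:deriv}
\sum_{v:c\in A_v}\frac{w_c}{\phi_v(x)}\le \mu ,
\end{equation}
with equality whenever $x_c>0$.

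\textbf{Identifying $\mu$.} Multiply \eqref{eq:deriv} by $x_c$ and sum over $c$. Exactly as in the proof of Theorem~\ref{thm:frac:incore}, this gives $x^\top\nabla f(x)=n$. Complementary slackness kills the $\eta_c$ terms, so $\mu\|x\|_1=n$. Hence $\mu>0$, and then complementary slackness on the budget forces $\|x\|_1=k$. Alternatively, scaling $x$ up would strictly increase every $\phi_v$, so the budget must be tight. Either way, $\mu=n/k$.

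\textbf{Affordability.} Multiplying the equality case of \eqref{ieq:deriv} by $x_c$ gives
\[
\sum_{v}p_{vc}=\sum_{v:c\in A_v}\frac{w_cx_c}{\phi_v(x)}=\frac{n}{k}x_c .
\]
This also holds trivially when $x_c=0$, since both sides vanish.

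\textbf{Stability.} If $x_c<1$, weight-selection coupling gives $w_c=1$. Inequality \eqref{ieq:deriv} then reads $\sum_{v:c\in A_v}1/\phi_v(x)\le n/k$.

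\textbf{Weight-selection coupling.} This is inherited directly from Definition~\ref{def:nashcore}.

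\textbf{Main obstacle.} The delicate step is justifying the KKT conditions, because the objective equals $-\infty$ on part of the boundary. This is handled by first establishing positive utility, which makes $f$ smooth in a neighborhood of $x$ within the feasible set. Linear constraints then give constraint qualification, so the KKT conditions are valid at $x$.

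The converse direction reverses these steps. Given a proportional payment, set $\mu=n/k$ and $\eta_c=\mu-\partial_cf(x)\ge 0$. Stability gives $\eta_c\ge0$ when $w_c=1$. When $w_c<1$, coupling gives $x_c=1>0$, and affordability gives equality, so $\eta_c=0$. Complementary slackness follows from affordability, and $\|x\|_1=k$ follows by summing affordability over $c$. Since the problem is concave, these KKT conditions are sufficient, which yields Nash optimality.
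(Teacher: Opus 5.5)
Your proof is correct and follows essentially the same route as the paper. The steps match: positive utility from the feasible point $z=\frac{k}{m}\mathbf{1}$, KKT conditions with the budget multiplier identified as $n/k$ via $x^\top\nabla f(x)=n$, then affordability, stability (using $x_c<1\Rightarrow w_c=1$), and inherited weight--selection coupling. The only minor differences are that you get $\|x\|_1=k$ from $\mu\|x\|_1=n$ plus complementary slackness as well as from scaling, and that you explicitly justify why KKT holds at $x$ (differentiability near $x$ and linear constraints), which the paper leaves implicit.
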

\begin{proof}
Since \((w,x)\) is in the fractional Nash core, we have:
\begin{enumerate}
    \item Weight-Selection Coupling: If \(w_c < 1\), then \(x_c=1\).
    \item Nash Optimality: The vector \(x\) solves
    \[
      \max_{z \ge 0,\, \|z\|_1\le k}\;\; \sum_{v\in N} \log\Bigl(\sum_{c\in A_v} w_c z_c\Bigr).
    \]
\end{enumerate}

By the standing assumption that $A_v\neq\emptyset$ for every voter and because $w_c>0$, the feasible vector with $z_c=k/m$ for every $c\in C$ gives every voter positive weighted utility. Hence the Nash-welfare objective has a finite feasible value. Since $x$ is optimal, its objective value is finite, and therefore $\phi_v(x)>0$ for every voter $v$.

Moreover, $\|x\|_1=k$. Otherwise, since $\phi_v(x)>0$ for every voter, scaling $x$ by $k/\|x\|_1>1$ would remain feasible and strictly increase every weighted voter utility, contradicting Nash optimality.

By standard convex optimization (via KKT conditions), there is a Lagrange multiplier \(\lambda\) for the constraint \(\sum_{c}z_c \le k\). For each candidate \(c\) with \(x_c>0\), we have
\[
  \sum_{v : c\in A_v} \frac{w_c}{\sum_{c'\in A_v}w_{c'}x_{c'}}
  \;=\;\lambda.
\]
For each candidate $c$ with $x_c=0$, the KKT conditions also give
\begin{equation*}
  \sum_{v:c\in A_v}\frac{w_c}{\phi_v(x)}
  \leq \lambda.
\end{equation*}
Rewriting \(\sum_{c'\in A_v} w_{c'} x_{c'}\) as \(\phi_v(x)\), and multiplying both sides by \(x_c\), we get
\[
  \sum_{v : c\in A_v} \frac{w_c x_c}{\phi_v(x)}
  \;=\;\lambda\,x_c.
\]
Since \(\|x\|_1 = k\) at optimum, we have
\[
\lambda k
= \sum_{c\in C}\lambda x_c
= \sum_{c\in C} \sum_{v : c\in A_v} \frac{w_c x_c}{\phi_v(x)}
= \sum_{v\in N} \sum_{c\in A_v} \frac{w_c x_c}{\phi_v(x)}
= n
,
\]
identify \(\lambda = \frac{n}{k}\). This gives
\begin{align}   \label{eq:deriv}
  \sum_{v : c\in A_v} \frac{w_c x_c}{\phi_v(x)}
  \;=\;\frac{n}{k} x_c.
\end{align}
In Definition \ref{def:payment}, we define
\[
  p_{vc} \;=\;
  \begin{cases}
     \dfrac{w_c x_c}{\phi_v(x)}, & \text{if } c\in A_v,\\
     0, & \text{otherwise}.
  \end{cases}
\]
Then
\[
  \sum_{v\in N} p_{vc}
  \;=\;\sum_{v : c\in A_v} \frac{w_c x_c}{\phi_v(x)}
  \;=\;\frac{n}{k}\,x_c,
\]
which is precisely the \emph{affordability} condition from Definition~\ref{def:payment}.

The \emph{stability} condition follows from complementary slackness: if \(x_c<1\), then by the fractional Nash core property, \(w_c=1\). Hence
\begin{align}   \label{ieq:deriv}
  \sum_{v : c\in A_v} \frac{1}{\phi_v(x)}
  \;\le\;\frac{n}{k},
\end{align}
matching the definition.

Finally, the \emph{weight-selection coupling} is already given: \(w_c<1\implies x_c=1\).

Hence \((w,x)\) satisfies all conditions of Definition~\ref{def:payment}.
\end{proof}

\begin{lemma}[Proportional Payment \(\Longrightarrow\) Fractional Nash Core]
\label{lem:frac:equiv:backward}
Suppose \((w,x)\) is supported by a proportional payment (Definition~\ref{def:payment}). Then \((w,x)\) is in the fractional Nash core (Definition~\ref{def:nashcore}).
\end{lemma}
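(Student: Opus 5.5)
Weight-selection coupling is part of Definition~\ref{def:payment}, so only Nash optimality needs proof. Since the program $\max\{f(z):z\ge 0,\ \|z\|_1\le k\}$ with $f(z)=\sum_{v\in N}\log\bigl(\sum_{c\in A_v}w_cz_c\bigr)$ is a concave maximization over a polyhedron, I will verify the sufficient first-order (KKT) conditions at $z=x$ with multiplier $\lambda=\frac{n}{k}$. Equivalently, I will show $\nabla f(x)^\top(z-x)\le 0$ for every feasible $z$. Concavity then gives $f(z)\le f(x)+\nabla f(x)^\top(z-x)\le f(x)$. Positive utility ensures $f(x)$ is finite and $f$ is differentiable at $x$; for feasible $z$ with $f(z)=-\infty$ there is nothing to prove.

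The first step computes the gradient:
\[
\partial_c f(x)=\sum_{v:c\in A_v}\frac{w_c}{\phi_v(x)}.
\]
I will show $\partial_c f(x)\le \frac{n}{k}$ for all $c$, with equality whenever $x_c>0$. If $x_c>0$, divide the affordability identity $\sum_{v:c\in A_v}\frac{w_cx_c}{\phi_v(x)}=\frac{n}{k}x_c$ by $x_c$ to get equality. If $x_c=0$, then $x_c<1$, so weight-selection coupling gives $w_c=1$, and stability gives $\partial_c f(x)=\sum_{v:c\in A_v}\frac{1}{\phi_v(x)}\le\frac{n}{k}$.

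The second step shows $\|x\|_1=k$. Summing affordability over all $c$ and swapping the order of summation gives
\[
\frac{n}{k}\|x\|_1=\sum_{v\in N}\sum_{c\in A_v}\frac{w_cx_c}{\phi_v(x)}=n,
\]
so the budget constraint is tight.

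Combining the two steps, for any feasible $z$,
\[
\nabla f(x)^\top z\le \tfrac{n}{k}\|z\|_1\le n,
\qquad
\nabla f(x)^\top x=\tfrac{n}{k}\|x\|_1=n.
\]
Hence $\nabla f(x)^\top(z-x)\le 0$, and $x$ is optimal. Note that no upper bound $z\le 1$ is needed, consistent with the definition.

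The only delicate point is the case $x_c=0$, where affordability carries no information and we must use weight-selection coupling to invoke stability. This is exactly why the coupling condition is included for $x_c=0$. I expect no other obstacle.
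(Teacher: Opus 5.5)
Your proposal is correct and follows essentially the same route as the paper. Both arguments derive $\sum_{v:c\in A_v} w_c/\phi_v(x)\le n/k$, with equality when $x_c>0$ and with coupling plus stability handling $x_c=0$; both sum affordability to get $\|x\|_1=k$ and conclude from the sufficiency of the KKT conditions with $\lambda=n/k$. You make the sufficiency step explicit via the concavity gradient inequality and the $f(z)=-\infty$ case, whereas the paper cites it as standard.
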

\begin{proof}
Given a proportional payment scheme, we have:
\begin{itemize}
    \item Positive utility: \(\phi_v(x) > 0\) for all \(v\).
    \item Affordability:
    \[
      \sum_{v \in N} p_{vc}
      \;=\;\frac{n}{k}\,x_c
      \quad\text{and}\quad
      p_{vc} \;=\; \frac{w_c x_c}{\phi_v(x)}
      \quad\text{if }c\in A_v.
    \]
    Thus
    \[
      \sum_{v : c\in A_v} p_{vc}
      \;=\; w_c x_c \sum_{v: c\in A_v} \frac{1}{\phi_v(x)}
      \;=\;\frac{n}{k}\,x_c.
    \]
    If \(x_c>0\), we get \(\sum_{v: c\in A_v} \frac{1}{\phi_v(x)} = \frac{n}{k w_c}\).

    \item Stability: For \(x_c<1\), the condition in Definition~\ref{def:payment} ensures \(w_c=1\), and
    \[
      \sum_{v : c\in A_v} \frac{1}{\phi_v(x)}
      \;\le\;\frac{n}{k}.
    \]

    \item Weight-Selection Coupling: If \(w_c<1\), then \(x_c=1\), per the definition.
\end{itemize}
Summing affordability over all candidates gives
\begin{equation*}
  \frac{n}{k}\sum_{c\in C}x_c
  =\sum_{c\in C}\sum_{v\in N}p_{vc}
  =\sum_{v\in N}1
  =n,
\end{equation*}
and hence $\|x\|_1=k$.

Thus, with $\lambda=n/k$,
\begin{equation*}
  \sum_{v:c\in A_v}\frac{w_c}{\phi_v(x)}
  \leq \lambda
  \quad\text{for every }c,
\end{equation*}
with equality whenever $x_c>0$. Together with $\|x\|_1=k$, these are precisely the KKT conditions for the candidate-weighted Nash social welfare maximization. Since its objective is concave and its feasible region is convex, these conditions are sufficient for optimality. Hence \(x\) must be an optimal solution to
\[
  \max_{z \ge 0,\, \|z\|_1 \le k} \sum_{v\in N} \log\Bigl(\sum_{c\in A_v} w_c z_c\Bigr).
\]
Combined with the condition that \(w_c<1\implies x_c=1\), we conclude \((w,x)\) is in the fractional Nash core.
\end{proof}

The proof of Theorem \ref{thm:frac:equiv} follows immediately from Lemmas \ref{lem:frac:equiv:forward} and \ref{lem:frac:equiv:backward}.

\subsection{Proof of Theorem \ref{thm:lindahl}} \label{sec:proof:lindahl}

\begin{proof}

Recall that we defined the payment $p_{vc}$ in Definition \ref{def:payment} as following.
\[
  p_{vc} \;=\;
  \begin{cases}
     \dfrac{w_c x_c}{\phi_v(x)}, & \text{if } c\in A_v,\\
     0, & \text{otherwise},
  \end{cases}
\]
where $\phi_v(x) = \sum_{c\in A_v} w_{c} x_{c}$.

For each voter $v\in N$, We define the price vector $q_v$ as the payment per unit, which is $p_{v}$ when $x=1$. Additionally, we apply a scaling factor of $\frac{k}{n}$.
\[
  q_{vc} \;=\;
  \begin{cases}
     \dfrac{k}{n}\cdot \dfrac{w_c}{\phi_v(x)}, & \text{if } c\in A_v,\\
     0, & \text{otherwise}.
  \end{cases}
\]

Now we show $(q,x)$ constitutes a Lindahl equilibrium.

\textbf{Condition (1).}
Fix a voter $v\in N$. Note that
\[
  q_v^\top x
  \;=\;
  \sum_{c \in A_v} \frac{k}{n} \frac{w_c}{\phi_v(x)}\,x_c
  \;=\;
  \frac{k}{n} \frac{1}{\phi_v(x)} \sum_{c\in A_v} w_c \, x_c
  \;=\;
  \frac{k}{n},
\]
so $y_v = x$ satisfies the constraint $q_v^{T} y_v\leq \frac{k}{n}$. Note that $q_{vc}$ is proportional to $w_c$ for $c\in A_v$, and $x_c=1$ if $w_c<1$. So for another solution $y\in [0,1]^m$, if $\sum_{c\in A_v}y_{vc} > \sum_{c\in A_v}x_c$, then we have
\begin{align*}
    q_v^{T} y_v - q_v^{T} x
    = & \sum_{c\in A_v: w_c=1} \frac{k}{n} \frac{1}{\phi_v(x)} \left(y_{vc} - x_{c}\right) + \sum_{c\in A_v: w_c<1} \frac{k}{n} \frac{w_c}{\phi_v(x)} \left(y_{vc} - 1\right)   \\
    \geq & \sum_{c\in A_v: w_c=1} \frac{k}{n} \frac{1}{\phi_v(x)} \left(y_{vc} - x_{c}\right) + \sum_{c\in A_v: w_c<1} \frac{k}{n} \frac{1}{\phi_v(x)} \left(y_{vc} - 1\right)   \\
    = & \frac{k}{n \phi_v(x)} \left( \sum_{c\in A_v}y_{vc} - \sum_{c\in A_v}x_c \right)
    >0,
\end{align*}
implying $q_v^{T} y_v > q_v^{T} x = \frac{k}{n}$, so $y$ is not feasible. Thus $x$ is a utility-maximizing choice for voter $v$.

\textbf{Condition (2).}
Let $Q = \sum_{v\in N} q_v$, and define the aggregator's profit function:
\[
  \Pi(z) \;=\; Q^\top z \;-\; \|z\|_1
  \quad
  \text{for } z \in [0,1]^m,\, \|z\|_1 \le k,
\]
as required by Definition~\ref{def:lindahl}. We show that $z=x$ maximizes $\Pi(z)$. Write $Q_c=\sum_{v\in N}q_{vc}$. First, suppose that $x_c>0$. The affordability condition gives
\[
  x_c w_c\sum_{v:c\in A_v}\frac{1}{\phi_v(x)}
  =\sum_{v\in N}p_{vc}
  =\frac{n}{k}x_c.
\]
Dividing by $x_c>0$ and using the definition of $q_{vc}$ yields
\[
  Q_c
  =\frac{k}{n}w_c\sum_{v:c\in A_v}\frac{1}{\phi_v(x)}
  =1.
\]
Next, suppose that $x_c=0$. The stability condition applies because $x_c<1$, and weight-selection coupling gives $w_c=1$. Therefore,
\[
  Q_c
  =\frac{k}{n}\sum_{v:c\in A_v}\frac{w_c}{\phi_v(x)}
  =\frac{k}{n}\sum_{v:c\in A_v}\frac{1}{\phi_v(x)}
  \leq 1.
\]
Thus $Q_c\leq1$ for every $c\in C$, with equality whenever $x_c>0$. Consequently, for every $z\in[0,1]^m$ satisfying $\|z\|_1\leq k$,
\[
  \Pi(z)
  =\sum_{c\in C}(Q_c-1)z_c
  \leq 0
  =\sum_{c\in C}(Q_c-1)x_c
  =\Pi(x).
\]
Hence $x$ maximizes the aggregator's profit over the feasible set in Definition~\ref{def:lindahl}, establishing Condition~(2).

Since both conditions of Definition \ref{def:lindahl} are verified, $(q_1,\dots,q_n,x)$ is indeed a Lindahl equilibrium.

\end{proof}

\subsection{Proof of Theorem \ref{thm:disc:incore}} \label{sec:proof:disc:incore}

\discincore*

We split this theorem into two lemmas.

\begin{lemma}   \label{lem:disc:incore:1}
    Suppose $w\in (0,1]^m$ and $x\in \{0,1\}^m$ where $\|x\|_1\leq k$. If $(w,x)$ is in the discrete Nash core, then it is in the weak Nash core.
\end{lemma}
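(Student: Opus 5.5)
The first condition of the weak Nash core ($w_c=1$ whenever $x_c=0$) is literally the first condition of the discrete Nash core, so only the strict price inequality $\sum_{v:c\in A_v}\frac{w_c}{\phi_v(x)+1}<\frac{n}{k}$ needs proof. I would argue by exchange against PAV-optimality. Recall that the competitor $z$ in Definition~\ref{def:disccore} ranges over $\{0,1,\dots,k\}^m$, so a candidate may be added a second time.

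\textbf{Step 1: the budget is exhausted.} If $\|x\|_1<k$, I would add one extra copy of an arbitrary candidate $c$, i.e.\ take $z=x+e_c$. This is feasible and raises $\phi_v$ by $w_c>0$ for every $v\in N(c)$. Since $H$ is strictly increasing, this strictly increases $sc_{PAV}$, contradicting optimality. Hence $\|x\|_1=k$.

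\textbf{Step 2: bound the gain from adding and the loss from removing.} Fix a candidate $c$. Adding one copy of $c$ gains
\[
\mathrm{gain}_c=\sum_{v:c\in A_v}\bigl(H(\phi_v+w_c)-H(\phi_v)\bigr)\ge\sum_{v:c\in A_v}\frac{w_c}{\phi_v+1},
\]
by the second bullet of Corollary~\ref{cor:cont:h}. For a selected candidate $c'$, removing it loses
\[
\mathrm{loss}_{c'}=\sum_{v:c'\in A_v}\bigl(H(\phi_v)-H(\phi_v-w_{c'})\bigr)\le\sum_{v:c'\in A_v}\frac{w_{c'}}{\phi_v},
\]
by the first bullet (valid because $\phi_v\ge w_{c'}$). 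Summing over all selected $c'$ and exchanging the order of summation gives $\sum_{c':x_{c'}=1}\mathrm{loss}_{c'}\le\sum_{v}\sum_{c'\in A_v,\,x_{c'}=1}\frac{w_{c'}}{\phi_v}\le n$. Since exactly $k$ candidates are selected, some $c'$ has $\mathrm{loss}_{c'}\le n/k$.

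\textbf{Step 3: the simultaneous swap.} Consider $z=x+e_c-e_{c'}$, which is feasible. For a voter approving both $c$ and $c'$, concavity of $H$ gives
\[
H(\phi+a-b)-H(\phi)\ge\bigl[H(\phi+a)-H(\phi)\bigr]-\bigl[H(\phi)-H(\phi-b)\bigr],
\]
so the change in score is at least $\mathrm{gain}_c-\mathrm{loss}_{c'}$. Optimality of $x$ then gives
\[
\sum_{v:c\in A_v}\frac{w_c}{\phi_v+1}\le\mathrm{gain}_c\le\mathrm{loss}_{c'}\le\frac nk .
\]
The case $c'=c$ is harmless here: it just compares $\mathrm{gain}_c$ with $\mathrm{loss}_c$, which are the two neighbouring marginal differences of a concave function, so the inequality $\mathrm{gain}_c\le\mathrm{loss}_c$ holds by concavity.

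\textbf{Step 4, the main obstacle: strictness.} Every estimate above can be tight. The Corollary bounds are equalities when the weights equal $1$, and the averaging is tight when all of each voter's selected approved candidates have weight $1$. So Steps 1--3 only deliver the weak inequality. To upgrade it, I would first try to exploit the cross terms. Suppose equality held throughout; then every selected $c'$ has $\mathrm{loss}_{c'}=n/k$ and every Corollary bound is attained, which essentially forces the relevant weights to equal $1$. I would then pick $c'$ sharing a voter $v$ with $c$ (which exists whenever $c$ has a supporter with $\phi_v>0$). For that voter the swap changes utility by $0$, whereas $\mathrm{gain}-\mathrm{loss}$ charges $\frac{1}{\phi_v+1}-\frac1{\phi_v}<0$, so the swap strictly improves the score. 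If instead every supporter of $c$ has $\phi_v=0$, then $\mathrm{gain}_c$ is a sum of terms $w_c\le 1$ and $H(w_c)\ge w_c$, and I would compare directly with the average loss. I expect this equality-case bookkeeping to be the delicate part; I would also check whether strict concavity of $H$ on non-integer arguments ($w<1$) already gives strictness outright in the remaining cases.
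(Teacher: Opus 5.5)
Your Steps 1--3 are correct and prove the weak inequality $\sum_{v:c\in A_v}\frac{w_c}{\phi_v(x)+1}\le\frac nk$. This is a different route from the paper: you swap $c$ against a selected $c'$ chosen by averaging over the $k$ units of $x$, and you use the majorization inequality $H(\phi+a-b)+H(\phi)\ge H(\phi+a)+H(\phi-b)$ for voters approving both. (One small slip: in Step 1 the added candidate must be approved by some voter, otherwise the score does not increase.) However, the weak Nash core requires the \emph{strict} inequality, and that is exactly what your proposal does not prove. Step 4 is a plan rather than an argument: it is hedged (``essentially forces'', ``I would compare directly''), and it leaves the subcase where every supporter of $c$ has $\phi_v=0$ unresolved. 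It also overlooks that the selected candidate shared with $c$ may be $c$ itself, in which case there is no swap at all. The loss of strictness is structural: averaging removal losses over the $k$ units of $x$, measured at $\phi_v(x)$, gives exactly $n/k$.

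The paper's missing idea is to remove a unit from the augmented committee $y=x+e_c$ rather than from $x$. Since $\|y\|_1=k+1$, draw $c'$ with probability $y_{c'}/(k+1)$ and apply the first bullet of Corollary~\ref{cor:cont:h} at $\phi_v(y)$. The expected loss is then at most $|\{v:\phi_v(y)>0\}|/(k+1)\le n/(k+1)$. The overlap with $c$ is already absorbed into $\phi_v(y)$, so no cross-term argument is needed. Strictness then follows for free from $\frac nk-\frac n{k+1}>0$, with no equality-case analysis.

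If you prefer to keep your swap route, it can be closed as follows. Under equality, the chain $\frac nk\le\mathrm{gain}_c\le\min_{c'}\mathrm{loss}_{c'}\le\frac1k\sum_{c'}\mathrm{loss}_{c'}\le\frac nk$ is tight. Tightness forces $\phi_v(x)>0$ for every voter, so your $\phi_v=0$ subcase is vacuous, and it forces $\mathrm{loss}_{c'}=\frac nk$ for every selected $c'$. Now use strict concavity of $H$, which holds since $H''(z)=\psi''(z+1)<0$, although Lemma~\ref{lem:cont:h} records only concavity. If $c$ is selected, $\mathrm{gain}_c<\mathrm{loss}_c=\frac nk$ gives a contradiction. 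If $c$ is not selected, some supporter of $c$ approves a selected $c'\neq c$. For that voter the majorization inequality is strict, so $sc_{PAV}(w,x+e_c-e_{c'})-sc_{PAV}(w,x)>\mathrm{gain}_c-\mathrm{loss}_{c'}=0$, contradicting optimality.
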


\begin{proof}

Assume $(w,x)$ is in the discrete Nash core but it is not in the weak Nash core. Then there exists $c\in C$, such that $\sum_{v:c\in A_v}\frac{w_c}{\phi_v(x)+1}\geq \frac{n}{k}$.

Let $y=x+e_c$, where \(e_c\) is the standard basis vector having a 1 in the \(c\)-th position and 0 in every other position. Then we have
\begin{align*}
    sc_{\text{PAV}}(w,y) - sc_{\text{PAV}}(w,x)
    = & \sum_{v:c\in A_v} \left( H(\phi_v(y))-H(\phi_v(x)) \right) \\
    \geq & \sum_{v:c\in A_v} \frac{w_c}{\phi_v(x)+1}
    \geq \frac{n}{k},
\end{align*}
where the first inequality follows from Corollary \ref{cor:cont:h}.

Then we consider drawing $c'$ with probability proportional to $y_{c'}$, and define $z=y-e_{c'}$. If $\|x\|_1<k$, then $y$ is feasible for the maximization in Definition~\ref{def:disccore}, and the preceding inequality contradicts optimality. Hence we may assume $\|x\|_1=k$, so $\|y\|_1=k+1$. Then we have
\begin{align*}
    & \E{sc_{\text{PAV}}(w,z) - sc_{\text{PAV}}(w,y)} \\
    =& \frac{1}{k+1}\sum_{c': y_{c'}>0} y_{c'} \sum_{v:c'\in A_v} \left( H(\phi_v(y)-w_{c'})-H(\phi_v(y)) \right)    \\
    \geq & - \frac{1}{k+1}\sum_{c': y_{c'}>0} y_{c'} \sum_{v:c'\in A_v} \frac{w_{c'}}{\phi_v(y)}    \\
    = & - \frac{1}{k+1}\sum_{v:\phi_v(y)>0} \frac{\sum_{c'\in A_v}w_{c'}y_{c'}}{\phi_v(y)} \\
    = & -\frac{|\{v\in N:\phi_v(y)>0\}|}{k+1}
    \geq -\frac{n}{k+1},
\end{align*}
where the inequality follows from Corollary \ref{cor:cont:h}. It implies there exists $c'\in C$ such that $y_{c'}\geq 1$ and $sc_{PAV}(w,y-e_{c'}) - sc_{PAV}(w,y)\geq -\frac{n}{k+1}$. Combining with $sc_{PAV}(w,y) - sc_{PAV}(w,x)\geq \frac{n}{k}$, we have $sc_{PAV}(w,y-e_{c'})>sc_{PAV}(w,x)$. Note that $\|y-e_{c'}\|=k$, so it violates the optimality condition $sc_{PAV}(w,x)=\max\limits_{z\in \{0,1,\dots,k\}^m: \|z\|_1\leq k}sc_{PAV}(w,z)$, which contradicts to the assumption that $(w,x)$ is in the discrete Nash core.

As a result, $(w,x)$ is in the weak Nash core.

\end{proof}

\begin{lemma}   \label{lem:disc:incore:2}
    Suppose $w\in (0,1]^m$ and $x\in \{0,1\}^m$ where $\|x\|_1\leq k$. If $(w,x)$ is in the weak Nash core, then $x$ is in the discrete core.
\end{lemma}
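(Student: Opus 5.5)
Suppose for contradiction that $(w,x)$ is in the weak Nash core but some coalition $S\subseteq N$ and nonzero $y\in\{0,1\}^m$ satisfy $|S|\ge \frac{n}{k}\|y\|_1$ and $u_v(y)\ge u_v(x)+1$ for every $v\in S$ (integrality turns the strict preference into a gap of at least one). The plan is a double-counting argument that mirrors the proof of Theorem~\ref{thm:frac:incore}, with the marginal term $\frac{w_c}{\phi_v(x)+1}$ from the weak Nash core playing the role of the gradient. Sum the weak Nash core inequality over the candidates in $y$ that are not already in $x$, compare with the coalition's total weighted gain, and contradict $|S|\ge \frac{n}{k}\|y\|_1$.

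\textbf{Step 1 (weighted gain).} Let $W=\{c:w_c<1\}$; by the coupling condition $x_c=1$ on $W$. Repeating the computation in the proof of Theorem~\ref{thm:frac:incore} shows that for $v\in S$,
\[
\sum_{c\in A_v}w_c y_c-\phi_v(x)\ \ge\ u_v(y)-u_v(x)\ \ge 1,
\]
because $w_c(y_c-1)\ge y_c-1$ when $w_c\le 1$ and $y_c\le 1$. Split $y$ into $y^{\mathrm{new}}$, the candidates with $y_c=1$ and $x_c=0$ (these have $w_c=1$), and the rest, which lie in $x$. Then $\sum_{c\in A_v}w_cy_c\le \phi_v(x)+|A_v\cap \mathrm{supp}(y^{\mathrm{new}})|$. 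Hence each $v\in S$ approves at least one new candidate, and the count satisfies
\[
g_v:=|A_v\cap\mathrm{supp}(y^{\mathrm{new}})|\ \ge\ \sum_{c\in A_v}w_cy_c-\phi_v(x)\ \ge 1 .
\]

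\textbf{Step 2 (summing the weak Nash core condition).} Every new candidate $c$ has $w_c=1$, so $\sum_{v:c\in A_v}\frac{1}{\phi_v(x)+1}<\frac nk$. Summing over $c\in\mathrm{supp}(y^{\mathrm{new}})$ gives
\[
\sum_{v\in S}\frac{g_v}{\phi_v(x)+1}\ <\ \frac nk\,\|y^{\mathrm{new}}\|_1\ \le\ \frac nk\|y\|_1\ \le |S|
\]
(the strictness needs $y^{\mathrm{new}}\neq 0$, which Step 1 ensures). It therefore suffices to show $\frac{g_v}{\phi_v(x)+1}\ge 1$ for each $v\in S$, which would be a contradiction.

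\textbf{Step 3 (the main obstacle).} The per-voter bound $g_v\ge\phi_v(x)+1$ is false in general. Step 1 only gives $g_v\ge 1$ plus the weighted-loss relation, namely that $v$ drops at least $g_v-1$ weight units of old candidates, not that $g_v$ exceeds $\phi_v(x)$. The fix I would try first is to charge the removed candidates as well, as the proof of Theorem~\ref{thm:frac:incore} does through $\|y\|_1\le k|S|/n$.

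The removed set $D=\mathrm{supp}(x)\setminus\mathrm{supp}(y)$ is available to the coalition only because $y$ has size at most $k|S|/n$. I would compare $y$ against $x$ scaled to the budget, as in the fractional proof, using $\sum_v\frac{\phi_v(x)}{\phi_v(x)}=n=\frac nk\|x\|_1$ when $\|x\|_1=k$. This handles the old part: writing $\sum_{v\in S}\frac{\sum_{c\in A_v}w_cy_c}{\phi_v(x)+1}$ and bounding the old-candidate contribution by $\frac{\phi_v(x)}{\phi_v(x)+1}<1$ per voter should combine with Step 2 into an inequality of the form
\[
|S|\ \le\ \sum_{v\in S}\frac{\sum_{c\in A_v}w_cy_c}{\phi_v(x)+1}\ <\ \frac nk\|y\|_1,
\]
where the first inequality uses $\sum_{c\in A_v}w_cy_c\ge\phi_v(x)+1$ from Step 1. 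The delicate point is the second inequality: the weak Nash core constraint only covers unselected candidates, so I would need an analogous bound $\sum_{v:c\in A_v}\frac{w_c}{\phi_v(x)+1}\le\frac nk$ for candidates in $x$. The case $\|x\|_1<k$ must also be handled. I expect this step, which derives the selected-candidate price bound or replaces it with an exchange argument, to be the crux.
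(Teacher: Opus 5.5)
Your proposal does not yet prove the lemma. The second inequality in your Step 3 display is left open as the ``crux'', and it is left open because of a misreading of the definition. The weak Nash core condition is not restricted to unselected candidates: it requires $\sum_{v:c\in A_v}\frac{w_c}{\phi_v(x)+1}<\frac nk$ for \emph{every} $c\in C$.

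So the selected-candidate bound you planned to derive is actually a hypothesis, and with it your chain closes immediately. Write $\phi_v(y):=\sum_{c\in A_v}w_cy_c$. Multiply the condition by $y_c$ and sum over all candidates; the inequality is strict because $y\neq 0$. Then interchange the sums and drop the voters outside $S$:
\[
\frac nk\|y\|_1>\sum_{c\in C}y_c\sum_{v:c\in A_v}\frac{w_c}{\phi_v(x)+1}=\sum_{v\in N}\frac{\phi_v(y)}{\phi_v(x)+1}\ge\sum_{v\in S}\frac{\phi_v(y)}{\phi_v(x)+1}\ge|S|\ge\frac nk\|y\|_1 .
\]
The second-to-last step is your correct Step 1 bound $\phi_v(y)\ge\phi_v(x)+1$ for $v\in S$. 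This is exactly the paper's proof. The new/old split in Step 2, the comparison with $x$ scaled to the budget, and the separate case $\|x\|_1<k$ are all unnecessary, since nothing uses $\|x\|_1=k$.

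Under your reading, no exchange argument could have closed the gap, because the statement would then be false. Take $w=\mathbf{1}$ and let $x$ be a PAV-optimal committee of size $k$. Suppose some unselected $c$ had $\sum_{v:c\in A_v}\frac{1}{u_v(x)+1}\ge\frac nk$. After adding $c$, the removal losses over $x\cup\{c\}$ sum to at most $n$, and the loss of $c$ itself is at least $\frac nk$. So the losses of the $k$ members of $x$ sum to at most $n-\frac nk$, and removing the cheapest one would strictly raise the PAV score, a contradiction. Hence PAV committees satisfy your unselected-only condition strictly. Yet, as recalled in Section~\ref{sec:disc}, they can fail the discrete core. The constraint on selected candidates is exactly what rules such committees out.
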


\begin{proof}
  Assume $(w,x)$ is in the weak Nash core, but $x$ is not a discrete core solution. Then there exists $S\subseteq N$ and nonzero $y\in \{0,1\}^{m}$ such that $\onenorm{y}\leq \frac{k|S|}{n}$, and $1+\sum_{c\in A_v}x_c\leq \sum_{c\in A_v}y_c$ for all $v\in S$.

  Let $W = \{c\in C: w_c<1\}$, so $x_c=1$ for $c\in W$. For $v\in S$, since $1+\sum_{c\in A_v}x_c\leq \sum_{c\in A_v}y_c$, we have
  \begin{align*}
     \sum_{c\in A_v}w_{c}y_{c} - \sum_{c\in A_v}w_{c}x_{c}
    =& \sum_{c\in A_v\cap W} w_{c} (y_{c} - 1) + \sum_{c\in A_v\setminus W} (y_{c}-x_{c})   \\
    \geq & \sum_{c\in A_v\cap W} (y_{c} - 1) + \sum_{c\in A_v\setminus W} (y_{c}-x_{c})   \\
    =& \sum_{c\in A_v}y_c - \sum_{c\in A_v}x_c \geq 1,
  \end{align*}
  implying $1+\sum_{c\in A_v}w_{c}x_{c} \leq \sum_{c\in A_v}w_{c}y_{c}$.

  Recall that since $(w,x)$ is in the weak Nash core, for each candidate $c\in C$, $\sum_{v:c\in A_v}\frac{w_c}{\phi_v(x)+1}< \frac{n}{k}$, where $\phi_v(x)=\sum_{c\in A_v}w_cx_c$. Then we have
  \begin{align*}
      \frac{n}{k}\|y\|_1
      >& \sum_{c\in C}y_c \sum_{v:c\in A_v}\frac{w_c}{\phi_v(x)+1}  \\
      = & \sum_{v\in N} \sum_{c\in A_v}\frac{w_c y_c}{\phi_v(x)+1}  \\
      \geq & \sum_{v\in S} \sum_{c\in A_v}\frac{w_c y_c}{\phi_v(x)+1}  \\
      = & \sum_{v\in S}\frac{\phi_v(y)}{\phi_v(x)+1}
      \geq |S| \geq \frac{n}{k}\|y\|_1,
  \end{align*}
  which is a contradiction.

  As a result, $x$ is a discrete core solution.
\end{proof}

The proof of Theorem \ref{thm:disc:incore} follows immediately from Lemmas \ref{lem:disc:incore:1} and \ref{lem:disc:incore:2}. A direct corollary of Theorem \ref{thm:disc:incore} is the following.
\begin{corollary}
    Suppose $w\in (0,1]^m$ and $x\in \{0,1\}^m$ where $\|x\|_1\leq k$. If $(w,x)$ is in the discrete Nash core, then $x$ is in the discrete core.
\end{corollary}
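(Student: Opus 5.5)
The statement immediately above is the corollary that a discrete Nash core pair $(w,x)$ has $x$ in the discrete core. The plan is simply to chain the two implications of Theorem~\ref{thm:disc:incore}, which were proved separately as Lemmas~\ref{lem:disc:incore:1} and~\ref{lem:disc:incore:2}. No new argument should be needed beyond checking that the hypotheses line up.

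\textbf{Step 1.} Given $w\in(0,1]^m$ and $x\in\{0,1\}^m$ with $\|x\|_1\le k$ and $(w,x)$ in the discrete Nash core, I would invoke Lemma~\ref{lem:disc:incore:1}. This concludes that $(w,x)$ is in the weak Nash core, i.e.,
\[
\sum_{v:c\in A_v}\frac{w_c}{\phi_v(x)+1}<\frac{n}{k}\quad\text{for every candidate } c,
\]
together with weight--selection coupling. That lemma's hypotheses are exactly ours, so nothing further needs to be verified here.

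\textbf{Step 2.} Since the pair $(w,x)$ is unchanged and still satisfies $w\in(0,1]^m$, $x\in\{0,1\}^m$, and $\|x\|_1\le k$, Lemma~\ref{lem:disc:incore:2} applies directly. It yields that $x$ is in the discrete core of Definition~\ref{def:disc:def}.

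\textbf{Main obstacle.} The only point deserving attention is that the coupling condition is carried through Step~1. Definition~\ref{def:disccore} and the weak Nash core both require $w_c=1$ whenever $x_c=0$. This condition is precisely what Lemma~\ref{lem:disc:incore:2} uses, via the set $W=\{c:w_c<1\}$, to compare weighted and unweighted utility gains. Since the weak Nash core definition repeats this condition verbatim, it transfers automatically and the corollary follows.
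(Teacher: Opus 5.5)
Your proposal is correct and matches the paper, which states this corollary as a direct consequence of Theorem~\ref{thm:disc:incore}, i.e., Lemma~\ref{lem:disc:incore:1} (discrete Nash core $\Rightarrow$ weak Nash core) followed by Lemma~\ref{lem:disc:incore:2} (weak Nash core $\Rightarrow$ discrete core). Your remark that the coupling condition is identical in both definitions, and so carries through the chain, is right.
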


\section{Supporting Results for Theorem~\ref{thm:bv-main}}
\label{sec:apdx:bv-analytical}
\label{sec:apdx:computer-assisted}

This appendix provides the supporting results used in the proof of Theorem~\ref{thm:bv-main}. We first prove the minimum-fractionality reduction in Proposition~\ref{prop:bv-reduced-matrix}, derive the blocking-coalition inequalities in Proposition~\ref{prop:bv-blocker-conditions}, and establish the eight-voter saturation bound in Lemma~\ref{lem:bv-lambda}. We then give the computer-assisted proof of the eight-row classification in Lemma~\ref{lem:bv-ca-eight}. The reductions used by the exact checker are proved in Proposition~\ref{prop:bv-polyhedral}, and the completeness and soundness of the enumeration and verification are established in Proposition~\ref{prop:bv-ca-verification}. We retain our independent six- and seven-row classifications in Section~\ref{sec:apdx:bv-predecessors} as supplementary results; they are not used to prove Theorem~\ref{thm:bv-main}. Finally, Proposition~\ref{prop:bv-floor-rounding-hardness} proves that preserving all voter-utility floors when rounding a specified fractional committee is NP-complete.

\subsection{Proof of the minimum-fractionality reduction}
\label{sec:proof:bv-reduced-matrix}

\bvreducedmatrix*

\begin{proof}[Proof of Proposition~\ref{prop:bv-reduced-matrix}]
If two fractional candidates have identical supports, transfer mass from one to the other while preserving their sum until one coordinate reaches zero or one.  Every weighted voter utility and the total budget remain unchanged, so the resulting pair is still a fractional Nash-core pair but has fewer fractional coordinates.

If two fractional candidates $a,b$ satisfy $N(a)\subsetneq N(b)$, then for sufficiently small $\varepsilon>0$ the changes $f_a\leftarrow f_a-\varepsilon$ and $f_b\leftarrow f_b+\varepsilon$ preserve the budget, weakly increase every weighted voter utility, and strictly increase the utility of each voter in $N(b)\setminus N(a)$.  This contradicts Nash optimality.  This proves part~\textup{(i)}.

For part~\textup{(ii)}, suppose $A\gamma=0$ for some nonzero $\gamma\in\R^d$, oriented so that $\one^\top\gamma\le0$.  Since $f\in(0,1)^d$, move from $f$ in direction $\gamma$ until the first coordinate reaches zero or one.  All voter utilities remain unchanged.  If $\one^\top\gamma=0$, the number of fractional coordinates decreases.  If $\one^\top\gamma<0$, the resulting pair is still Nash-optimal but uses less than the full budget. Scaling it to use the full budget would strictly increase every weighted utility, contradicting the Nash optimality condition in Definition~\ref{def:nashcore}. Thus $A$ has full column rank.

For part~\textup{(iii)}, the cases $\kappa=0$ and $\kappa=d$ are already integral.  If $\kappa=1$, every row with positive floor must contain every fractional column, so any one column implements all positive floor coordinates.  If $\kappa=d-1$, put $q:=\one-f$.  Then $q>0$ and $\one^\top q=1$.  A row of degree zero has floor zero; a row of degree $1\le d_i<d$ has $(Af)_i=d_i-(Aq)_i\in(d_i-1,d_i)$ and hence floor $d_i-1$; and a row of degree $d$ has floor $d-1$.  Selecting all but any one residual column therefore implements every row floor.  Nonimplementability forces $2\le\kappa\le d-2$.

Finally, every fractional candidate has weight one, so \eqref{eq:bv-price-by-status} gives $\alpha(N(c))=1$ for every column of $A$. These equations are exactly $A^\top\alpha=\one$, with $\alpha>0$.
\end{proof}

\subsection{Proof of the blocking-coalition inequalities}
\label{sec:proof:bv-blocker-conditions}

\bvblockerconditions*

\begin{proof}[Proof of Proposition~\ref{prop:bv-blocker-conditions}]
\medskip\noindent\emph{(i).} Put $R:=T\setminus I$ and $D:=I\setminus T$. Strict integral improvement gives
\begin{equation}\label{eq:bv-improvement-RD}
 u_i(R)-u_i(D)\ge q_i+1\quad\forall i\in S.
\end{equation}
Every candidate in $R$ lies in $F\cup O$, so Equation~\eqref{eq:bv-nash-price} implies
\begin{equation}\label{eq:bv-R-price}
 \sum_{i\in S}\alpha_i u_i(R)\le |R|.
\end{equation}
The total normalized payment made by $S$ to the fully selected candidates is $B_S:=\sum_{i\in S}\alpha_i\beta_i =t|S|-\sum_{i\in S}\alpha_i r_i$. The retained full candidates $I\setminus D$ carry at most $|I\setminus D|$ of this payment.  Since $w_c\le1$, ordinary $\alpha$-mass dominates weighted payment mass, and therefore
\begin{equation}\label{eq:bv-D-lower}
 \sum_{i\in S}\alpha_i u_i(D)
 \ge\max\{0,B_S-|I\setminus D|\}.
\end{equation}
If $B_S\le |I\setminus D|$, then $|R|\le t|S|-B_S$ because $|I\setminus D|+|R|=|T|\le t|S|$. If $B_S>|I\setminus D|$, equations~\eqref{eq:bv-R-price} and \eqref{eq:bv-D-lower} give
\[
 \sum_{i\in S}\alpha_i\bigl(u_i(R)-u_i(D)\bigr)
 \le |R|+|I\setminus D|-B_S
 \le t|S|-B_S.
\]
Thus in both cases
\[
 \sum_{i\in S}\alpha_i\bigl(u_i(R)-u_i(D)\bigr)
 \le t|S|-B_S
 =\sum_{i\in S}\alpha_i r_i.
\]
Comparison with \eqref{eq:bv-improvement-RD} proves part~\textup{(i)}.

\medskip\noindent\emph{(ii).} By \eqref{eq:bv-beta-relation}, the integer utility from the fully selected candidates satisfies $g_i\ge\ceil{t/\alpha_i-r_i}$.  The incumbent committee gives voter $i$ utility $g_i+q_i$.  A strict integral improvement therefore gives at least $\ceil{t/\alpha_i-r_i}+q_i+1 =\ceil{t/\alpha_i+q_i-r_i}+1$ approved candidates.  This cannot exceed the cardinality of the deviation, which is at most $\floor{ts}$, proving part~\textup{(ii)}.
\end{proof}

\subsection{Proof of the eight-voter saturation bound}
\label{sec:proof:bv-lambda}

\bvlambda*

\begin{proof}[Proof of Lemma~\ref{lem:bv-lambda}]
Put $a:=h_i-(Az)_i$ and $L:=\floor{ks/8}$.  Since $r_i=h_i+\delta_i$ with $\delta_i<1$, Proposition~\ref{prop:bv-blocker-conditions}(ii) gives
\[
  \frac{k}{8\alpha_i}-a-\delta_i\le L-1,
\]
and hence $k/(8\alpha_i)<L+a$.  Thus $L+a>0$ and
\[
  \alpha_i>\frac{k}{8(\floor{ks/8}+a)}.
\]
It remains to minimize the right-hand side over all integers $k\ge\kappa$.

Fix a residue class modulo eight and let $B_r\in\{\kappa,\ldots,\kappa+7\}$ be its first representative at least $\kappa$.  Write $k=B_r+8\ell$ and $D_r:=\floor{B_rs/8}+a$.  On this residue class the ratio is
\[
  g_r(\ell):=\frac{B_r+8\ell}{8(D_r+\ell s)}.
\]
Its successive differences have constant sign, namely the sign of $8D_r-sB_r=8a-(B_rs\bmod 8)$, and $g_r(\ell)\to1/s$ as $\ell\to\infty$.  If $D_r>0$, the infimum on the class is either $g_r(0)$ or $1/s$.  If $D_r\le0$, then $a\le0$.  The displayed sign is strictly negative: this is immediate when $a<0$, while for $a=0$ the condition $D_r=0$ forces $0<B_rs<8$, so $(B_rs\bmod 8)>0$.  Hence the admissible tail is decreasing and its infimum is $1/s$.  Taking the minimum over the eight residue classes gives \eqref{eq:bv-lambda-def} and proves the lemma.
\end{proof}

\subsection{Exact Eight-Row Classification}

This section contains the computer-assisted proof of Lemma~\ref{lem:bv-ca-eight}, together with its exact census table. The mathematical justification for the polyhedral reductions, enumeration completeness, and checker soundness is included here. Thus the proof-relevant eight-row verification is self-contained within the article appendices. The source code for the exact enumeration and verification is available at \url{https://github.com/nashcore-code/code/tree/main/discrete-core-nonemptiness}. The released eight-row verification driver consumes only the data and checkers in the modules \texttt{n8/m4} through \texttt{n8/m8}; it does not import the accepted six- or seven-row classification files.

\subsubsection{Exact verification framework}
\label{sec:bv-verification-overview}

The eight-row classification is a finite universal statement over residual matrices, floor regions, and Nash-price domains. Its verification uses the following elementary polyhedral facts. The same framework is also used for the independent six- and seven-row classifications retained later in Section~\ref{sec:apdx:bv-predecessors}.

\begin{proposition}[Polyhedral reductions used by the exact checkers]
\label{prop:bv-polyhedral}
Assume $\cR_{<}(A,\kappa,h)\ne\varnothing$.
\begin{enumerate}[label=\textup{(\roman*)}]
\item The open region is dense in its closed relaxation: $\cl(\cR_{<}(A,\kappa,h))=\cR_{\leq}(A,\kappa,h)$.  Consequently, every continuous affine function $L$ satisfies
\[
 \inf_{f\in\cR_{<}(A,\kappa,h)}L(f)
 =\min_{f\in\cR_{\leq}(A,\kappa,h)}L(f).
\]
\item If $Q$ is a nonempty compact convex set and $L(\alpha,f)$ is affine in $\alpha$ for every fixed $f\in Q$, then $g(\alpha):=\min_{f\in Q}L(\alpha,f)$ is concave.
\item If $A\ge0$, $v\ge0$, and $A^\top v=0$, then $v$ is supported only on zero rows of $A$.  Along such recession directions, one-deficit singleton margins and adaptive sums are constant, while coalition price margins are nondecreasing.
\item If $P$ and $Q$ are nonempty bounded polytopes and $B(x,y)$ is affine in each variable separately, then the minimum of $B$ on $P\times Q$ is attained at a pair of vertices.
\end{enumerate}
\end{proposition}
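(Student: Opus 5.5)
I would prove the four parts of Proposition~\ref{prop:bv-polyhedral} separately, since each is a standard convexity fact specialised to the objects $\cR_{<}$, $\cR_{\leq}$, $\cD(A)$ and the margins $G_i$, $P_{S,z}$.

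\textbf{Part~(i).} The inclusion $\cl(\cR_{<})\subseteq\cR_{\leq}$ holds because the defining weak inequalities are closed conditions. For the reverse inclusion, fix $f_0\in\cR_{<}$, which exists by assumption, and any $f_1\in\cR_{\leq}$. Consider the segment $f_\theta:=(1-\theta)f_1+\theta f_0$ for $\theta\in(0,1]$. Each constraint of $\cR_{<}$ is one of $0<f_c$, $f_c<1$, $(Af)_i<h_i+1$, or the weak inequality $h\le Af$, together with the equality $\one^\top f=\kappa$. Each strict constraint is affine and holds weakly at $f_1$ and strictly at $f_0$, so it holds strictly at $f_\theta$. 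The weak and equality constraints are preserved by convexity. Hence $f_\theta\in\cR_{<}$ and $f_\theta\to f_1$ as $\theta\to0$. Because $\cR_{\leq}$ is a compact polytope, any continuous (affine) $L$ attains its minimum there. That minimum equals the infimum over the dense subset $\cR_{<}$.

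\textbf{Part~(ii).} For fixed $f$, the map $\alpha\mapsto L(\alpha,f)$ is affine, hence concave. A pointwise infimum of concave functions is concave. Compactness of $Q$, together with continuity of $L$ in $f$ (it is affine in $f$ in all our uses), makes the minimum attained and finite.

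\textbf{Part~(iii).} Write $0=v^\top A\one=\sum_i v_i\,(\text{row sum}_i)$. Every term is nonnegative, so $v_i>0$ forces row $i$ of $A$ to vanish. For such a row $i$, we have $(Af)_i=0$, so $h_i=0$ and $\delta_i=0$.

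Next, trace the consequences along a direction $\alpha\mapsto\alpha+\tau v$.
\begin{itemize}
\item A zero row has $h_i=0$, so $i\notin B(h)$. Each $G_j$ with $j\in B(h)$ depends only on $\alpha_j$, where $v_j=0$. So singleton margins and adaptive sums are constant.
\item The change in the coalition margin $P_{S,z}$ is $\tau\sum_{j\in S}v_j\bigl((Az)_j+1-(Af)_j\bigr)=\tau\sum_{j\in S}v_j\ge0$, using $(Az)_j=(Af)_j=0$ on zero rows. So $P_{S,z}$ is nondecreasing.
\end{itemize}

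\textbf{Part~(iv).} This is the step needing the most care. Fix $y$. Then $B(\cdot,y)$ is affine on the polytope $P$, so its minimum is attained at a vertex of $P$. Hence $\min_{P\times Q}B=\min_{x\in\mathrm{vert}(P)}\min_{y\in Q}B(x,y)$. For each of the finitely many vertices $x$, $B(x,\cdot)$ is affine on $Q$ and is minimised at a vertex of $Q$. Existence of the overall minimum follows from compactness and continuity of a function affine in each variable, i.e.\ bilinear-type.

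The only subtle points are the strict/weak bookkeeping in (i), which relies on the assumed nonemptiness of $\cR_{<}$, and the zero-row observation in (iii). Everything else is routine.
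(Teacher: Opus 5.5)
Your proof is correct and follows the paper's argument in each part. Density comes from the segment toward a point of $\cR_{<}$, concavity from the infimum of affine functions, the zero-row claim from nonnegativity forcing $v$ onto rows where $(Az)_i=(Af)_i=0$, and (iv) from vertex replacement in each factor. Your differences are cosmetic: you sum $v^\top A\one$ rather than checking each column in (iii), and you write (iv) as an iterated minimum over vertex sets, which makes attainment automatic instead of presupposing a minimizing pair as the paper does.
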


\begin{proof}
For part~\textup{(i)}, fix $f^\circ\in\cR_{<}(A,\kappa,h)$.  For any $\bar f\in\cR_{\leq}(A,\kappa,h)$ and $\varepsilon\in(0,1]$, the point $(1-\varepsilon)\bar f+\varepsilon f^\circ$ belongs to the open region and converges to $\bar f$.  Continuity and compactness give the equality of optima.

For part~\textup{(ii)}, if $0\le\theta\le1$, then
\begin{align*}
 g(\theta\alpha+(1-\theta)\alpha')
 &=\min_{f\in Q}\bigl(\theta L(\alpha,f)
 +(1-\theta)L(\alpha',f)\bigr)\\
 &\ge\theta g(\alpha)+(1-\theta)g(\alpha').
\end{align*}
For part~\textup{(iii)}, each column $c$ satisfies $0=(A^\top v)_c=\sum_i A_{ic}v_i$.  All summands are nonnegative, so $v_i>0$ is possible only on a zero row.  On such a row $(Az)_i=(Af)_i=0$, which gives the asserted slopes of the relevant margins. For part~\textup{(iv)}, start with a minimizing pair.  Holding one coordinate fixed, replace the other by a minimizing vertex, and then repeat for the second coordinate.
\end{proof}

Part~\textup{(i)} permits affine checks on the closed floor polytope, provided that a zero boundary minimum is accepted only when a separate relative-interior argument supplies strictness.  Part~\textup{(ii)} reduces a one-dimensional dual domain to its endpoints and recession behavior, while part~\textup{(iv)} reduces bounded two-dimensional price--floor products to vertex pairs.  Full column rank leaves a unique price vector for $d=8$, a one-dimensional domain for $d=7$, and a two-dimensional domain for $d=6$; the remaining recession directions are covered by part~\textup{(iii)}.

\begin{proposition}[Soundness and completeness of the finite verification]
\label{prop:bv-ca-verification}
The exact-verification pipeline has the following properties:
\begin{enumerate}[label=\textup{(\roman*)}]
\item up to voter and candidate relabeling, every admissible residual matrix is enumerated;
\item every nonimplementable floor with a nonempty open region is examined unless an exact infeasibility certificate removes it;
\item every subset or cover-dual rejection is sound;
\item every retained record satisfies the relevant classification branch on its complete real floor and price domains;
\item every unresolved eight-row region occurs exactly once in the accepted certificate file or the failure file, and the failure files are empty.
\end{enumerate}
\end{proposition}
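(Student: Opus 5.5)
We prove the five items of Proposition~\ref{prop:bv-ca-verification} separately, by matching each stage of the verification pipeline with the mathematical reduction it implements. Items~(i) and~(ii) are completeness claims about enumeration. Items~(iii) and~(iv) are soundness claims about the individual checks. Item~(v) is a bookkeeping claim that a finite run of the checker confirms.

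For~(i), we fix a canonical form under simultaneous row and column permutations, for example the lexicographically minimal matrix in its orbit. We then argue that the generator visits every canonical $8\times d$ binary matrix with $4\le d\le8$. Its pruning filters are exactly the conditions of Proposition~\ref{prop:bv-reduced-matrix}: distinct columns, antichain, full column rank, and $\cD(A)\neq\varnothing$. Each filter is sound because these conditions are relabeling-invariant. The test $\cD(A)\neq\varnothing$ is an exact rational LP feasibility problem, so it is decided without floating-point error.

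For~(ii), for each matrix and each $\kappa\in\{2,\dots,d-2\}$ we enumerate the finitely many integer $h$ with $0\le h_i\le \deg_i$. Nonimplementability is checked by exhaustive search over the $\binom d\kappa$ selections. Nonemptiness of $\cR_{<}$ is decided by an exact LP with strictness handled by a slack variable; this emptiness test is the only mechanism allowed to discard a floor, and we must verify that its certificate is an exact Farkas certificate.

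For~(iii), a \emph{subset rejection} removes a floor whose region is contained in a region already certified, and containment of regions transfers the certificate. A \emph{cover-dual rejection} uses a Farkas-type combination proving a required strict inequality on the closed region. Proposition~\ref{prop:bv-polyhedral}(i) justifies passing to the closed polytope, provided that boundary zeros are excluded by a separate relative-interior witness.

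For~(iv), we reduce each universally quantified branch to finitely many exact checks:
\begin{itemize}
\item Branch~(I) is a statement that the region lies in the union of the hyperplanes $(Af)_i=h_i$, and it is checked by LP.
\item Branches~(II) and~(III) quantify over $f\in\cR_{<}$ and $\alpha\in\cD(A)$. When $d=8$, $\alpha$ is unique. When $d=7$ or $d=6$, the set $\cD(A)$ is a polyhedron of dimension one or two. By Proposition~\ref{prop:bv-polyhedral}(ii) the function $\alpha\mapsto\min_f G_i$ is concave, so it suffices to check it at the vertices. Recession rays are handled by Proposition~\ref{prop:bv-polyhedral}(iii), which shows that the margins are constant or nondecreasing along them.
\item For bounded pieces, the bilinear margins $P_{S,z}$ are minimized at vertex pairs by Proposition~\ref{prop:bv-polyhedral}(iv). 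Intersecting with the open half-spaces $\alpha_j>\lambda$ is handled by closing them and again requiring strictness separately.
\end{itemize}
The \textbf{main obstacle} is this strictness bookkeeping: a zero minimum on the closure must be shown not to occur on the actual open domain. I would try to do this by exhibiting a relative-interior point where the relevant affine form is positive, and then using affineness to conclude that a nonnegative form on the closure is strictly positive on the open domain. This works unless the form vanishes identically on the domain, and the checker would have to flag and exclude that case.

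Finally,~(v) follows from an audit that maps each unresolved region to a unique key, checks bijectivity against the union of the certificate and failure files, and confirms that the failure file is empty. All arithmetic throughout is in exact rationals.
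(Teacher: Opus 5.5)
\paragraph{Assessment.} Your items (iv) and (v) essentially follow the paper. Both reconstruct every accepted record exactly, use Proposition~\ref{prop:bv-polyhedral} to promote finitely many endpoint, recession and vertex checks to the full real domains, and close with an exact key comparison against empty failure files. In (ii), brute-forcing every floor in the degree box is a valid, if cruder, substitute for the paper's incremental maintenance of the minimal nonimplementable floors via the down-closure of $\cU(A,\kappa)$.

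\paragraph{Main gap: item (iii).} You have misidentified both rejections. They are \emph{emptiness} certificates for the floor region. That is what item (ii) of the statement calls an ``exact infeasibility certificate'', so your remark that the LP test is the only discarding mechanism contradicts the statement itself.
\begin{itemize}
\item The ``subset'' is a voter subset $S$. Every $f\in\cR_{\leq}(A,\kappa,h)$ satisfies $\sum_{i\in S}h_i\le\sum_{i\in S}(Af)_i=\sum_c f_c|N(c)\cap S|$. Because $0\le f\le\one$ and $\one^\top f=\kappa$ is an integer, the right-hand side is at most the sum of the $\kappa$ largest intersections $|N(c)\cap S|$. An exact violation of this bound proves the region empty.
\item The cover-dual filter rejects only when an exactly recomputed nonnegative dual vector for the covering constraints $Af\ge h$, $f\le\one$ has value strictly above $\kappa$. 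Weak duality then shows that no such $f$ has $\one^\top f=\kappa$.
\end{itemize}
Your notion of a subset rejection (containment in an already certified region, with the certificate transferred) cannot be what is meant. On $\cR_{<}(A,\kappa,h)$ one has $h=\floor{Af}$, so regions of distinct floors are disjoint and containment happens only for empty regions. Certificates would not transfer anyway, since $B(h)$, $\delta$ and the vectors $z^i$ all depend on $h$. Likewise, your cover-dual step ``proving a required strict inequality on the closed region'', with Proposition~\ref{prop:bv-polyhedral}(i) and a relative-interior witness, treats it as a branch certificate. A rejection needs no strictness bookkeeping: emptiness of $\cR_{\leq}(A,\kappa,h)$ already implies emptiness of $\cR_{<}(A,\kappa,h)$.

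\paragraph{Second gap: item (i).} You assert that the generator visits every canonical matrix and justify the pruning filters by relabeling invariance. That justifies filtering only an exhaustive scan. The paper instead generates by canonical augmentation, adding one column bit mask at a time and discarding non-admissible children. Completeness then rests on heredity: deleting a column from an admissible matrix preserves the antichain property, full column rank, and $A^\top\alpha=\one$ with the same $\alpha>0$. Hence every admissible matrix has an admissible parent, induction on $d$ reaches every isomorphism class, and canonicalization removes only duplicates.

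\paragraph{A caution on item (iv).} Be careful with your strictness device. $\cR_{<}(A,\kappa,h)$ is only half-open, since it keeps points with $(Af)_i=h_i$. Positivity of a nonnegative affine form at one relative-interior point of $\cR_{\leq}(A,\kappa,h)$ therefore rules out zeros only on the relative interior. It does not rule them out on the closed-side faces that still belong to $\cR_{<}(A,\kappa,h)$, so the boundary argument must track which faces are actually excluded.
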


\begin{proof}
For part~\textup{(i)}, candidate supports are encoded as bit masks and added by canonical augmentation.  Deleting a column from an antichain produces a parent antichain; when positive duality and full column rank are imposed, the same deletion preserves both properties.  Induction on the number of columns therefore reaches every isomorphism class, while exact canonicalization removes only duplicate children.

For part~\textup{(ii)}, implementable floors form the down-closure of the finite set $\cU(A,\kappa)$.  Processing these utility vectors one at a time and retaining the coordinatewise-minimal vectors not dominated by those already processed maintains exactly the minimal nonimplementable floors. Enumerating their upper cones inside the degree-bounded box therefore examines every nonimplementable floor.  A common-slack rational linear program decides whether its open region is nonempty.

For part~\textup{(iii)}, the subset filter compares
$\sum_{i\in S}h_i$ with the exact maximum obtained by summing the $\kappa$
largest column intersections with $S$.  A violation makes the floor region empty.  The cover-dual filter rejects only when an exactly recomputed nonnegative dual vector has value strictly larger than $\kappa$; weak duality again proves emptiness.

For part~\textup{(iv)}, floating-point routines may propose records but never accept them.  Acceptance reconstructs all committee cardinalities, one-deficit inequalities, price equations, saturation thresholds, singleton or adaptive margins, and coalition price margins in exact rational arithmetic. Proposition~\ref{prop:bv-polyhedral} promotes the finite endpoint, recession, and vertex checks to the complete real domains.  Independent signed-rational and arbitrary-precision implementations agree on all retained records.

Finally, exact sorting and comparison of canonical keys verifies that the unresolved regions are the disjoint union of accepted certificates and failure records.  The released failure files contain zero records, proving part~\textup{(v)} and completing the argument.
\end{proof}

\subsubsection{Eight-row classification}
\label{sec:apdx:bv-eight}

\bveightformaltrue
\renewcommand{\bveightclassificationtitle}{Eight-row rounding classification}
\bvclasseight*
\bveightformalfalse
\renewcommand{\bveightclassificationtitle}{Informal}

\begin{proof}[Computer-assisted proof of Lemma~\ref{lem:bv-ca-eight}]
The computation canonically enumerates every positive-dual full-column-rank eight-row antichain with $4\le d\le8$, every admissible residual budget, and every nonempty nonimplementable floor region. Regions not covered by alternative~\textup{(I)} are recorded as unresolved. For each such region, a search routine proposes a fixed or adaptive verification record, but the proposal is not trusted: exact checkers independently reconstruct the rounding constraints, the lower bounds on $\alpha_i$, the restricted price domains, and all singleton, adaptive-sum, and coalition margins.

The universal price checks use the exact domain reductions summarized in Section~\ref{sec:bv-verification-overview}: a unique dual for $d=8$, exact endpoint and recession checks for $d=7$, and exact vertex-pair and recession checks for $d=6$. No unresolved regions occur for $d=4,5$. Exact record-set comparison proves that the unresolved regions are the disjoint union of accepted certificates and failure records. Table~\ref{tab:bv-eight} reports $1{,}085{,}483$ unresolved regions, of which $1{,}085{,}459$ receive fixed certificates and $24$ receive adaptive certificates; the failure set is empty. Independent signed-rational and arbitrary-precision replays agree. Proposition~\ref{prop:bv-ca-verification} proves enumeration completeness and checker soundness, so alternatives~\textup{(I)}--\textup{(III)} are exhaustive.
\end{proof}

\begin{table}[ht]
\centering
\caption{Exact eight-row residual census, indexed by the number $d$ of fractional columns.}
\label{tab:bv-eight}
\scriptsize
\begin{tabular}{rrrrrrr}
\toprule
$d$ & positive-dual & matrix--budget & feasible & unresolved & fixed & adaptive\\
 & matrices & cases & regions & regions & certificates & certificates\\
\midrule
4 & 4,779 & 4,779 & 22 & 0 & 0 & 0\\
5 & 56,479 & 112,958 & 84 & 0 & 0 & 0\\
6 & 561,445 & 1,684,335 & 5,766 & 168 & 163 & 5\\
7 & 3,541,727 & 14,166,908 & 89,286 & 36,128 & 36,119 & 9\\
8 & 9,105,190 & 45,525,950 & 1,081,420 & 1,049,187 & 1,049,177 & 10\\
\midrule
Total & 13,269,620 & 61,494,930 & 1,176,578 & 1,085,483 & 1,085,459 & 24\\
\bottomrule
\end{tabular}
\end{table}

The table records the census totals used by the proof.  The exact complement-symmetry, record-coverage, and replay checks are incorporated into Proposition~\ref{prop:bv-ca-verification} and the computer-assisted proof above.

\subsection{Independent Nash-Kernel Classifications for Six and Seven Voters}
\label{sec:apdx:bv-predecessors}

\citet{becker2026core} independently establish the stronger result of core nonemptiness for at most seven voter types with arbitrary weights. The classifications in this section are not needed for our main existence theorem. We retain them because they provide an independent verification in the equal-weight setting and illustrate the progression from tight-row certificates to price separation and, ultimately, to the price--saturation certificates required for eight voters. Within this supplementary computation, the seven-row zero-row reduction uses the six-row classification; neither classification is an input to the standalone eight-row verifier.

\subsubsection{Six-row classification}
\label{sec:bv-six}
\label{sec:apdx:bv-six}

\begin{lemma}[Six-row classification]
\label{lem:bv-ca-six}
Let $A\in\{0,1\}^{6\times d}$ be a full-column-rank antichain, where $4\leq d\leq6$, and let $2\leq\kappa\leq d-2$. Let $h$ be nonimplementable and suppose $\cR_{<}(A,\kappa,h)\neq\varnothing$. Then, for every $f\in\cR_{<}(A,\kappa,h)$, there exists $i\in B(h)$ such that $(Af)_{i}=h_{i}$.
\end{lemma}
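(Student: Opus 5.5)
The proof of Lemma~\ref{lem:bv-ca-six} will be a computer-assisted exhaustive check inside the exact-verification framework of Section~\ref{sec:bv-verification-overview}. It can be simpler than the eight-row case because only the tight-row alternative~\textup{(I)} is needed.

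\emph{Enumeration.} First enumerate, up to voter and candidate relabeling, every full-column-rank binary antichain $A\in\{0,1\}^{6\times d}$ with $4\le d\le 6$. Use canonical augmentation over column bitmasks, exactly as in Proposition~\ref{prop:bv-ca-verification}(i); deleting a column preserves antichainness and full column rank, so induction on $d$ reaches every class. For each budget $2\le\kappa\le d-2$, compute $\cU(A,\kappa)$. Take the coordinatewise-minimal nonimplementable vectors via the down-closure procedure of Proposition~\ref{prop:bv-ca-verification}(ii). Then enumerate all nonimplementable floors $h$ in the degree-bounded box and discard those whose open region $\cR_{<}(A,\kappa,h)$ is empty. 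Emptiness is certified either by an exact common-slack rational LP or by the sound subset and cover-dual filters of part~(iii).

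\emph{Tight-row check.} For each surviving triple $(A,\kappa,h)$, compute $B(h)$ and verify that
\[
\cR_{<}(A,\kappa,h)\cap\{f:(Af)_i>h_i\ \forall i\in B(h)\}=\varnothing .
\]
This set is a polyhedron defined by strict and nonstrict linear inequalities. Its emptiness is decided exactly by maximizing a common slack $\varepsilon$ subject to $(Af)_i\ge h_i+\varepsilon$ for $i\in B(h)$, $Af\le h+\one-\varepsilon\one$, $\varepsilon\le f\le 1-\varepsilon$, and $\one^\top f=\kappa$. The set is empty exactly when the optimal value is nonpositive. No price domain or $\alpha$-analysis is needed, since the alternative is purely about $f$.

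\emph{Certification and main obstacle.} Accept a record only after re-solving or re-checking it in exact rational arithmetic, for example by giving a Farkas certificate for infeasibility of the positive-slack system. Then compare record sets to confirm that every region is covered with no failures. The main obstacle is that, a priori, a surviving region might not be fully covered by tight rows for $n=6$. That would be exactly the phenomenon that forces price--saturation certificates for $n=8$. The plan therefore stakes the claim on the computational census: if a region were found with a point $f$ in which every $i\in B(h)$ has $\delta_i>0$, the lemma would fail as stated. I expect the feasible-region counts for six rows to be small enough that this can be confirmed directly, and that part~(i) of Proposition~\ref{prop:bv-polyhedral} handles the boundary issues when checking the closed relaxation.
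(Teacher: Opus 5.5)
Your plan is essentially the paper's computer-assisted proof. The paper enumerates all full-column-rank six-row antichains without imposing positive duality, generates the nonimplementable floor regions and prunes only those certified empty by the exact subset and cover-dual filters, then runs an exact common-slack LP searching each retained region for a point with strict surplus on every row of $B(h)$; it reports $2{,}153$ antichains, $46$ surviving regions and no counterexample, which is the census your argument relies on. One small fix: the LP must also keep the nonstrict constraints $(Af)_i\ge h_i$ for $i\notin B(h)$, which you omitted, for "empty iff the optimal $\varepsilon\le 0$" to be exact.
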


\begin{proof}[Computer-assisted proof]
The verification enumerates all full-column-rank six-row antichains up to row and column relabeling, without imposing positive duality. For every matrix and residual budget it generates all nonimplementable floor regions and removes only regions certified empty by exact subset or cover-dual inequalities. An exact common-slack linear program then searches each retained region for a point with strict surplus on every row in $B(h)$, including the additional boundary branches required for coordinatewise-minimal floors.

The enumeration contains $2{,}153$ antichains. After exact filtering, $46$ floor regions enter the final classification, and none contains a counterexample. Independent signed-rational and arbitrary-precision implementations agree on the empty counterexample set. Proposition~\ref{prop:bv-ca-verification} proves the completeness of the enumeration and the soundness of every exclusion and region check. Hence every admissible six-row region has an index $i\in B(h)$ satisfying $(Af)_i=h_i$.
\end{proof}

\begin{table}[ht]
\centering
\caption{Exact six-row classification.}
\label{tab:bv-six}
\begin{tabular}{lr}
\toprule
Object & Count\\
\midrule
Full-rank antichains with $d=4$ & 279\\
Full-rank antichains with $d=5$ & 712\\
Full-rank antichains with $d=6$ & 1,162\\
Full-rank antichains, total & 2,153\\
Raw coordinatewise-minimal targets & 32,286\\
Targets surviving all subset inequalities & 6,053\\
Targets surviving the exact cover-dual filter & 46\\
Upper-cone floor vectors generated & 597\\
Floor vectors entering exact classification & 46\\
Regions containing a classified counterexample & 0\\
\bottomrule
\end{tabular}
\end{table}

\subsubsection{Seven-row classification}
\label{sec:bv-seven}
\label{sec:apdx:bv-seven}

\begin{lemma}[Seven-row rounding classification]
\label{lem:bv-ca-seven}
Let $A\in\{0,1\}^{7\times d}$ be a full-column-rank antichain with $\cD(A)\neq\varnothing$, where $4\leq d\leq7$ and $2\leq\kappa\leq d-2$. Let $h$ be nonimplementable and suppose $\cR_{<}(A,\kappa,h)\neq\varnothing$. For every $\alpha\in\cD(A)$ and $f\in\cR_{<}(A,\kappa,h)$, at least one of the following holds:
\begin{enumerate}[label=\textup{(\alph*)}]
\item there exists $i\in B(h)$ with $(Af)_{i}=h_{i}$;
\item there exists $i\in B(h)$ such that
\begin{align*}
  \alpha_{i}\delta_{i}
  &<\frac{\kappa}{7}(1-\alpha_{i}),\\
  \alpha_{i}\delta_{i}
  &<\alpha_{j}(1-\delta_{j})
    &&\text{for every }j\neq i.
\end{align*}
\end{enumerate}
\end{lemma}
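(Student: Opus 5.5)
This is a finite universal statement over seven-row residual data, so the plan is a computer-assisted proof built on the same exact-verification framework as Lemma~\ref{lem:bv-ca-eight}. The first step is to enumerate, up to row and column relabeling, every full-column-rank antichain $A\in\{0,1\}^{7\times d}$ with $4\le d\le7$ and $\cD(A)\ne\varnothing$. We grow column sets by canonical augmentation, exactly as in Proposition~\ref{prop:bv-ca-verification}(i). Positivity of the dual and full column rank are preserved under column deletion, so induction on $d$ reaches every class. For each matrix and each $\kappa$ with $2\le\kappa\le d-2$, we generate the minimal nonimplementable floors from $\cU(A,\kappa)$ and expand their upper cones inside the degree box. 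We then discard regions certified empty by the subset and cover-dual filters, and confirm nonemptiness of $\cR_{<}(A,\kappa,h)$ with a common-slack rational linear program.

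Next, for each surviving region we try alternative (a) uniformly. Using a common-slack LP, we search for $f\in\cR_{<}$ with $(Af)_i>h_i$ for every $i\in B(h)$. If no such point exists, every $f$ has a tight row in $B(h)$ and (a) holds region-wide. For the remaining regions, we must show that (b) holds on the set of $(\alpha,f)$ where no $i\in B(h)$ is tight. The two inequalities in (b) are bilinear in $(\alpha,\delta)$, where $\delta=Af-h$ is affine in $f$. Since the statement allows $i$ to depend on $(\alpha,f)$, a single index need not work everywhere. The plan is therefore to cover the domain by pieces, each assigned one index. Concretely, for each candidate $i$ we form the margins $\frac{\kappa}{7}(1-\alpha_i)-\alpha_i\delta_i$ and $\alpha_j(1-\delta_j)-\alpha_i\delta_i$, which are affine in $f$ for fixed $\alpha$ and affine in $\alpha$ for fixed $f$. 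We then apply Proposition~\ref{prop:bv-polyhedral}. Part (i) moves the infimum over the open region to the closed polytope $\cR_{\le}$; zero boundary minima are accepted only when a relative-interior argument rules them out, for example because they coincide with the excluded tight-row face. Full column rank makes $\cD(A)$ a point for $d=7$, a segment or ray for $d=6$, and two-dimensional for $d=5$. Parts (ii)--(iv) then reduce the checks to endpoints, vertex pairs and recession directions, and part (iii) controls the unbounded directions supported on zero rows. Where a zero row exists, we would delete it and invoke the six-row classification, Lemma~\ref{lem:bv-ca-six}, which gives a tight row directly.

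The main obstacle is the adaptive index choice: pointwise, some $i$ satisfies (b), but no single $i$ need work on the whole region. The first attempt is a fixed-index certificate per region, which is exactly checkable by vertex and endpoint enumeration. Where that fails, we would subdivide $\cR_{\le}\times\cD(A)$ by the hyperplanes $\delta_i=\delta_{i'}$ and $\alpha_i\delta_i=\alpha_{i'}\delta_{i'}$ (or by the ordering of the $\alpha_i\delta_i$), and certify on each cell the index minimizing $\alpha_i\delta_i$. For that index, the pairwise inequality reduces to checking against $\min_j\alpha_j(1-\delta_j)$, and the singleton bound must be checked separately. All accepted records are then recomputed in exact rational arithmetic and cross-checked with an independent implementation, with an empty failure file establishing exhaustiveness, as in Proposition~\ref{prop:bv-ca-verification}.
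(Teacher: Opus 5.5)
Your plan is correct and is essentially the paper's computer-assisted proof. It has the same components: exact enumeration of positive-dual full-column-rank antichains and nonempty nonimplementable floor regions, a common-slack test that settles alternative~(a) region-wide, reduction of zero-row cases to Lemma~\ref{lem:bv-ca-six}, and exact rational verification of alternative~(b) via Proposition~\ref{prop:bv-polyhedral}, with possible equality on the excluded tight-row boundary handled separately. The only divergence is your fallback for adaptive indices: the paper uses a single fixed index per region in the $294$ square cases and, in the four $7\times 6$ cases, partitions only the dual segment into intervals with a fixed index, which keeps each piece a product of polytopes so that Proposition~\ref{prop:bv-polyhedral}(iv) applies, whereas your cells cut out by $\alpha_i\delta_i=\alpha_{i'}\delta_{i'}$ are not polyhedral once $\alpha$ varies.
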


\begin{proof}[Computer-assisted proof]
The exact enumeration contains $61{,}846$ positive-dual antichains and reduces to $30{,}929$ floor regions after the complete floor-generation and infeasibility filters described in Section~\ref{sec:bv-verification-overview}. The equality test in alternative~\textup{(a)} resolves every region except $298$ primitive all-surplus regions. A zero row reduces immediately to the six-row classification, so all of these residual regions have no zero row and satisfy $B(h)=[7]$.

Of the $298$ regions, $294$ come from square matrices, where the positive dual is unique; the remaining four come from $7\times6$ matrices, where the closed dual domain is a line segment. Exact rational verification supplies an index $i\in B(h)$ satisfying the two inequalities in Lemma~\ref{lem:bv-ca-seven} throughout the complete floor polytope and dual domain. In the rectangular cases the dual segment is partitioned into intervals on which an eligible index is fixed, and common-slack programs handle possible equality on the excluded relative boundary. Independent exact classifiers agree, and no counterexample remains. Proposition~\ref{prop:bv-polyhedral} justifies the domain reductions, and Proposition~\ref{prop:bv-ca-verification} proves the completeness and soundness of the verification. This proves the lemma.
\end{proof}

\begin{table}[ht]
\centering
\caption{Exact seven-row classification.}
\label{tab:bv-seven}
\begin{tabular}{lr}
\toprule
Object & Count\\
\midrule
Positive-dual antichains, $d=4,5,6,7$ & 61,846\\
Matrix--budget cases & 208,472\\
Raw coordinatewise-minimal targets & 3,200,557\\
Targets surviving all subset inequalities & 888,719\\
Minimal targets surviving exact cover filtering & 16,854\\
Upper-cone floor vectors generated & 1,579,161\\
Floor vectors entering exact classification & 30,929\\
Exact rational region linear programs & 43,353\\
Primitive all-surplus regions & 298\\
Other unresolved regions & 0\\
\bottomrule
\end{tabular}
\end{table}

The numerical equality between the $298$ exceptional holes reported by \citet{becker2026core} and the $298$ primitive all-surplus regions in our independent classification is noteworthy. The two computations use different residual encodings, however, and we do not claim a bijection between the two collections.

\FloatBarrier

\subsection{Complexity of Floor-Preserving Rounding}
\label{app:floor-hardness}
\label{sec:proof:bv-floor-rounding-hardness}

An instance of \textsc{Floor-Rounding} consists of an approval election $\mathcal E=(N,C,(A_i)_{i\in N},K)$ and a rational fractional committee $x\in[0,1]^C$ with $\|x\|_1\leq K$.  The question is whether there is an integral committee $W\subseteq C$ such that $|W|\leq K$ and $|W\cap A_i|\geq\floor{u_i(x)}$ for every voter $i\in N$.

\begin{proposition}[NP-completeness of floor-preserving rounding]
\label{prop:bv-floor-rounding-hardness}
The problem \textsc{Floor-Rounding} is NP-complete.
\end{proposition}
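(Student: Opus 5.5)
Membership in NP is immediate: a certificate is the committee $W$. Checking $|W|\le K$ and computing each $\floor{u_i(x)}$ from the rational $x$, then comparing with $|W\cap A_i|$, takes polynomial time. The substance is NP-hardness. I will reduce from \textsc{Exact Cover by 3-Sets} (X3C), or equivalently from \textsc{Set Cover} with a budget. The target structure is that each floor demand $\floor{u_i(x)}=1$ forces voter $i$ to be covered, and the budget $K$ limits the number of chosen sets.

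\textbf{Construction.} Take an X3C instance with ground set $U=\{e_1,\dots,e_{3q}\}$ and a family $\mathcal S=\{S_1,\dots,S_p\}$ of 3-element subsets. Create one voter per element $e\in U$ and one candidate per set $S_j$, and let voter $e$ approve exactly the candidates $S_j$ with $e\in S_j$. Set $K:=q$. The fractional committee has to give every voter utility at least $1$ while satisfying $\|x\|_1\le q$.

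\textbf{Choosing $x$: the main obstacle.} Uniform weights generally fail here, because a utility-$1$ fractional cover of weight $q$ is a fractional exact cover. Such a cover need not exist in an arbitrary instance. I would sidestep this in one of two ways. The first is to restrict to X3C instances in which every element lies in exactly three sets, which is still NP-complete. Then $p=3q$, and $x_j=1/3$ for all $j$ gives each voter utility exactly $1$ with $\|x\|_1=q$. The second is to pad the instance with gadget voters and candidates that permit a fractional solution but force any integral one to spend its budget in a controlled way. I expect the regular-degree restriction to be cleanest, so I would try it first.

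\textbf{Correctness.} Every floor equals $1$. An integral $W$ with $|W|\le q$ that meets all $3q$ voters must consist of $q$ sets covering $3q$ elements, so these sets are pairwise disjoint and form an exact cover. Conversely, an exact cover gives such a $W$. The reduction is polynomial, which establishes NP-completeness. The statement does not require $x$ to lie in the fractional core or to be Nash-core, but if a stronger version is wanted, I would additionally check fractional-core membership of the uniform $x$ in the regular instance.
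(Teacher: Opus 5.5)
Your proof is correct and is essentially the paper's reduction: Restricted X3C (each element in exactly three triples), one candidate per triple, $K=q$ and $x\equiv 1/3$, so every element voter has floor $1$ and a floor-preserving committee of size at most $q$ is precisely an exact cover. The paper additionally gives each triple $T$ a private voter approving only $c_T$ (floor $0$, so the correspondence is unchanged), which the stated NP-completeness does not need; the resulting identity block guarantees full column rank, so the hard instances match the residual structure of Proposition~\ref{prop:bv-reduced-matrix}, and like your instance they are also fractional Nash-core pairs.
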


\begin{proof}
Membership in NP is immediate: an integral committee $W\subseteq C$ is a polynomial-size certificate, and its cardinality and all inequalities $|W\cap A_i|\geq\floor{u_i(x)}$ can be checked in polynomial time using exact integer and rational arithmetic.

For NP-hardness, we reduce from \emph{Restricted Exact Cover by 3-Sets} (\textsc{RXC3}), which is NP-complete \cite[Appendix~A]{Gonzalez1985}.  An instance consists of a universe $U$, with $|U|=3q$, and a collection $\mathcal T\subseteq\binom{U}{3}$ such that every element of $U$ belongs to exactly three triples in $\mathcal T$.  The question is whether there is a subcollection $\mathcal T'\subseteq\mathcal T$ in which every element of $U$ occurs exactly once.  By taking the disjoint union of two isomorphic copies if necessary, we may assume $q\geq2$.

Double-counting element--triple incidences gives
\begin{equation*}
  3|\mathcal T|=\sum_{T\in\mathcal T}|T|
  =\sum_{e\in U}|\{T\in\mathcal T:e\in T\}|=3|U|=9q.
\end{equation*}
Consequently, $|\mathcal T|=3q$.

For every triple $T\in\mathcal T$, introduce a candidate $c_T$ and a private voter $p_T$.  Thus
\begin{equation*}
  C:=\{c_T:T\in\mathcal T\},\qquad
  P:=\{p_T:T\in\mathcal T\},\qquad
  N:=U\mathbin{\dot\cup}P.
\end{equation*}
For each element voter $e\in U$, define $A_e:=\{c_T:T\in\mathcal T,\ e\in T\}$, and for every private voter $p_T$, define $A_{p_T}:=\{c_T\}$.  Finally, put $K:=q$ and $x_{c_T}:=1/3$ for every $T\in\mathcal T$.  The construction is polynomial, and
\begin{equation*}
  |N|=|U|+|P|=6q,\qquad |C|=3q,\qquad
  \|x\|_1=\frac{|\mathcal T|}{3}=q=K.
\end{equation*}

We first verify Assumption~\ref{assup:1}.  Every element voter approves three candidates, every private voter approves one candidate, and every candidate is approved by its three element voters and its private voter.  For any nonempty $S\subsetneq N$, let $\Gamma(S):=\bigcup_{i\in S}A_i$.  Counting incidences between voters in $S$ and candidates in $\Gamma(S)$ gives
\begin{equation*}
  |S|\leq\sum_{i\in S}|A_i|
  =\sum_{c\in\Gamma(S)}|N(c)\cap S|
  \leq4|\Gamma(S)|.
\end{equation*}
It follows that
\begin{equation*}
  |\Gamma(S)|\geq\frac{|S|}{4}>
  \frac{|S|}{6}=\frac{K|S|}{|N|},
\end{equation*}
as required.

Next consider the structural properties of the approval matrix.  Order its rows first by the element voters $U$ and then by the private voters $P$.  If $B$ is the $U$-by-$\mathcal T$ incidence matrix of the \textsc{RXC3} instance, then
\begin{equation*}
  A=\begin{pmatrix}B\\ I_{3q}\end{pmatrix}.
\end{equation*}
The identity block implies that $A$ has full column rank.  Every column has exactly four ones and contains a private row that occurs in no other column. The columns are therefore pairwise distinct, and their equal cardinalities imply that they form an antichain.

Define $\alpha\in\R^N_{>0}$ by $\alpha_e:=1/6$ for $e\in U$ and $\alpha_{p_T}:=1/2$ for $T\in\mathcal T$.  For every candidate $c_T$,
\begin{equation*}
  \alpha(N(c_T))=\sum_{e\in T}\alpha_e+\alpha_{p_T}
  =3\cdot\frac{1}{6}+\frac{1}{2}=1.
\end{equation*}
Hence $A^\top\alpha=\one$ and $\cD(A)\neq\varnothing$.

We now show that $(\one,x)$ is a fractional Nash-core pair.  Every element voter has utility $u_e(x)=1$, whereas every private voter has utility $u_{p_T}(x)=1/3$.  Let $\Phi(z):=\sum_{i\in N}\log u_i(z)$ be the unweighted Nash objective on $z\in\R^C_{\geq0}$ with $\|z\|_1\leq K$.  If some $z_{c_T}=0$, then the private voter $p_T$ has zero utility and $\Phi(z)=-\infty$.  Otherwise, $\Phi$ is differentiable at $z$.  At $x$, for every $c_T$,
\begin{equation*}
  \frac{\partial\Phi}{\partial z_{c_T}}(x)
  =\sum_{e\in T}\frac{1}{u_e(x)}
  +\frac{1}{u_{p_T}(x)}=3+3=6=\frac{|N|}{K}.
\end{equation*}
Concavity therefore gives, for every feasible $z$ with finite objective,
\begin{equation*}
  \Phi(z)\leq\Phi(x)+\nabla\Phi(x)^\top(z-x)
  =\Phi(x)+6\bigl(\|z\|_1-\|x\|_1\bigr)\leq\Phi(x).
\end{equation*}
Thus $x$ maximizes the Nash objective.  With $w_c=1$ for every candidate, the weight--selection coupling is immediate, so $(\one,x)$ is a fractional Nash-core pair.

For completeness, we verify directly that $x$ lies in the fractional core. The vector $\alpha$ defined above satisfies $\alpha_i u_i(x)=1/6=K/|N|$ for every $i\in N$.  If a nonempty coalition $S$ and a fractional committee $y\in[0,1]^C$ blocked $x$, then
\begin{equation*}
  \|y\|_1
  =\sum_{c\in C}y_c\alpha(N(c))
  =\sum_{i\in N}\alpha_i u_i(y)
  \geq\sum_{i\in S}\alpha_i u_i(y)
  >\sum_{i\in S}\alpha_i u_i(x)
  =\frac{K|S|}{|N|},
\end{equation*}
contradicting the blocking budget condition.  Hence $x$ lies in the fractional core.

All candidates are fractional, and the residual budget is $\kappa=\|x\|_1=q$.  The residual utility floor is one for every element voter and zero for every private voter.  Since $q\geq2$ and $|C|=3q$, we also have $2\leq\kappa\leq|C|-2$.

It remains to establish the correspondence with \textsc{RXC3}.  If $\mathcal T'\subseteq\mathcal T$ is an exact cover of $U$, then $|\mathcal T'|=q$, and $W:=\{c_T:T\in\mathcal T'\}$ is a floor-preserving committee of size $K$.  Conversely, suppose that $W\subseteq C$ is a feasible floor-preserving committee.  Every element voter must approve at least one member of $W$, and therefore
\begin{equation*}
  3q\leq\sum_{e\in U}|W\cap A_e|
  =\sum_{c_T\in W}|T|=3|W|\leq3q.
\end{equation*}
Every inequality is an equality.  Thus $|W|=q$, and every element voter approves exactly one member of $W$.  The triples corresponding to the candidates in $W$ consequently form an exact cover of $U$.

The construction is therefore a polynomial-time reduction from \textsc{RXC3} to \textsc{Floor-Rounding}.  This proves NP-hardness, and membership in NP completes the proof.
\end{proof}

Proposition~\ref{prop:bv-floor-rounding-hardness} concerns the particular sufficient route used in Lemma~\ref{lem:bv-floor-rounding}: given a specified fractional core $x$, decide whether all utility floors of $x$ can be preserved by an integral committee of budget at most $K$.  It does not establish NP-hardness of discrete-core nonemptiness, nor does it imply that a no-instance of \textsc{Floor-Rounding} has an empty discrete core.  Moreover, although the reduction produces a full-column-rank positive-dual antichain satisfying the numerical residual-budget bounds of Proposition~\ref{prop:bv-reduced-matrix}, it does not assert that the supplied fractional Nash-core pair minimizes the number of fractional coordinates among all fractional Nash-core pairs.

\section{Further Experimental Details}  \label{sec:apdx:exp}

The code for the heuristic experiments is available at \url{https://github.com/nashcore-code/code/tree/main/heuristics}.

We also provide additional details about the algorithms below. Algorithm~\ref{alg:one} is for computing the fractional Nash core solution, and Algorithm~\ref{alg:two} is for computing the weak Nash core solution. Both are iterative methods that terminate once a (approximation) solution is achieved. There are parameters such as $\epsilon$ and $\eta$ in the algorithm. They are sometimes adaptive in our code.

An important detail is that we introduce a capacity $h_c$ for each candidate, indicating that there are $h_c$ identical copies of that candidate. As a result, the Weight-selection Coupling condition changes: instead of requiring $w_c < 1$ only when $x_c = 1$, it now requires $w_c < 1$ only when $x_c = h_c$. This adjustment better represents real voting data and helps reduce the complexity of the problem.

\begin{algorithm}[!htbp]
  \caption{Fractional Nash Core Solution Computation}
  \label{alg:one}
  \begin{algorithmic}[1]
    \Statex \textbf{Input:} Integers $n, m, k$; approval sets $A_v$ for each voter $v \in N$; capacities $h_c$ for each candidate $c \in C$
    \Statex \textbf{Output:} A Nash Core solution $(w, x)$

    \State Initialize $w'_c \gets 1$ for each $c \in C$
    \Repeat
      \State $w \gets w + \eta \cdot (w' - w)$
      \State $x \gets \arg\max \sum_{v \in N} \log\left( \sum_{c \in A_v} w_c z_c \right)$
      \Statex \hspace{1.5em} subject to $\|z\|_1 \leq k$, $0 \leq z \leq h$
      \For{each $c \in C$}
        \State Find $w'_c$ such that:
        \[
          \sum_{v \in N} \frac{w'_c h_c}{w_c h_c + \sum_{c' \in A_v \setminus \{c\}} w_{c'} x_{c'}} = \frac{n}{k} x_c
        \]
        \State $w'_c \gets \min\{w'_c, 1\}$
      \EndFor
    \Until{$\|w - w'\|_2 < \epsilon$}
  \end{algorithmic}
\end{algorithm}

\begin{algorithm}[!htbp]
  \caption{Weak Nash Core Solution Computation}
  \label{alg:two}
  \begin{algorithmic}[1]
    \Statex \textbf{Input:} Integers $n, m, k$; approval sets $A_v$ for each voter $v \in N$; capacities $h_c$ for each candidate $c \in C$
    \Statex \textbf{Output:} A weak Nash core solution $(w, x)$

    \State Initialize $w_c \gets 1$ for each $c \in C$
    \State Choose an arbitrary integer vector $x$ such that $0 \leq x \leq h$ and $\|x\|_1 = k$

    \While{there exists $c \in C$ such that $\sum_{v : c \in A_v} \frac{w_c}{\sum_{c' \in A_v} w_{c'} x_{c'} + 1} \geq \frac{n}{k}$}
      \State Find such a $c$
      \If{$x_c < h_c$}
        \State $x_c \gets x_c + 1$
        \Repeat{$\|x\|_1 = k$}
          \State Find $c'$ (with $x_{c'} \geq 1$) that yields the largest value of $sc_{PAV}(w, x)$ after decreasing $x_{c'}$ by $1$
          \If{$w_{c'}$ is close to $1$}
            \State $x_{c'} \gets x_{c'} - 1$
          \Else
            \State $w_{c'} \gets w_{c'} + \eta$
          \EndIf
        \Until{$\|x\|_1 = k$}
      \Else
        \State $w_c \gets w_c - \eta$
      \EndIf
    \EndWhile
  \end{algorithmic}
\end{algorithm}

\end{appendices}

\bibliography{refs}

@String{Computing = "Computing" }

@String{Computer = "{IEEE} Computer" }

@String{Springer = "Springer-Verlag" }

@inproceedings{fain2016core,
  title={The core of the participatory budgeting problem},
  author={Fain, Brandon and Goel, Ashish and Munagala, Kamesh},
  booktitle={Web and Internet Economics: 12th International Conference, WINE 2016, Montreal, Canada, December 11-14, 2016, Proceedings 12},
  pages={384--399},
  year={2016},
  organization={Springer}
}

@article{cheng2020group,
  title={Group fairness in committee selection},
  author={Cheng, Yu and Jiang, Zhihao and Munagala, Kamesh and Wang, Kangning},
  journal={ACM Transactions on Economics and Computation (TEAC)},
  volume={8},
  number={4},
  pages={1--18},
  year={2020},
  publisher={ACM New York, NY, USA}
}

@article{foley1970lindahl,
  title={Lindahl's Solution and the Core of an Economy with Public Goods},
  author={Foley, Duncan K},
  journal={Econometrica: Journal of the Econometric Society},
  pages={66--72},
  year={1970},
  publisher={JSTOR}
}

@article{scarf1967core,
  title={The core of an N person game},
  author={Scarf, Herbert E},
  journal={Econometrica: Journal of the Econometric Society},
  pages={50--69},
  year={1967},
  publisher={JSTOR}
}

@article{muench1972core,
  title={The core and the Lindahl equilibrium of an economy with a public good: An example},
  author={Muench, Thomas J},
  journal={Journal of Economic Theory},
  volume={4},
  number={2},
  pages={241--255},
  year={1972},
  publisher={Elsevier}
}

@incollection{lindahl1958just,
  title={Just taxation—a positive solution},
  author={Lindahl, Erik},
  booktitle={Classics in the theory of public finance},
  pages={168--176},
  year={1958},
  publisher={Springer},
  address={New York}
}

@incollection{samuelson2024pure,
  title={The pure theory of public expenditure},
  author={Samuelson, Paul A},
  booktitle={Public Goods and Market Failures},
  pages={29--33},
  year={2024},
  publisher={Routledge},
  address={London}
}

@article{droop1881methods,
  title={On methods of electing representatives},
  author={Droop, Henry Richmond},
  journal={Journal of the Statistical Society of London},
  volume={44},
  number={2},
  pages={141--202},
  year={1881},
  publisher={JSTOR}
}

@article{aziz2017justified,
  title={Justified representation in approval-based committee voting},
  author={Aziz, Haris and Brill, Markus and Conitzer, Vincent and Elkind, Edith and Freeman, Rupert and Walsh, Toby},
  journal={Social Choice and Welfare},
  volume={48},
  number={2},
  pages={461--485},
  year={2017},
  publisher={Springer}
}

@inproceedings{aziz2018complexity,
  title={On the complexity of extended and proportional justified representation},
  author={Aziz, Haris and Elkind, Edith and Huang, Shenwei and Lackner, Martin and S{\'a}nchez-Fern{\'a}ndez, Luis and Skowron, Piotr},
  booktitle={Proceedings of the AAAI Conference on Artificial Intelligence},
  volume={32},
  number={1},
  year={2018}
}

@article{brams2007minimax,
  title={A minimax procedure for electing committees},
  author={Brams, Steven J and Kilgour, D Marc and Sanver, M Remzi},
  journal={Public Choice},
  volume={132},
  pages={401--420},
  year={2007},
  publisher={Springer}
}

@article{chamberlin1983representative,
  title={Representative deliberations and representative decisions: Proportional representation and the Borda rule},
  author={Chamberlin, John R and Courant, Paul N},
  journal={American Political Science Review},
  volume={77},
  number={3},
  pages={718--733},
  year={1983},
  publisher={Cambridge University Press}
}

@inproceedings{fain2018fair,
  title={Fair allocation of indivisible public goods},
  author={Fain, Brandon and Munagala, Kamesh and Shah, Nisarg},
  booktitle={Proceedings of the 2018 ACM Conference on Economics and Computation},
  pages={575--592},
  year={2018}
}

@article{monroe1995fully,
  title={Fully proportional representation},
  author={Monroe, Burt L},
  journal={American Political Science Review},
  volume={89},
  number={4},
  pages={925--940},
  year={1995},
  publisher={Cambridge University Press}
}

@inproceedings{sanchez2017proportional,
  title={Proportional justified representation},
  author={S{\'a}nchez-Fern{\'a}ndez, Luis and Elkind, Edith and Lackner, Martin and Fern{\'a}ndez, Norberto and Fisteus, Jes{\'u}s and Val, Pablo Basanta and Skowron, Piotr},
  booktitle={Proceedings of the AAAI Conference on Artificial Intelligence},
  volume={31},
  number={1},
  year={2017}
}

@inproceedings{peters2020proportionality,
  title={Proportionality and the limits of welfarism},
  author={Peters, Dominik and Skowron, Piotr},
  booktitle={Proceedings of the 21st ACM Conference on Economics and Computation},
  pages={793--794},
  year={2020}
}

@article{peters2025core,
  title={The Core of Approval-Based Committee Elections with Few seats},
  author={Peters, Dominik},
  journal={arXiv preprint arXiv:2501.18304},
  year={2025}
}

@book{lackner2023multi,
  title={Multi-winner voting with approval preferences},
  author={Lackner, Martin and Skowron, Piotr},
  year={2023},
  publisher={Springer Nature},
  address={Cham}
}

@inproceedings{peters2021market,
  title={Market-based explanations of collective decisions},
  author={Peters, Dominik and Pierczy{\'n}ski, Grzegorz and Shah, Nisarg and Skowron, Piotr},
  booktitle={Proceedings of the AAAI Conference on Artificial Intelligence},
  volume={35},
  number={6},
  pages={5656--5663},
  year={2021}
}

@article{thiele1895om,
  title={Om flerfoldsvalg},
  author={Thiele, Thorvald N},
  journal={Oversigt over det Kongelige Danske Videnskabernes Selskabs Forhandlinger},
  volume={1895},
  pages={415--441},
  year={1895}
}

@incollection{kilgour2010approval,
  title={Approval balloting for multi-winner elections},
  author={Kilgour, D Marc},
  booktitle={Handbook on approval voting},
  pages={105--124},
  year={2010},
  publisher={Springer},
  address={Berlin}
}

@article{brill2024approval,
  title={Approval-based apportionment},
  author={Brill, Markus and G{\"o}lz, Paul and Peters, Dominik and Schmidt-Kraepelin, Ulrike and Wilker, Kai},
  journal={Mathematical Programming},
  volume={203},
  number={1},
  pages={77--105},
  year={2024},
  publisher={Springer}
}

@book{abramowitz1948handbook,
  title={Handbook of mathematical functions with formulas, graphs, and mathematical tables},
  author={Abramowitz, Milton and Stegun, Irene A},
  volume={55},
  year={1948},
  publisher={US Government printing office},
  address={Washington, D.C.}
}

@article{rosen1965existence,
  title={Existence and uniqueness of equilibrium points for concave n-person games},
  author={Rosen, J Ben},
  journal={Econometrica: Journal of the Econometric Society},
  pages={520--534},
  year={1965},
  publisher={JSTOR}
}

@article{faliszewski2017multiwinner,
  title={Multiwinner voting: A new challenge for social choice theory},
  author={Faliszewski, Piotr and Skowron, Piotr and Slinko, Arkadii and Talmon, Nimrod},
  journal={Trends in computational social choice},
  volume={74},
  number={2017},
  pages={27--47},
  year={2017},
  publisher={AI Access Foundation El Segundo}
}

@book{bartle2000introduction,
  title={Introduction to real analysis},
  author={Bartle, Robert G and Sherbert, Donald R},
  volume={2},
  year={2000},
  publisher={Wiley},
  address={New York}
}

@article{peters2021proportional,
  title={Proportional participatory budgeting with additive utilities},
  author={Peters, Dominik and Pierczy{\'n}ski, Grzegorz and Skowron, Piotr},
  journal={Advances in Neural Information Processing Systems},
  volume={34},
  pages={12726--12737},
  year={2021}
}

@article{arrow1969organization,
  title={The organization of economic activity: issues pertinent to the choice of market versus nonmarket allocation},
  author={Arrow, Kenneth J},
  journal={The analysis and evaluation of public expenditure: the PPB system},
  volume={1},
  pages={59--73},
  year={1969}
}

@article{hurwicz1979outcome,
  title={Outcome functions yielding Walrasian and Lindahl allocations at Nash equilibrium points},
  author={Hurwicz, Leonid},
  journal={The Review of Economic Studies},
  volume={46},
  number={2},
  pages={217--225},
  year={1979},
  publisher={Wiley-Blackwell}
}

@inproceedings{jiang2020approximately,
  title={Approximately stable committee selection},
  author={Jiang, Zhihao and Munagala, Kamesh and Wang, Kangning},
  booktitle={Proceedings of the 52nd Annual ACM SIGACT Symposium on Theory of Computing},
  pages={463--472},
  year={2020}
}

@inproceedings{munagala2022approximate,
  title={Approximate core for committee selection via multilinear extension and market clearing},
  author={Munagala, Kamesh and Shen, Yiheng and Wang, Kangning and Wang, Zhiyi},
  booktitle={Proceedings of the 2022 Annual ACM-SIAM Symposium on Discrete Algorithms (SODA)},
  pages={2229--2252},
  year={2022},
  organization={SIAM}
}

@article{kroer2025computing,
  title={Computing Lindahl Equilibrium for Public Goods with and without Funding Caps},
  author={Kroer, Christian and Peters, Dominik},
  journal={arXiv preprint arXiv:2503.16414},
  year={2025}
}

@inproceedings{pierczynski2022core,
  title={Core-stable committees under restricted domains},
  author={Pierczy{\'n}ski, Grzegorz and Skowron, Piotr},
  booktitle={International Conference on Web and Internet Economics},
  pages={311--329},
  year={2022},
  organization={Springer}
}

@article{faliszewski2023participatory,
  title={Participatory budgeting: Data, tools, and analysis},
  author={Faliszewski, Piotr and Flis, Jaros{\l}aw and Peters, Dominik and Pierczy{\'n}ski, Grzegorz and Skowron, Piotr and Stolicki, Dariusz and Szufa, Stanis{\l}aw and Talmon, Nimrod},
  journal={arXiv preprint arXiv:2305.11035},
  year={2023}
}

@misc{becker2026core,
  author        = {Patrick Becker and Matthias Greger and Dominik Peters},
  title         = {Core Existence in Approval-Based Committee Elections with up to Seven Voter Types},
  year          = {2026},
  eprint        = {2605.06194},
  archivePrefix = {arXiv},
  primaryClass  = {cs.GT},
  note          = {Version 2, revised 17 August 2026}
}

@article{Gonzalez1985,
  author  = {Gonzalez, Teofilo F.},
  title   = {Clustering to Minimize the Maximum Intercluster Distance},
  journal = {Theoretical Computer Science},
  volume  = {38},
  pages   = {293--306},
  year    = {1985},
  doi     = {10.1016/0304-3975(85)90224-5}
}

\end{document}